\newcommand\arxiv[2]{#1} 
\newtheorem{thm}{Theorem}
\newtheorem{lem}{Lemma}
\newtheorem{prop}{Proposition}
\newtheorem{cor}{Corollary}
\newtheorem{defn}{Definition}
\newcommand{\genComment}[2]{\ifnum\comments=1{\textcolor{#1}{\textsf{\footnotesize #2}}}\fi}
\newcommand\R{\mathbb{R}}
\newcommand\Var{\mathrm{Var}}
\newcommand\Cov{\mathrm{Cov}}
\newcommand\lik{\mathcal{L}}
\newcommand\prob{\mathbb{P}}
\newcommand\E{\mathbb{E}}
\newcommand\loglik{\ell}
\newcommand\process{\texttt{process}}
\newcommand\param{\,;}
\newcommand\giventh\param
\newcommand\floor[1]{\lfloor #1 \rfloor}
\def\eqref#1{equation~\ref{#1}}
\def\gN{{\mathcal{N}}}
\def\gX{{\mathcal{X}}}
\def\gY{{\mathcal{Y}}}
\begin{document}

\title{Accelerated Inference for Partially Observed Markov Processes using Automatic Differentiation}


\arxiv{ \author{K. Tan$^{1}$, G. Hooker$^{1}$ and E. L. Ionides$^{2}$
  \vspace{2mm}\\
  \small{$^{1}$Department of Statistics, University of Pennsylvania}\\
  \small{$^{2}$Department of Statistics, University of Michigan}}}{}

\arxiv{}{
\author[a]{Kevin Tan}
\author[a]{Giles Hooker}
\author[b,1]{Edward L. Ionides}
\affil[a]{University of Pennsylvania}
\affil[b]{University of Michigan}
\leadauthor{Tan}
\significancestatement{Many scientific models involve highly nonlinear stochastic dynamical systems which can be observed only via noisy and incomplete measurements. Prior to this work, iterated filtering algorithms were the only class of algorithms for maximum likelihood estimation that did not require access to the system's transition probabilities, instead needing only a simulator of the system dynamics. We leverage recent advances in automatic differentiation to propose a hybrid algorithm that requires only a differentiable simulator for maximum likelihood estimation. Our new method outperforms previous approaches on a challenging problem in epidemiology.}
\authorcontributions{K.T. and E.L.I. designed research; K.T. analyzed data; K.T., G.H. and E.L.I. performed research, K.T., G.H., and E.L.I. wrote the manuscript.}
\authordeclaration{The authors declare no competing interests.}
\correspondingauthor{\textsuperscript{1}To whom correspondence should be addressed. E-mail: kevtan@wharton.upenn.edu}
\keywords{Sequential Monte Carlo $|$ Automatic Differentiation $|$ Particle Filter $|$ Markov Process $|$ Maximum Likelihood}
}

\arxiv{\date{Draft compiled on \today}}{
\dates{This manuscript was compiled on \today}
\doi{\url{www.pnas.org/cgi/doi/10.1073/pnas.XXXXXXXXXX}}
}

\maketitle

\begin{abstract}
  Automatic differentiation (AD) has driven recent advances in machine learning, including deep neural networks and Hamiltonian Markov Chain Monte Carlo methods. Partially observed nonlinear stochastic dynamical systems have proved resistant to AD techniques because widely used particle filter algorithms yield an estimated likelihood function that is discontinuous as a function of the model parameters.  We show how to embed two existing AD particle filter methods in a theoretical framework that provides an extension to a new class of algorithms.  This new class permits a bias/variance tradeoff and hence a mean squared error substantially lower than the existing algorithms. We develop likelihood maximization algorithms suited to the Monte Carlo properties of the AD gradient estimate. Our algorithms require only a differentiable simulator for the latent dynamic system; by contrast, most previous approaches to AD likelihood maximization for particle filters require access to the system's transition probabilities. Numerical results indicate that a hybrid algorithm that uses AD to refine a coarse solution from an iterated filtering algorithm show substantial improvement on current state-of-the-art methods for a challenging scientific benchmark problem.
\end{abstract}

\arxiv{}{
\thispagestyle{firststyle}
\ifthenelse{\boolean{shortarticle}}{\ifthenelse{\boolean{singlecolumn}}{\abscontentformatted}{\abscontent}}{}
\firstpage{4}
}

\arxiv{Many}{\dropcap{M}any}
scientific models involve highly nonlinear stochastic dynamic systems possessing significant random variation in both the process dynamics and the measurements.
Commonly, the latent system is modeled as a Markov process, giving rise to a partially observed Markov process (POMP) model, also known as a hidden Markov models or a state space model.
POMP models arise in fields as diverse as automated control \cite{singh22}, epidemiology \cite{he10, stocks17}, ecology \cite{knape12} and finance \cite{kim08, breto14}.
Despite their ubiquity, estimation and inference within this broad class of models remains a challenging problem.
This article concerns the use automatic differentiation (AD) to construct improved algorithms for inference on complex POMP models.

The particle filter, also known as sequential Monte Carlo, serves as the foundation for various inference algorithms for POMP models.
It provides an unbiased estimate of the likelihood function  \cite{delMoral04}, enabling Bayesian inference \cite{andrieu10,chopin13} and likelihood-based inference \cite{ionides06-pnas,ionides15}.
Likelihood-based inference is statistically efficient \cite{pawitan01} and is robust to a moderate amount of Monte Carlo error \cite{ionides17,ning21}.
An attractive feature of basic particle filter algorithms, also known as boostrap filters, is that they do not require evaluation of the transition density of the latent Markov process, enabling an arbitrary model simulator to be plugged into the algorithm.
This plug-and-play property is useful for scientific applications \cite{he10}.
Some plug-and-play methods depend on the construction of low-dimensional summary statistics \cite{wood10,toni09}, sacrificing statistical efficiency for computational convenience.
Plug-and-play methods are often called likelihood-free \cite{owen15}, and it may be counter-intuitive that likelihood-based inference is possible using likelihood-free methods. 
However, iterated filtering algorithms have shown that this is practical in various scientific investigations, such as \cite{king08,blake14,pons-salort18,subramanian21,fox22,drake23}.
Nevertheless, the Monte Carlo variability arising in iterated filtering applications becomes increasingly problematic as the size of the data and the complexity of the model increases.
Algorithmic advances are needed to make plug-and-play likelihood maximization for POMP models more numerically efficient.
We develop plug-and-play AD methods which provide a large computational advantage over the methods used in these previous applications.

Recent advances in AD for particle filters \cite{naesseth18, jonschkowski18, corenflos21, scibior21, singh22} have drawn attention to AD as a tool for inference in POMP models.
However, existing approaches are either asymptotically biased \cite{naesseth18, jonschkowski18}, have high variance \cite{poyiadjis11, scibior21}, are computationally expensive \cite{corenflos21, chen24}, or require access to transition densities \cite{poyiadjis11, scibior21, singh22, chen24}.

Scibior and Wood \cite{scibior21} showed that the estimators derived by Poyiadjis et al. \cite{poyiadjis11} can be attained with standard AD software using an algorithmic procedure, called a {\it stop-gradient}, which allows selected expressions to be evaluated but not differentiated.
Unfortunately, \cite{scibior21} used their estimator within an algorithm which does not have the plug-and-play property and which has high Monte Carlo variability, while the stop-gradient procedure is introduced without mathematical motivation.
We use \cite{scibior21} and \cite{poyiadjis11} as our starting point, while developing a new approach which remedies the weaknesses of these papers.

We start by presenting a new construction of the gradient estimator of \cite{scibior21} and \cite{poyiadjis11}.
Specifically, we show how this estimator can be derived as the direct derivative of a suitably weighted particle filter, which (for reasons which will be discussed later) we call a {\it Measurement Off-Parameter} (MOP) particle filter. 
In short, \textit{instead of directly differentiating through a basic particle filter, MOP uses AD to differentiate through a smooth particle filter constructed using measurement density ratios and a differentiable simulator}. 
Our MOP algorithm differs from previous smooth particle filters \cite{svensson18,malik11} by possessing the plug-and-play property.
Critically, we also avoid the high variance of \cite{svensson18} by generalizing the MOP particle filter to add a discount factor, $\alpha \in [0,1]$.
This discount factor enables MOP-$\alpha$ to interpolate between the  biased gradient estimator of \cite{naesseth18}, when $\alpha=0$, and the high-variance gradient estimate from \cite{poyiadjis11, scibior21}, when $\alpha=1$ (Theorem \ref{thm:mop-functional-forms}).
The bias-variance tradeoff induced by $\alpha$ suggests, both in theory (Theorem \ref{thm:mop-biasvar}) and in practice (Figure \ref{fig:biasvar}), the use of $\alpha$ values strictly between 0 and 1.

MOP-$\alpha$ avoids the issue of asymptotic bias \cite{corenflos21} arising from having to drop terms arising from resampling when differentiating through the standard particle filter as in \cite{naesseth18} in a novel way.
This particle filter is smooth by construction, allowing us to bypass the seemingly incompatible paradox of differentiating through discrete Monte Carlo sampling by instead using AD to differentiate through a series of measurement density ratios and a differentiable simulator.
Yet, MOP-$\alpha$ can nevertheless provide strongly consistent particle and log-likelihood estimates (Theorem \ref{thm:mop-targeting}), and when $\alpha=1$, consistent score estimates (Theorem \ref{thm:mop-grad-consistency}).

We derive a linear convergence rate for stochastic gradient descent (SGD) using MOP-$\alpha$ in the  presence of strong convexity (Theorem \ref{thm:mop-convergence}).
Critically, the estimator converges quickly within a neighborhood of the maximum but struggles to reach this neighborhood in the presence of local minima and saddle points.
That behavior is complementary to that of iterated filtering algorithms \cite{ionides06-pnas,ionides15} which provides a relatively fast and stable way to identify this neighborhood.
We therefore build a simple but effective hybrid algorithm called {\it Iterated Filtering with Automatic Differentiation} (IFAD) that warm-starts first-order or second-order gradient methods with a preliminary solution obtained from a few rounds of iterated filtering.
Promising numerical results indicate that IFAD beats IF2 (and by the numerical results of \cite{ionides15}, also IF1 \cite{ionides06-pnas,ionides11} and the Liu-West filter \cite{liuwest01}) on a challenging problem in epidemiology, the Dhaka cholera model of \cite{king08}.

These improvements also extend to Bayesian inference, as we show in Section \ref{sec:bayes} that we can use the MOP-$\alpha$ gradient estimates within a No-U-Turn Sampler (NUTS) \cite{homan14} in conjunction with a nonparametric empirical Bayes-style prior estimated with IF2 to reduce the burn-in period of particle MCMC \cite{andrieu10} on the Dhaka cholera model of \cite{king08} from the $700,000$ iterations in \cite{fasiolo16} to just $500$. To the best of our knowledge, attaining rapid mixing on this challenging problem has not previously been achievable. 

\section{Problem Setup}

Consider an unobserved Markov process $\{X(t),t  \geq t_0\}$, with discrete-time observations $Y_1,...,Y_N$ realized at values $y_1^*,...,y_N^*$ at times $t_1,..., t_N$.
The process is parameterized by an unknown parameter $\theta \in \Theta \subseteq \R^p$, where the state $X(t)$ take values in the state space $\gX \subseteq \R^d$, the observations $Y_n$ take values in $\gY,$ and we write $X_n := X(t_n)$. 


We suppose that the discete-time latent process model has a density $f_{X_n|X_{n-1}}\left(x_{n} \mid x_{n-1}; \theta\right)$.
The existence of this density is necessary to define the likelihood function, but plug-and-play algorithms do not have access to evaluation of this density.
Instead, they have access to a corresponding simulator, which we write as $\process_n\left(x_n\mid x_{n-1}; \theta\right)$.
We set $f_{Y_n|X_n}\left(y_n \mid x_n; \theta\right)$ to be the measurement density, which we suppose can be evaluated. We write $y_n^*$ for the actual values of the observations that were observed.
We call $f_{X_{1:n}|Y_{1:n}}(x_{1:n}|y_{1:n}^*; \theta)$ the posterior distribution of states, and $f_{X_{n}|Y_{1:n}}(x_n|y_{1:n}^*; \theta)$ the filtering distribution at time $t_n$.
Superscripts $x_{n,j}^A$ of particles denote the ancestral trajectory of particle $j$ at time $n$, $a(\cdot)$ is the ancestor function that maps a particle to its parent, $j \mapsto a(j)$, and $k_j$
denotes the resample indices drawn for each particle $j$. 

The above densities are defined on a probability space $(\Omega, \Sigma, \prob)$ which is also assumed to enable construction of independent replicates and all other random variables defined in our algorithms.
For the algorithmic interpretation of our theory, we identify the random number seed with an element of the sample space $\omega \in \Omega$. The random seed determines the sequence of pseudo-random numbers as generated by a computer, just as the outcome $\omega\in\Omega$ generates the sequence of random variables.


\section{Off-Parameter Particle Filters}

Our approach is as follows. \textit{Instead of directly differentiating through a basic particle filter, we instead use AD to differentiate through a series of measurement density ratios and a differentiable simulator}. 
This is done through the construction of a novel particle filtering algorithm called {\it Measurement Off-Parameter with discount factor} $\alpha$ (MOP-$\alpha$), defined by the pseudocode in Algorithm~\ref{alg:mop}.
Note that the plug-and-play property forbids access to transition densities but permits access to the density of the measurements conditional on the value of the latent process.
The simulator can be differentiated using AD as long as the underlying computer code for the simulator is differentiable, and so MOP-$\alpha$ is applicable to a broad class of POMP models.

The MOP-$\alpha$ algorithm resamples the particles according to an arbitrary resampling rule that depends on both the target parameter value and a baseline parameter value, and this explains the name {\it measurement off-parameter}.
This is intended as an analogy to {\it off-policy} learning in reinforcement learning.
Specifically, MOP-$\alpha$ evaluates the likelihood at some $\theta \in \Theta$, but instead resamples the particles according to the indices generated by a vanilla particle filter run at a {\it baseline parameter}, $\phi \in \Theta$.
This ensures that the resampling indices are invariant to $\theta$ when $\omega$ and $\phi$ are fixed, bypassing the issue of Monte Carlo resampling.
We show in Theorem~\ref{thm:mop-targeting} that MOP-$\alpha$ exactly targets the filtering distribution when $\theta=\phi$ or $\alpha=1$.

We suppose that the simulator is a differentiable function of $\theta$ for every fixed $\omega$, a condition that requires the latent process to be a continuous random variable. 
This condition is also known as the {\it reparameterization trick} \cite{corenflos21}.
Supposing also that the measurement model is differentiable in $\theta$, direct differentiation of MOP-$\alpha$ is available via AD.
Theorem~\ref{thm:mop-functional-forms} shows that MOP-$\alpha$ is constructed so this direct derivative obtains the score estimator of  \cite{poyiadjis11, scibior21} when $\alpha=1$ and that of \cite{naesseth18} when $\alpha=0$.
Setting $\alpha<1$ adds bias but reduces variance, raising opportunities for a favorable tradeoff as illustrated in theory (Theorem \ref{thm:mop-biasvar}) and in practice (Figure \ref{fig:biasvar}).

\begin{algorithm}[H]
	\caption{MOP-$\alpha$}
    \label{alg:mop}
	     \textbf{Input:} Number of particles $J$, timesteps $N$, measurement model $f_{Y_n|X_n}(y_n^*|x_n, \theta)$, simulator $\process_n(x_{n+1}|x_n; \theta)$, evaluation parameter $\theta$, behavior parameter $\phi$, seed $\omega$.
      
        \textbf{First pass:} Set $\theta=\phi$ and fix $\omega$, yielding $X_{n,j}^{P,\phi}$, $X_{n,j}^{F,\phi}$, $g^{\phi}_{n,j}$.
            
        \textbf{Second pass:}
        Fix $\omega$, and filter at $\theta\neq \phi$:
            
		\textbf{Initialize } particles ${X}_{0,j}^{F,\theta}\sim {f}_{{X}_{0}}\left(\cdot\giventh{\theta}\right)$, weights $w^{F,\theta}_{0,j}= 1$. \newline
		\textbf{For} $n=1,...,N$: \newline
            \hspace*{4mm} Accumulate discounted weights, $w_{n,j}^{P,\theta} = \big(w_{n-1,j}^{F,\theta}\big)^\alpha$.\newline
            \hspace*{4mm} Simulate process model,
            ${X}_{n,j}^{P,\theta}\sim \process_n\big(\cdot|{X}_{n-1,j}^{F, \theta};{\theta}\big)$. \newline
            \hspace*{4mm} Measurement density,
            $g^{\theta}_{n,j}={f}_{{Y}_{n}|{X}_{n}}(y_{n}^{*}|{X}_{n,j}^{P,\theta}\giventh{\theta})$. \newline
            \hspace*{4mm} Compute $L_n^{B,\theta,\alpha} ={\sum_{j=1}^Jg^\theta_{n,j} \, w^{P,\theta}_{n,j}}\, \big/\, {\sum_{j=1}^J  w^{P,\theta}_{n,j}}$. \newline
            \hspace*{4mm} Conditional likelihood under $\phi$,
            $L_n^{\phi} = \frac{1}{J}\sum_{m=1}^{J}g^{\phi}_{n,m}$.\newline
            \hspace*{4mm} Select resampling indices $k_{1:J}$ with $\prob\big(k_{j}=m\big) \propto g^{\phi}_{n,m}$. \newline
            \hspace*{4mm} Obtain resampled particles ${X}_{n,j}^{F,\theta}={X}_{n,k_{j}}^{P,\theta}$. \newline
            \hspace*{4mm} Calculate resampled corrected weights
            $w_{n,j}^{F,\theta}= w^{P,\theta}_{n,k_j} \, g^{\theta}_{n,k_j} \, \big/ \, { g^{\phi}_{n,k_j}}$.\newline
            \hspace*{4mm} Compute $ L_n^{A,\theta,\alpha} = L_n^\phi\cdot {\sum_{j=1}^J w^{F,\theta}_{n,j}} \, \big/ \, {\sum_{j=1}^J  w^{P,\theta}_{n,j}}$.\newline
		\textbf{Return:} likelihood estimate $\hat{\lik}(\theta) = \prod_{n=1}^N L_n^{A,\theta,\alpha}$ or $\hat{\lik}(\theta) = \prod_{n=1}^N L_n^{B,\theta,\alpha}$, filtering distributions $\{(X_{n,j}^{F, \theta}, w^{F,\theta}_{n,j})\}_{n,j=1}^{N,J}.$
\end{algorithm}

\subsection{Algorithm Outline} 

Algorithm~\ref{alg:mop} constructs two coupled sets of particles, one under $\phi \in \Theta$, and another with the process model at $\theta \in \Theta$ but with the resampling indices constructed from the first pass, with the baseline parameter, $\phi$.
The resampling indices, which we write as $k_j \sim \text{Categorical}\big(g^{\phi}_{n,1},...,g^{\phi}_{n,J}\big)$, are a function of $\phi$ for any value of $\theta$.
For the categorical distribution, we use systematic resampling \cite{arulampalam02,king16} which usually has superior performance to multinomial resampling.
If $\theta$ and $\phi$ coincide, one only needs one particle filter run at $\theta=\phi$, otherwise one needs two runs at the same seed $\omega \in \Omega$.

Algorithm~\ref{alg:mop} reweights the conditional likelihoods by a correction factor accumulated over time to account for the resampling under $\phi$. 
Writing $g^{\theta}_{n,j}={f}_{{Y}_{n}|{X}_{n}}(y_{n}^{*}|{X}_{n,j}^{P,\theta}\giventh{\theta})$ for the measurement density and $L_n^{\phi} = \frac{1}{J}\sum_{m=1}^{J}g^{\phi}_{n,m}$ for the conditional likelihood estimate under $\phi$, we obtain two suitably weighted estimates of the conditional likelihood under $\theta$,
\arxiv{}{\vspace*{-1.5mm}}
\begin{equation}
     \label{eq:mop-conditional-likelihood}
     L_n^{B,\theta,\alpha} = \frac{\sum_{j=1}^Jg^\theta_{n,j}\, w^{P,\theta}_{n,j}}{\sum_{j=1}^J  w^{P,\theta}_{n,j}}, \hspace{15mm} 
     L_n^{A,\theta,\alpha} = L_n^\phi\cdot \frac{\sum_{j=1}^J w^{F,\theta}_{n,j}}{\sum_{j=1}^J  w^{P,\theta}_{n,j}},
     \arxiv{}{\vspace*{-1.5mm}}
\end{equation}
where the weights for each particle are updated recursively to correct for the cumulative error incurred by the off-parameter resampling:
\arxiv{}{\vspace*{-1.5mm}}
\begin{equation}
    \label{eq:weighting-scheme}
    w_{n,j}^{P,\theta} = (w_{n-1,j}^{F,\theta})^\alpha, 
    \hspace{5mm}
    w^{F,\theta}_{n,j} = w^{P,\theta}_{n,k_j} \, g^{\theta}_{n,k_j} \, \big/ g^{\phi}_{n,k_j}, 
    \hspace{5mm}
    w^{F,\theta}_{0,j}= 1.
    \arxiv{}{\vspace*{-1.5mm}}
\end{equation}
The before-resampling conditional likelihood estimate $L_n^{B,\theta,\alpha}$ is preferable in practice, as it has slightly lower variance than the after-resampling estimate $L_n^{A,\theta,\alpha}$, but the latter is useful in deriving properties of the MOP-$\alpha$ gradient estimate such as that in Theorem \ref{thm:mop-functional-forms}.



If $\alpha=1$ in Algorithm~\ref{alg:mop}, the weights, $w^{P,\theta}_{n,j}$ and $w^{F,\theta}_{n,j}$, accumulate as $n$ increases.
This leads to numerical instability unless $N$ is small.
For $\alpha<1$, the weights fom previous timesteps are discounted, as in Equation~(\ref{eq:weighting-scheme}).
Heuristically, $\alpha$ controls a rate of exponential decay of the memory that the filter at time $t_n$, and its resulting gradient estimate, has over the ancestral trajectories prior to $t_n$.
With $\alpha<1$, Algorithm~\ref{alg:mop} continues to target the filtering distribution and likelihood if $\theta=\phi$.
However, we expect bias for $\theta\neq\phi$ that shrinks as $\theta$ approaches $\phi$.
This lets us optimize a bias-variance tradeoff for the MOP-$\alpha$ score estimate.

When $\alpha=1$, MOP-$1$ maintains complete memory of each particle's ancestral trajectory, and Theorem~\ref{thm:mop-functional-forms} shows that it recovers the consistent but high-variance gradient estimate from \cite{poyiadjis11, scibior21}.
At the other extreme, when $\alpha=0$, MOP-$0$ considers only single-step transition dynamics, recovering the low-variance but asymptotically biased gradient estimator of \cite{naesseth18}. 
The novel possibility of $0<\alpha<1$ is effective both in theory (Theorem \ref{thm:mop-biasvar}) and in practice (Figure \ref{fig:biasvar}).

Our approach has similarities with \cite{svensson18}, who construct a deterministic local approximation of the likelihood at the $k$-th optimization step via saving the particles from a run at some $\theta_{k-1}$, perform a reweighting with transition and measurement density ratios to evaluate the likelihood at any sufficiently nearby $\theta$, and then use a $0$-th order optimizer such as \texttt{optim} in \texttt{R} to choose a $\theta_k$ that maximizes the likelihood in that neighborhood. 
However, due to their use of transition densities in the reweighting, their approach is not plug-and-play. The variance of their log likelihood estimate is also roughly $O(||\theta - \theta_{k-1}||_2^{2N})$, where $N$ is the horizon. Our approach bypasses these two issues. The use of the \textit{reparameterization trick} to consider particle paths dependent on $\theta$ (and not $\theta_{k-1}$ or $\phi$) allows us to retain the plug-and-play property. On the other hand, our use of AD allows us to only evaluate the likelihood and score estimates when $\theta=\phi$. 
The use of AD bypasses the issue of exponentially increasing variance, but nevertheless can result in problematic polynomial variance.
Our introduction of $\alpha$ avoids that difficulty.

\subsection{MOP-$\alpha$ Encompasses the Estimators of \cite{poyiadjis11, scibior21, naesseth18}}

\cite{scibior21} show that the estimate of \cite{naesseth18} is the gradient of a vanilla particle filter when resampling terms are dropped, and also recover the estimate of \cite{poyiadjis11} through the use of a stop-gradient operation in the AD procedure. It turns out that both of these, when applied on the bootstrap filter, correspond to special cases of MOP-$\alpha$.

\begin{thm}[MOP-$0$ and MOP-$1$ Functional Forms]
    \label{thm:mop-functional-forms}
    Writing $\nabla_\theta \hat\ell^\alpha(\theta)$ for the gradient estimate yielded by MOP-$\alpha$ when $\theta=\phi$ and using the after-resampling conditional likelihood estimate so that $\hat\lik(\theta) = \prod_{n=1}^N L_n^{A, \theta, \alpha}$, when $\alpha=0$,
    \vspace*{-2.5mm}
    \begin{equation} \nonumber
        \nabla_\theta \hat\ell^0(\theta) 
        = \frac{1}{J} \sum_{n=1}^N \sum_{j=1}^J \nabla_\theta \log\left(f_{Y_n|X_{n}}(y_n^*|x_{n,j}^{F, \theta}; \theta)\right),
        \vspace*{-2.5mm}
    \end{equation}
    yielding the estimate of \cite{naesseth18} for the bootstrap filter. When $\alpha=1$,
    \vspace*{-2.5mm}
    \begin{equation} \nonumber
        \nabla_\theta \hat{\ell}^1(\theta) 
        = \frac{1}{J}\sum_{j=1}^J \nabla_\theta \log f_{Y_{1:N}|X_{1:N}}\left(y_{1:N}^* | x_{1:n,j}^{A, F,\theta}\right),
    \vspace*{-2.5mm}
    \end{equation}
    yielding the estimator of \cite{poyiadjis11, scibior21} for the bootstrap filter.
\end{thm}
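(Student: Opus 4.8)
The plan is to differentiate the log-likelihood $\hat\ell^\alpha(\theta)=\sum_{n=1}^N \log L_n^{A,\theta,\alpha}$ directly and evaluate the result at $\theta=\phi$, exploiting that the factor $L_n^\phi$ carries no $\theta$-dependence. Since $L_n^{A,\theta,\alpha}=L_n^\phi\cdot\big(\sum_j w_{n,j}^{F,\theta}\big)/\big(\sum_j w_{n,j}^{P,\theta}\big)$, the additive constant $\log L_n^\phi$ drops under $\nabla_\theta$, leaving
\begin{equation}\nonumber
\nabla_\theta \hat\ell^\alpha(\theta) = \sum_{n=1}^N \nabla_\theta \log\frac{\sum_{j} w_{n,j}^{F,\theta}}{\sum_{j} w_{n,j}^{P,\theta}}.
\end{equation}
Throughout I would use the crucial simplification that at $\theta=\phi$ every ratio $g_{n,j}^\theta/g_{n,j}^\phi$ equals one, so the accumulated weights collapse to the values they would take in a plain bootstrap filter; this is what forces the normalizing sums to equal $J$ at the evaluation point.

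For $\alpha=0$ the recursion gives $w_{n,j}^{P,\theta}=(w_{n-1,j}^{F,\theta})^0=1$ for all $n,j$, so there is no accumulation across time: $\sum_j w_{n,j}^{P,\theta}=J$ and $w_{n,j}^{F,\theta}=g_{n,k_j}^\theta/g_{n,k_j}^\phi$. Writing $X_{n,j}^{F,\theta}=X_{n,k_j}^{P,\theta}$, I would compute $\nabla_\theta \log L_n^{A,\theta,0}=\nabla_\theta\log\big(\tfrac1J\sum_j g_{n,k_j}^\theta/g_{n,k_j}^\phi\big)$, evaluate at $\theta=\phi$ where the inner sum is $J$, and apply the quotient rule to obtain $\tfrac1J\sum_j \nabla_\theta\log g_{n,k_j}^\theta$. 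Reindexing by the resampled slots and summing over $n$ yields the stated $\alpha=0$ form, which is exactly the per-observation score of \cite{naesseth18}.

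For $\alpha=1$ the weights accumulate multiplicatively, and the argument has two ingredients. First, because $w_{n,j}^{P,\theta}=w_{n-1,j}^{F,\theta}$, the normalizing sums satisfy $\sum_j w_{n,j}^{P,\theta}=\sum_j w_{n-1,j}^{F,\theta}$, so the product $\prod_n L_n^{A,\theta,1}$ telescopes (up to the $\theta$-independent factor $\prod_n L_n^\phi$) to $\tfrac1J \sum_j w_{N,j}^{F,\theta}$, using $w_{0,j}^{F,\theta}=1$. Second, unfolding the recursion $w_{n,j}^{F,\theta}=w_{n-1,k_j}^{F,\theta}\,g_{n,k_j}^\theta/g_{n,k_j}^\phi$ along the genealogy shows $w_{N,j}^{F,\theta}=\prod_{n=1}^N g_{n,a_n(j)}^\theta/g_{n,a_n(j)}^\phi$, where $a_n(j)$ traces the ancestral indices of the $j$-th surviving particle; the numerator $\prod_n g_{n,a_n(j)}^\theta$ is precisely $f_{Y_{1:N}|X_{1:N}}(y_{1:N}^*|x_{1:N,j}^{A,F,\theta})$ by the conditional independence of the measurements given the states. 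Differentiating $\log\big(\tfrac1J\sum_j w_{N,j}^{F,\theta}\big)$ and evaluating at $\theta=\phi$, where every $w_{N,j}^{F,\phi}=1$ so the sum is $J$, gives $\tfrac1J\sum_j\nabla_\theta\log f_{Y_{1:N}|X_{1:N}}(y_{1:N}^*|x_{1:N,j}^{A,F,\theta})$, the path-based score of \cite{poyiadjis11,scibior21}.

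The main obstacle is the $\alpha=1$ bookkeeping: one must verify both that the likelihood product genuinely telescopes — which hinges on the identity $\sum_j w_{n,j}^{P,\theta}=\sum_j w_{n-1,j}^{F,\theta}$ surviving the resampling relabeling — and that unfolding the weight recursion correctly reconstructs the full ancestral trajectory $x_{1:N,j}^{A,F,\theta}$ rather than a mislabeled path. Care is also needed that $\nabla_\theta$ is read as a total derivative through the reparameterized simulator, since the particle positions $x_{n,j}^{P,\theta}$ themselves depend on $\theta$; retaining these chain-rule contributions from the latent states is exactly what makes the AD computation coincide with the closed forms above.
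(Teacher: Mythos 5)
Your proposal is correct and follows essentially the same route as the paper's proof in Appendix~\ref{appendix:functional}: for $\alpha=1$ you use the telescoping of $\prod_n L_n^{A,\theta,1}$ down to $\frac{1}{J}\sum_j w_{N,j}^{F,\theta}$, unfold the weight recursion along the genealogy, and apply the log-derivative identity with the collapse $w_{N,j}^{F,\theta}=1$ at $\theta=\phi$; for $\alpha=0$ you use the same per-step normalizer-equals-$J$ argument as Lemma~\ref{lem:mop-0-formula}. Your explicit flagging of the relabeling identity $\sum_j w_{n,j}^{P,\theta}=\sum_j w_{n-1,j}^{F,\theta}$ and of the total-derivative (reparameterization) reading of $\nabla_\theta$ matches the paper's implicit usage, so nothing is missing.
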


We defer the proof to \arxiv{Appendix~\ref{appendix:functional}}{the supplementary material}. 
The argument relies on a useful decomposition of the after-resampling conditional likelihood estimate $L_n^{A,\theta,\alpha}$ that yields a telescoping product in the MOP-$1$ case. 
Repeated applications of the log-derivative identity that $\nabla_x \log(f(x)) = (\nabla_x f(x))/f(x)$, and noting $\theta=\phi$ implies that $w_{n,j}^{P,\theta}$ evaluates to $1$, yield the result. 

This further illustrates how $\alpha$ dictates the memory of the gradient estimate. 
As \cite{scibior21} remarks, the MOP-$0$ estimator of \cite{naesseth18} only considers single-step quantities, and is ``memoryless'' beyond a single step. 
This is in contrast to the case when $\alpha=1,$ as that estimate, studied by \cite{poyiadjis11}, only considers the surviving particles at time $N$ and so fully tracks dependencies over time.

\begin{figure}[ht!]
  \centering
    \includegraphics[width=\arxiv{10cm}{\textwidth/3}]{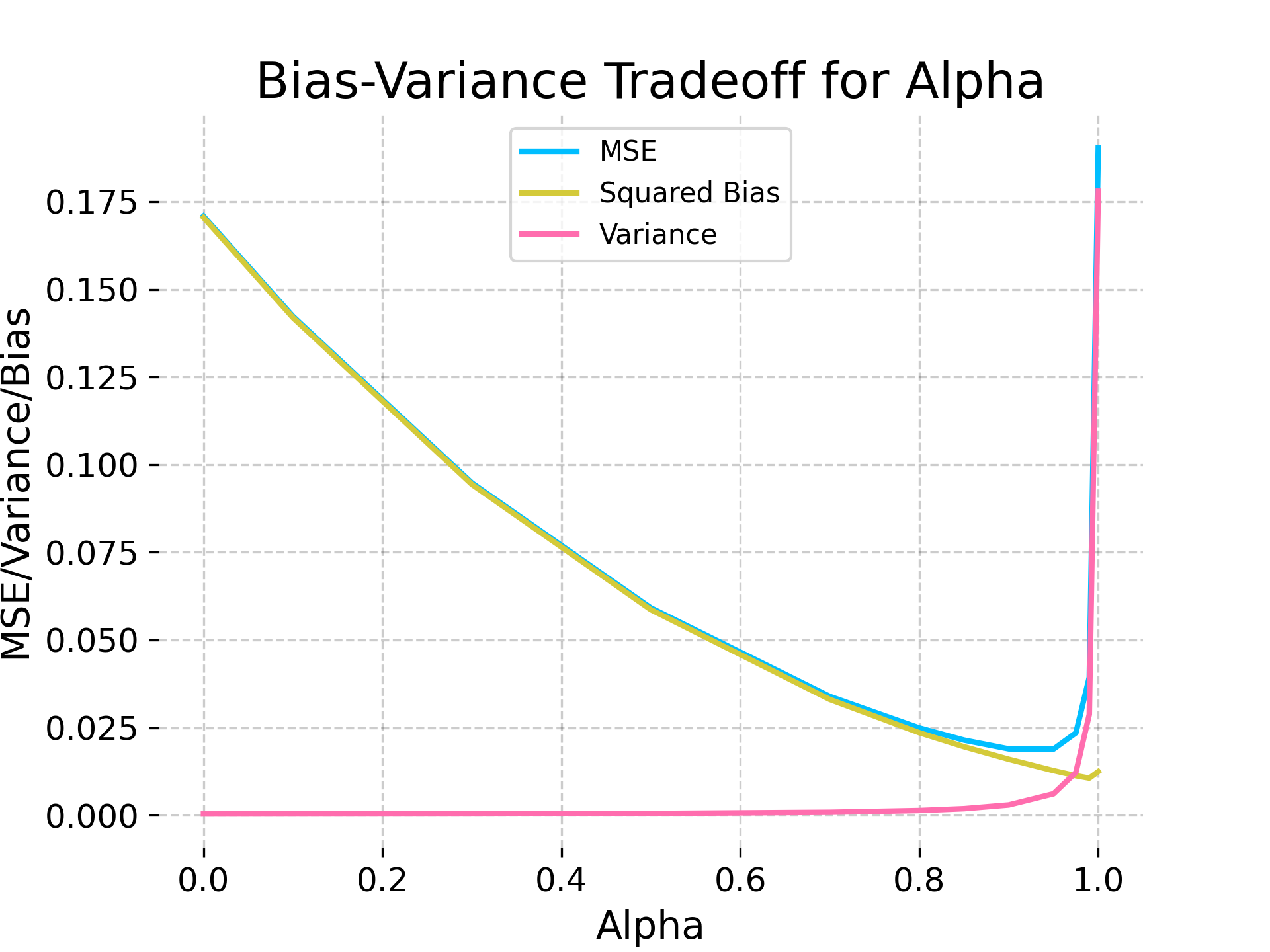}
    \caption{Illustration of the bias-variance tradeoff induced by the discounting hyperparameter $\alpha$, on the Dhaka cholera model of \cite{king08}. We display the MSE of score estimates for the trend in transmission, evaluated at the MLE.}
    \label{fig:biasvar}
\end{figure}

\subsection{Summary of Theoretical Guarantees}

The construction of MOP-$\alpha$ bypasses the issue of differentiating through a Monte Carlo algorithm with discontinuous resampling by turning it into a problem of differentiating through a simulator and a series of measurement density ratios.
We defer the theoretical analysis of MOP-$\alpha$ to Section \ref{sec:thms}, but first we highlight a few key results.
MOP-$\alpha$ estimates the likelihood and conditional distributions of latent variables, as one expects of a particle filter (Theorem \ref{thm:mop-targeting}).
When differentiated, we obtain the estimators of \cite{poyiadjis11, scibior21, naesseth18} as special cases (Theorem \ref{thm:mop-functional-forms}).
In particular, MOP-$1$ is consistent for the score (Theorem \ref{thm:mop-grad-consistency}), has rates for its bias and variance under different choices of $\alpha$ (Theorem \ref{thm:mop-biasvar}) that illustrate the desirable bias-variance tradeoff observed empirically in Figure \ref{fig:biasvar}, and enjoys a linear rate of convergence for gradient descent with the resulting gradient estimate (Theorem \ref{thm:mop-convergence}).

\section{Practical Maximum-Likelihood Estimation}


If $\theta$ is evaluated at $\phi$, the particles at $\theta$ and $\phi$ coincide. One then only needs to run one particle filter at $\theta=\phi$, setting the particles at $\phi$ to be copies of the particles at $\theta$ where gradients don't propagate. This is done algorithmically through the \texttt{stop\_gradient()} function in \texttt{JAX}, providing a mathematical justification for the use of the stop-gradient operation by \cite{scibior21}.

\subsection{Optimization}

This still leaves us with the question of designing an effective procedure for likelihood maximization with MOP-$\alpha$. We propose a simple algorithm we call Iterated Filtering with Automatic Differentiation (IFAD) in Algorithm \ref{alg:ifad} that runs a few iterations of IF2 to warm-start an iterative (first or second-order) method that uses the MOP-$\alpha$ gradient estimate. This leverages IF2's quick empirical convergence to a neighborhood of the MLE, overcoming the tendency of gradient methods to get stuck in saddle points and local minima. Conversely, switching to gradient ascent with MOP-$\alpha$ score estimates lets one bypass the difficulty that IF2 has with optimizing the last few units of log-likelihood. Combining these two methods in this way lets us enjoy the best of both worlds.

The  convergence of IF2 (and so the first stage of IFAD) to a neighborhood of the MLE happens rapidly in practice. 
In the case of the Dhaka cholera model of \cite{king08}, when an aggressive geometric cooling multiplier of 0.95 and initial random walk standard deviation of 0.02 is used, initial convergence happens within 40 iterations. 
Finding the MLE itself with IF2, however, takes much longer, as one has to use a less aggressive cooling rate to do so. 
For example, \cite{ionides15} use 100 iterations with the Dhaka cholera model of \cite{king08}, while \cite{wheeler23} use 200 with Model 1 in \cite{lee20}. 
By simply requiring that the first stage of IFAD get to a neighborhood of the MLE and not the MLE itself, we are able to substantially reduce the number of IF2 iterations required. 

\subsection{Linear Convergence Rates}

On the other hand, the second stage of IFAD enjoys a linear convergence rate under the usual smoothness and strong convexity assumptions in the theory of convex optimization, as we show below in Theorem \ref{thm:mop-convergence}.

\begin{thm}[Linear Convergence of IFAD]
    
Consider the second stage of IFAD (Algorithm \ref{alg:ifad}) where one stops if $\|\nabla_\theta \hat\ell^\alpha(\theta_m)\| \leq (1+\sigma) \epsilon$, where $\sigma \geq \frac{4 \Gamma}{(1-\beta)}$, for some $\beta \in (0,1)$. Assume $-\ell$ is strongly convex and smooth, $\gamma I \preceq \nabla_\theta^2 (-\ell) \preceq \Gamma I$. Choose the learning rate $\eta$ such that $\eta \leq \frac{c(1-\beta)}{2\Gamma}$. Then, for sufficiently large $\alpha$ and $J$ to ensure the score estimate is an $\epsilon$-approximation and the minimum eigenvalue of $H$ is greater than some $c > 0$ for all $m$ with probability at least $1-\delta$, the second stage of IFAD converges linearly to the MLE:
\arxiv{}{\vspace*{-2mm}}
$$
\ell(\theta^*) - \ell(\theta_{m+1}) \leq \Big(1-\eta\beta\, \frac{8\gamma}{9c}\Big)\big(\ell(\theta^*)-\ell(\theta_m)\big).
$$
\label{thm:mop-convergence}
\end{thm}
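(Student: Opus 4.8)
The plan is to cast the second stage of IFAD as an instance of inexact gradient descent, where the descent direction $-H^{-1}\nabla_\theta\hat\ell^\alpha(\theta_m)$ (or simply $-\nabla_\theta\hat\ell^\alpha$ in the first-order case) is a noisy, possibly biased approximation of the true gradient of $-\ell$, and then invoke a standard descent-lemma argument adapted to tolerate this error. First I would fix the high-probability event, guaranteed for sufficiently large $\alpha$ and $J$, on which two things hold simultaneously for all iterates $m$: the score estimate is an $\epsilon$-approximation, meaning $\|\nabla_\theta\hat\ell^\alpha(\theta_m)-\nabla_\theta\ell(\theta_m)\|\le\epsilon$, and the minimum eigenvalue of the preconditioner $H$ exceeds $c>0$. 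On this event, all subsequent deterministic estimates go through. The consistency of the MOP-$1$ score (Theorem \ref{thm:mop-grad-consistency}) together with the bias/variance rates (Theorem \ref{thm:mop-biasvar}) is what justifies that the $\epsilon$-approximation can be achieved by taking $\alpha$ close to $1$ and $J$ large.

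Next I would write the descent step $\theta_{m+1}=\theta_m-\eta H^{-1}\nabla_\theta\hat\ell^\alpha(\theta_m)$ and apply the $\Gamma$-smoothness of $-\ell$ to obtain the quadratic upper bound
\begin{equation} \nonumber
(-\ell)(\theta_{m+1}) \le (-\ell)(\theta_m) - \eta\,\nabla_\theta(-\ell)(\theta_m)^\top H^{-1}\nabla_\theta\hat\ell^\alpha(\theta_m) + \tfrac{\Gamma}{2}\eta^2\big\|H^{-1}\nabla_\theta\hat\ell^\alpha(\theta_m)\big\|^2.
\end{equation}
The central manipulation is to decompose $\nabla_\theta\hat\ell^\alpha = \nabla_\theta\ell + e_m$ with $\|e_m\|\le\epsilon$, so that the linear term splits into a ``clean'' part $-\eta\,\nabla_\theta(-\ell)^\top H^{-1}\nabla_\theta(-\ell)$, which is bounded below by $\tfrac{\eta}{\Gamma}\|\nabla_\theta\ell\|^2$ type quantities using $\lambda_{\min}(H^{-1})\ge 1/\Gamma$ and $\lambda_{\min}(H)\ge c$, and an error cross-term controlled by $\epsilon$. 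The stopping rule $\|\nabla_\theta\hat\ell^\alpha(\theta_m)\|\le(1+\sigma)\epsilon$ with $\sigma\ge 4\Gamma/(1-\beta)$ is precisely the device that ensures, while we have not yet stopped, the true gradient norm dominates the error by a controllable multiplicative factor; this lets me absorb the $\epsilon$-dependent terms into a $(1-\beta)$ fraction of the clean descent, leaving at least a $\beta$-fraction of genuine progress. The learning-rate condition $\eta\le c(1-\beta)/(2\Gamma)$ is then exactly what makes the second-order $\eta^2$ term smaller than the retained linear term.

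The final step is to convert the per-step decrease in $-\ell$ into the stated linear contraction of the suboptimality gap $\ell(\theta^*)-\ell(\theta_m)$. Here I would invoke $\gamma$-strong convexity, specifically the Polyak--Łojasiewicz-type inequality $\|\nabla_\theta\ell(\theta_m)\|^2\ge 2\gamma\big(\ell(\theta^*)-\ell(\theta_m)\big)$, to replace the gradient-norm lower bound on the retained descent by a multiple of the current gap. Combining the smoothness upper bound, the $\beta$-fraction of clean descent surviving the error absorption, and the PL inequality produces a recursion of the form $\ell(\theta^*)-\ell(\theta_{m+1})\le(1-\eta\beta\cdot\tfrac{8\gamma}{9c})\big(\ell(\theta^*)-\ell(\theta_m)\big)$, where the constant $8/9$ emerges from the slack introduced when bounding the error cross-term and the Hessian conditioning via $c$. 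I expect the main obstacle to be the bookkeeping in the error-absorption step: one must verify that the specific constants $\sigma\ge4\Gamma/(1-\beta)$, $\eta\le c(1-\beta)/(2\Gamma)$, and the factor $8/9$ are mutually consistent, i.e. that the $\epsilon$ cross-terms really are dominated so that the net coefficient is a genuine contraction in $(0,1)$, and to handle the preconditioned (Newton-like) direction uniformly via the eigenvalue bounds on $H$ rather than assuming $H=I$. A secondary subtlety is ensuring the high-probability event holds simultaneously across all iterations $m$, which in principle requires a union bound or a uniform-in-$\theta$ concentration argument for the score estimate over the trajectory of iterates.
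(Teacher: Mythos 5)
Your proposal is correct and takes essentially the same route as the paper's proof, which (following Theorem 6 of \cite{mahoney16}) verifies an Armijo-type condition via exactly your smoothness bound $f(\theta_\eta)-f(\theta_m)\le \eta\, p_m^T g(\theta_m)+\epsilon\eta\|p_m\|+\eta^2\Gamma\|p_m\|^2/2$, absorbs the $\epsilon$ cross-term using the stopping rule just as you describe (yielding $\|\nabla_\theta f(\theta_m)\|-\epsilon\ge\tfrac{2}{3}\|\nabla_\theta f(\theta_m)\|$, which is the source of the constant $8/9=2\cdot(2/3)^2$ after the Polyak--{\L}ojasiewicz consequence of $\gamma$-strong convexity), and handles the uniform high-probability event you flag via the Del Moral--Rio concentration bounds in Lemmas \ref{lemma:grad_bound} and \ref{lemma:hess_bound}. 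The only blemish is the sign in your displayed update (ascent on $\ell$ is $\theta_{m+1}=\theta_m+\eta H^{-1}\nabla_\theta\hat\ell^\alpha(\theta_m)$, i.e., descent on $-\ell$), which your subsequent decomposition implicitly corrects.
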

\arxiv{}{\vspace*{-5mm}}
We borrow from tools in the field of randomized numerical linear algebra \cite{mahoney16} to solve this problem. The proof, which is similar to that of Theorem 6 in \cite{mahoney16}, is deferred to \arxiv{Appendix~\ref{appendix:convergence}}{the supplementary information}. For simplicity, we only prove this for MOP-$1$ and note that the general case follows as the required $\epsilon$-approximation can still be obtained for sufficiently large $\alpha$. When $\alpha=1$, the required $J$ is given in the supplementary information in Lemmas \ref{lemma:grad_bound} and \ref{lemma:hess_bound}. This result shows that particle estimates for the gradient can enjoy the same linear convergence rate on well-conditioned problems as gradient descent with access to the score. As , we note that it is possible.

We therefore see that the second stage of IFAD converges linearly to the MLE if (1) the log-likelihood surface is $\gamma$-strongly convex in a neighborhood of the MLE and (2) the first stage of IFAD successfully reaches a (high-probability) basin of attraction of the MLE. This happens fairly often in practice, for example, when sufficient regularity conditions for local asymptotic normality of the MLE hold. We conjecture that this applies to the entirety of IFAD, as IF2 converges very quickly to a neighborhood of the MLE, and will explore this in future work.




\begin{algorithm}[H]
	\caption{IFAD}
    \label{alg:ifad}
	    \textbf{Input:} Number of particles $J$, timesteps $N$, IF2 cooling schedule $\eta_m$, MOP-$\alpha$ discounting parameter $\alpha$, $\theta_0$, $m=0.$\newline
        Run IF2 until initial "convergence" under cooling schedule $\eta_m$, or for a fixed number or iterations, to obtain $\{\Theta_j, j=1,...,J\}$, set $\theta_m := \frac{1}{J}\sum_{j=1}^J \Theta_j.$\newline
		\textbf{While} procedure not converged: \newline
		\hspace*{4mm} Run Algorithm \ref{alg:mop} to obtain $\hat\loglik(\theta_m).$ \newline
		\hspace*{4mm} Obtain $g(\theta_m) = \nabla_{\theta_m} \big(-\hat\loglik(\theta_m)\big)$, $H(\theta_m)$ s.t. $\lambda_{\min}\big(H(\theta_m)\big) \geq c$. \newline
		\hspace*{4mm} Update $\theta_{m+1} := \theta_m - \eta (H(\theta_m))^{-1} g(\theta_m)$, $m:=m+1.$ \newline
		\textbf{Return} $\hat{\theta} := \theta_m.$
\end{algorithm}

\section{Application to a Cholera Transmission Model}

The cholera transmission model that \cite{king08} developed for Dhaka, Bangladesh, has been used to benchmark the performance of various POMP inference methods \cite{ionides15, fasiolo16, wycoff24}, and we employ it here for the same purpose.
This model categorizes individuals in a population as susceptible, $S(t)$, infected, $I(t)$, and recovered, $R(t)$ and so is called an SIR compartmental model.
In this case, the compartment $R(t)$ is further subdivided into 
three recovered compartments $R^1(t)$, $R^2(t)$, $R^3(t)$ denoting varying degrees of cholera immunity.
We write $P(t)$ for the total population, and $M_n$ for the cholera deaths in each month.
As in \cite{king08, ionides15}, the transition dynamics follow a series of stochastic differential equations:
\vspace*{-1mm}
\begin{align*}
    dS&=\big(k \epsilon R^k+\delta(S-P)-\lambda(t) S\big)\, dt+d P-({\sigma S I}/{H})\, dB, \\
    dI&=\big(\lambda(t) S-(m+\delta+\gamma) I\big)\, dt+({\sigma S I}/{H})\, dB, \\
    dR^1&=\big(\gamma I-(k \epsilon+\delta) R^1\big)\, dt, \hspace{2mm} \dots \\
    dR^k&=\big(k \epsilon R^{k-1} -(k \epsilon+\delta) R^k\big)\, dt,
    \vspace*{-2mm}
\end{align*}
with Brownian motion $B(t)$, cholera death rate $m$, recovery rate $\gamma$, mean immunity duration $1/\epsilon$, standard deviation of the force of infection $\sigma$, and population death rate $\delta=0.02$. The force of infection, $\lambda_t$, is modeled by splines $(s_j)_{j=1}^6$
\vspace*{-2mm}
\begin{equation*}    \lambda_t=\exp\hspace{-1mm}\left(\hspace{-.5mm}\beta_{\text{trend}}(t-t_0)+\hspace{-1mm}\sum_{j=1}^{6} \beta_j s_j(t)\hspace{-1mm}\right)\hspace{-1mm}\frac{I}{P} + \exp\hspace{-1mm} \left(\sum_{j=1}^{6} \omega_j s_j(t)\hspace{-1mm}\right)\hspace{-1mm},
    \vspace*{-2mm}
\end{equation*}
where the coefficients $(\beta_j)_{j=1}^6$ model seasonality in the force of infection, $\beta_{\text{trend}}$ models the trend in the force of infection, and the $\omega_j$ represent seasonality of a non-human environmental reservoir of disease.
The measurement model for observed monthly cholera deaths is given by 
    $Y_n \sim \gN(M_n, \tau^2M_n^2)$,
where $M_n=\gamma\int_{t_{n-1}}^{t_n}I(s)\, ds$ is the modeled number of cholera deaths in that month.

\begin{figure}
    \centering
    \includegraphics[width=\arxiv{14cm}{\textwidth/2}]{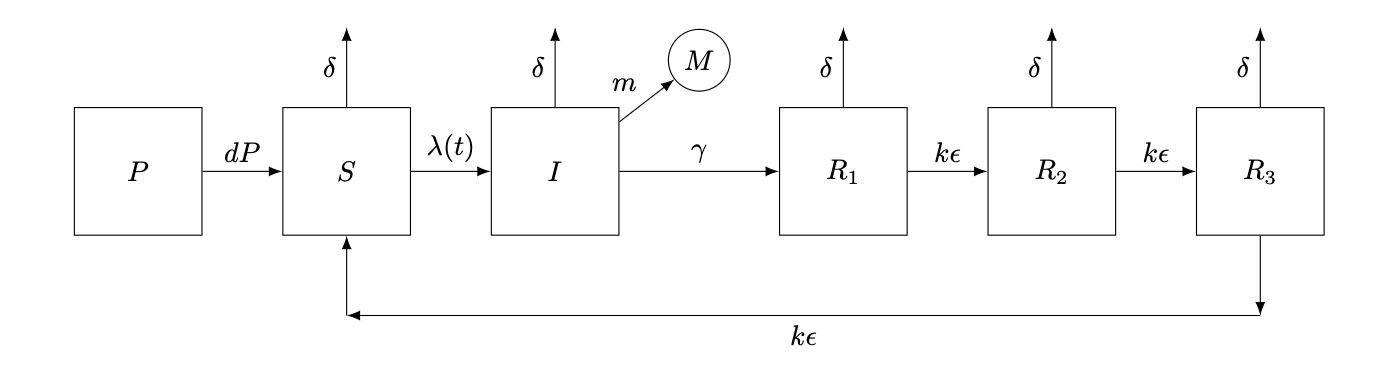}
    \vspace*{-7mm}
    \caption{A compartment flow diagram for the Dhaka cholera model from \cite{king08}.}
    \label{fig:tikz-cholera}
\end{figure}

\subsection{Results}

We tested IFAD against IF2 on a global search problem for the Dhaka cholera model.
We re-implemented IF2 to do so, but we compare our results with the results of \cite{ionides15} (labeled "IF2 2015").
Our re-implementation ourperforms that of \cite{ionides15}, likely due to a better choice of algorithmic parameters.
For each method, we performed 100 searches, initialized with 100 initial starting parameter vectors drawn uniformly from the same wide bounding box used in \cite{ionides15}. We summarize our findings below. 

  
\begin{table}[h!]
\centering
\begin{tabular}{lrr}
\toprule
 & Best Log-Likelihood & Rank \\
\midrule
IFAD-0.97 & -3750.2 & 1 \\
IFAD-0 & -3752.2 & 2 \\
IFAD-1 & -3754.6 & 3 \\
IF2 Warm Start & -3757.3 & 4 \\
IF2 & -3758.2 & 5 \\
IF2 2015 & -3768.6 & 6 \\
\bottomrule
\end{tabular}

\caption{Maximum log-likelihood found by IF2, IFAD, and MOP alone. IFAD performs the best among all methods. Our implementation of IF2 outperforms that of \cite{ionides15}, but still ultimately underperforms IFAD. IFAD manages to find the MLE, matching the highest log-likelihood previously found in the Dhaka cholera model implemented within the \texttt{pomp} package of \cite{king16}.}
\label{table:mle}
\end{table}

\paragraph{IFAD Successfully Finds the MLE:} Previously, an MLE at a log-likelihood of $-3748.6$ was reported by \cite{king16}.
This MLE was obtained with much computational effort, using many global and local IF2 searches, and with the assistance of likelihood profiling.
Meanwhile, \cite{ionides15} only achieve a maximum log-likelihood of $-3768.6$, while the best log-likelihood found by \cite{king08} was only $-3793.4$.
Despite being initialized for a global search, IFAD manages to get much closer to the MLE over the $100$ searches than \cite{ionides15} and finds it up to $2.5$ standard deviations of Monte Carlo error, as seen in Table~\ref{table:mle}.
On this problem, the sequence of local searches, refinement, and likelihood profiling that was previously required for finding the MLE is not necessary with the IFAD algorithm. 
In other words, the improvement in numerical efficiency of IFAD over IF2 is so large that routine application of IFAD (consisting of a collection of Monte Carlo replications from random starting points) generates results outside the reach of a routine application of IF2. 
When interpreting Table~\ref{table:mle}, bear in mind that by Wilks' Theorem, any difference of over $1.92$ log units has statistical relevance when testing one parameter at the $0.05$ significance level, and therefore potentially has scientific value.

\paragraph{IFAD Outperforms Both IF2 and MOP Alone:} While IF2 quickly approaches a neighborhood of the MLE within only 40 iterations, performing IF2 alone ultimately fails to achieve the last few log-likelihood units, as no IF2 search comes within 7 log-likelihood units of the MLE (as seen in Figure \ref{fig:scatter}). Conversely, when we tried gradient descent with MOP alone, we encountered many failed searches. This is a difficult, nonconvex, and noisy problem with $18$ parameters, and the search gets stuck in local minima and saddle points, failing to approach the MLE. 

IFAD, in comparison, approaches the MLE quickly due to the IF2 warm-start (as seen in Figure \ref{fig:optim}) and also succeeds at refining the coarse solution found by the warm-start with MOP gradient steps to find the MLE (as seen in Figures~\ref{fig:scatter} and \ref{fig:boxplot}). IFAD therefore successfully combines the best qualities of IFAD and MOP, outperforming either of them alone.
Monte~Carlo replication is appropriate for IFAD, or any Monte Carlo algorithm used to solve a challenging numerical problem, but Figure~\ref{fig:boxplot} shows that IFAD can find higher likelihood values, more quickly and more reliably than the previous state-of-the-art. 
In particular, Figure~\ref{fig:boxplot} shows that IFAD with $\alpha=0.97$ can maximize the challenging likelihood of \cite{king08} using a modest number of iterations and Monte Carlo replications, whereas previously it was necessary to carry out an extensive customized search or to live with an incompletely maximized likelihood.

For the results presented here, we did not include the many common heuristics used in the machine learning and optimization literature in the gradient descent stage.
We used constant learning rates of $0.01, 0.05$, and $0.2$ for IFAD-$1,0,$ and $0.97$ respectively, and a constant cooling rate of $0.95$ for our IF2 implementation.
While techniques such as annealing learning rates, gradient normalization, and momentum could further improve the performance of IFAD in other simulations we performed, we chose to report the results of the simplest implementation to serve as a baseline for the method's performance.
The results for this basic implementation are already sufficient to show the high potential of the approach.

\begin{figure}[htbp!]
    \includegraphics[width=\arxiv{8cm}{\textwidth/\real{4.2}}]{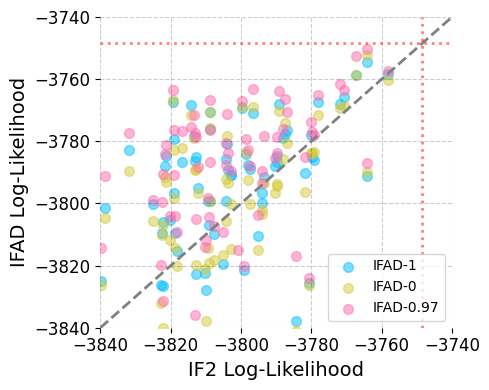}
    \includegraphics[width=\arxiv{8cm}{\textwidth/\real{4.2}}]{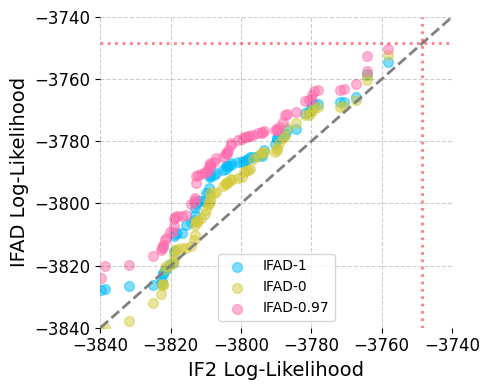}
    \caption{Scatterplots depicting the performance of IFAD against that of IF2. \textbf{Left:} Paired searches from the same starting point. Controlling for initial starting point, tuning $\alpha$ allows IFAD-$0.97$ to strictly improve on IF2, \cite{poyiadjis11}, and \cite{naesseth18}, on almost every iteration. \textbf{Right:} Q-Q plot of ranked IFAD searches against ranked IF2 searches. It is clear that on average, IFAD performs best, and manages to find the MLE while no IF2 search successfully gets within 7 log-likelihood units of it. The dotted red line shows the true maximized log-likelihood. }
    \label{fig:scatter}
\end{figure}


  
\begin{figure}[ht]
    \includegraphics[width=\arxiv{8cm}{\textwidth/\real{4.2}}]{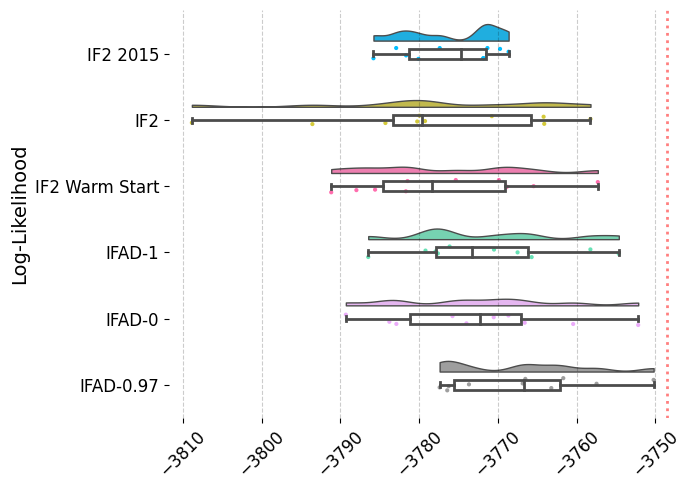}
    \includegraphics[width=\arxiv{8cm}{\textwidth/\real{4.2}}]{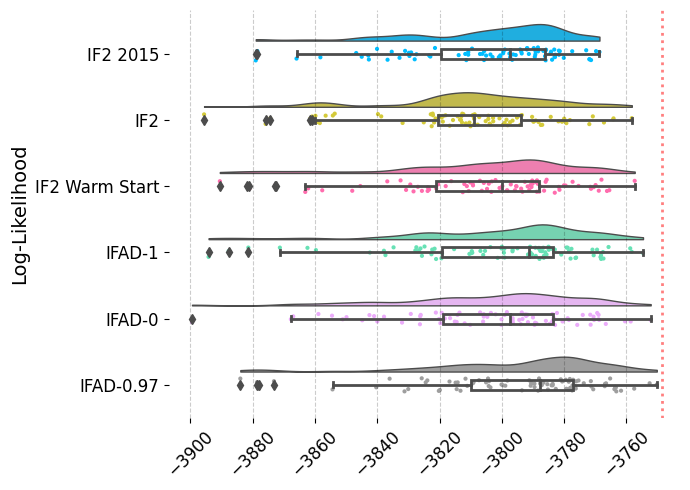}
    \caption{\textbf{Left:} Raincloud plot depicting the performance of IFAD and IF2 where we plot the results of the best run out of every ten runs, representing the common procedure of running a few searches and choosing the best one. \textbf{Right:} Raincloud plot of all searches. IFAD outperforms all other methods, and the gradient steps improve on the warm-start given by running 40 IF2 iterations.
    The dotted red line shows the true maximized log-likelihood.}
    \label{fig:boxplot}
\end{figure}

\begin{figure}[ht]
    \centering
    \includegraphics[width=\arxiv{10cm}{\textwidth/3}]{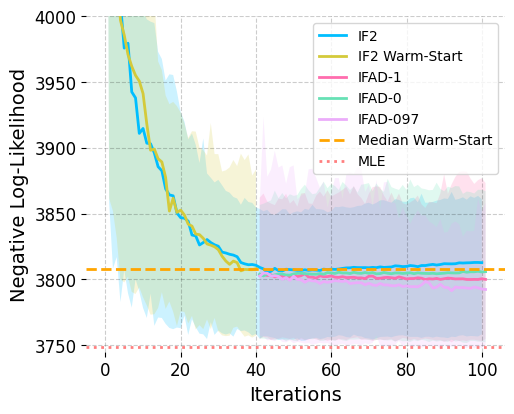}
    \caption{Optimization progress of IFAD and IF2. The dashed orange line depicts the median warm-start given by running 40 IF2 iterations. While running 60 more iterations of IF2 improves upon the median warm-start, doing so ultimately underperforms IFAD. We see that IFAD has better tail control and successfully reaches the MLE. 
    We use a dotted red line to display the MLE.}
    \label{fig:optim}
\end{figure}

\section{Application to Bayesian Inference}
\label{sec:bayes}

Particle MCMC, as introduced by \cite{andrieu10}, is arguably the most popular method for full-information plug-and-play Bayesian inference for POMP models.
The plug-and-play property enables its use in simulation-based Bayesian inference for complex scientific models where computing transition densities is not feasible, such as in disease modeling.
However, the particle Metropolis-Hastings algorithm often experiences slow mixing and requires long burn-in periods.
For instance, \cite{fasiolo16} reported needing $700,000$ burn-in iterations for effective sampling in a cholera model.
This computational burden necessitated simplifying the scientific model to reduce computational costs, by adjusting the length of each Euler timestep to be a month instead of a day. 

We employ a NUTS sampler powered by a MOP-$\alpha$, with a nonparametric empirical prior initialized by the IF2 warm start from the previous section.
To construct the prior, we perform a kernel density estimate (KDE) on the parameter swarm from the warm start, and use the output of that as an empirical prior.
The KDE is performed to ensure some degree of mutual contiguity between the densities.
This yields a prior that is roughly equivalent to a parameter cloud centered at some point in the neighborhood of the MLE, with a standard deviation of $\sqrt{(0.02\cdot0.95^{40})^2 \cdot 600}\approx 6.3\%$ of the MLE.
Using a NUTS sampler powered by MOP-$\alpha$ with this prior significantly lowers the mixing time of the Markov chain to as little as $500$ iterations, as seen in Figure \ref{fig:nuts-eb}. 

\begin{figure}[t!]
    \centering
    \includegraphics[width=\textwidth/\real{1.25}]{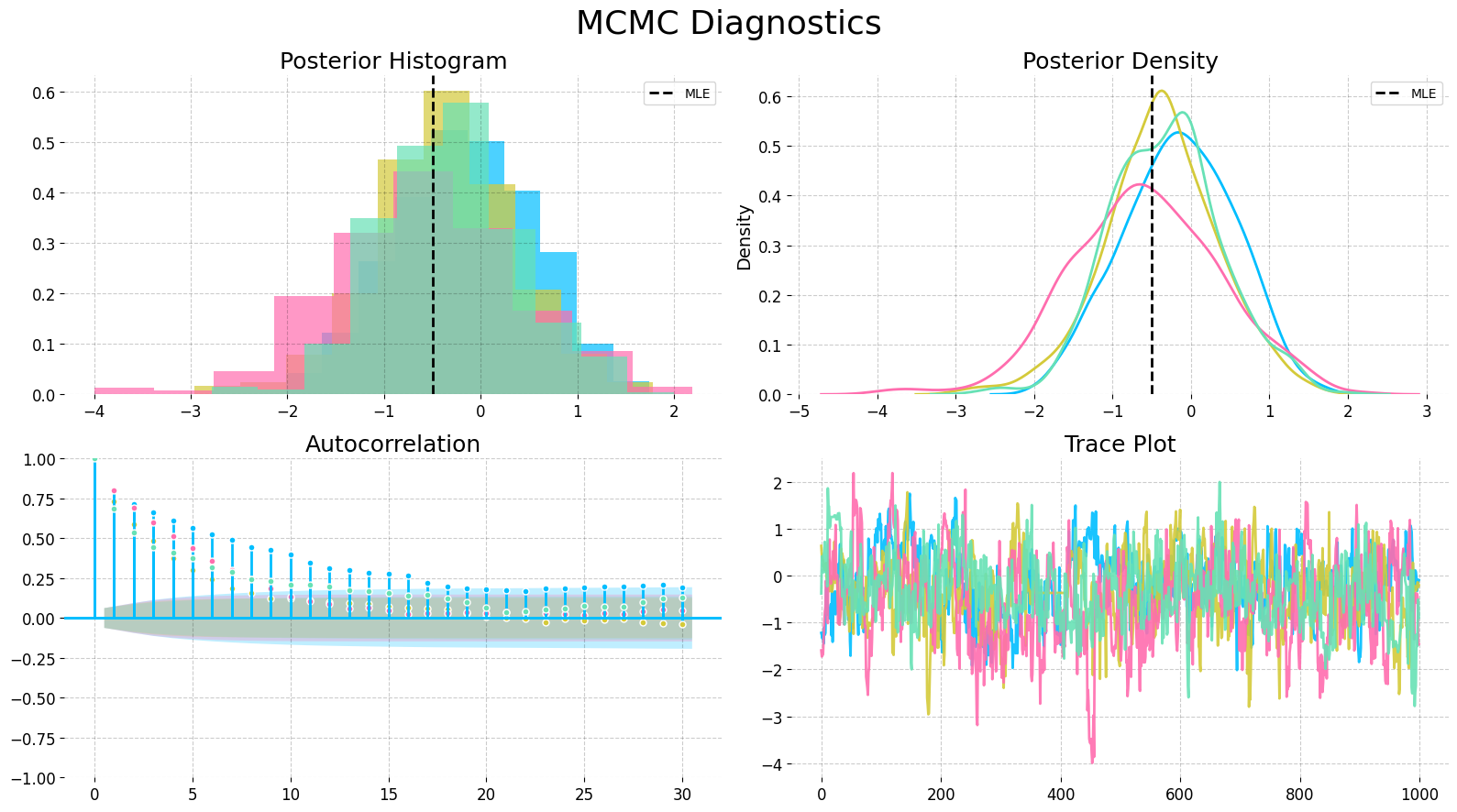}
    \caption{Convergence diagnostics for NUTS with the informative nonparametric empirical Bayes prior from applying KDE to preliminary searches using IF2, with $4$ chains. We display the results for the trend parameter in the Dhaka cholera model of \cite{king08}. The NUTS sampler mixes quickly, and the posterior estimates from each chain share roughly the same posterior mode.}
    \label{fig:nuts-eb}
\end{figure}

For completeness, we note that without the empirical Bayes-style prior from IF2, NUTS with the MOP-$\alpha$ estimate alone tended to diverge.
In contrast, particle Metropolis-Hastings (with or without the empirical Bayes prior from IF2) fails to effectively explore the parameter space.
We display the results of the above in Figures \ref{fig:mh} and \ref{fig:nuts} \arxiv{in Appendix~\ref{appendix:bayes}}{within the supplementary information}. 
The result obtained in this section, of a speedup in the mixing time of particle MCMC by over three orders of magnitude, provides hope that particle Bayesian inference might finally be practical for the complex scientific models commonly encountered in areas like epidemiology.

\section{Theoretical Analysis of MOP-$\alpha$}
\label{sec:thms}

Here, we will show that MOP-$\alpha$ targets the filtering distribution and likelihood under $\theta$, is consistent when $\alpha=1$, and characterize rates for its bias and variance under different choices of $\alpha$. To do so, we require the following assumptions:

\begin{enumerate}[label=(A\arabic*),itemsep=-1.2ex] 
    \item \textbf{Continuity of the Likelihood.} $\ell(\theta)$ has more than two continuous derivatives in a neighborhood $\left\{\theta: \ell(\theta)>\lambda_1\right\}$ for some $\lambda_1<\sup _{\varphi} \ell(\varphi)$. \label{assump:conti-lik}
    \item \textbf{Bounded Process Model.} There exist $\underbar{M}, \bar{M}$ such that $0 < \underbar{M} \leq f_{X_n|X_{n-1}}(x_n | x_{n-1};\theta) \leq \bar{M} < \infty$. \label{assump:bounded-process}
    \item \textbf{Bounded Measurement Model.} There exist $\underbar{G}, \bar{G}$ such that $0<\underbar{G} \leq f_{Y_n \mid X_n}\left(y_n^* \mid x_n; \theta\right) \leq \bar{G}<\infty$ and there exists $G'(\theta)$ with $\|\nabla_\theta \log f_{Y_n \mid X_n}\left(y_n^* \mid x_n; \theta\right)\|_\infty \leq G'(\theta)< \infty$. \label{assump:bounded-measurement}
    \item \textbf{Bounded Gradient Estimates.} There are functions $G(\theta), H(\theta): \Theta \to [0,\infty)$ uniformly bounded by $G^*, H^*<\infty$, so the MOP-$\alpha$ gradient and Hessian estimates at $\theta=\phi$ are almost surely bounded by $G(\theta)$ and $H(\theta)$ for all $\alpha$. \label{assump:local-bounded-derivative}
    \item \textbf{Differentiability of Density Ratios and Simulator.} The measurement density, \arxiv{\\}{}$f_{Y_n|X_n}(y_n^*|x_n; \theta)$, and simulator have more than two continuous derivatives in $\theta$. \label{assump:diff-meas-and-sim}
\end{enumerate}

The fact that the likelihood estimate yielded by MOP-$\alpha$ has more than two continuous derivatives in $\theta$ follows from the construction of the likelihood estimate in equations \ref{eq:mop-conditional-likelihood} and \ref{eq:weighting-scheme}, as well as Assumptions \ref{assump:bounded-measurement} and \ref{assump:diff-meas-and-sim}. 

\subsection{MOP-$\alpha$ Targets the Filtering Distribution}

We show here that MOP-$\alpha$ targets the filtering distribution under $\theta$ and is strongly consistent for the likelihood under $\theta$ when $\alpha=1$ or $\theta=\phi$.
Employing $\alpha<1$ lead to inconsistency when $\theta$ deviates from $\phi$, but this bias disappears as $\theta$ approaches $\phi$.

While the result is presented here as specific to MOP-$\alpha$, we actually prove a more general result in \arxiv{Appendix~\ref{appendix:targeting}}{the supplementary material}.
That is, we show a strong law of large numbers for triangular arrays of particles with off-parameter resampling, where we resample the particles according to an arbitrary resampling rule not necessarily in proportion to the targeted distribution of interest and employ weights that encode the cumulative discrepancy between the resampling and the target distribution instead of equal weights.


\begin{thm}[MOP-$\alpha$ Targets the Filtering Distribution and Likelihood]
    \label{thm:mop-targeting}
    When $\alpha=1$ or $\theta=\phi$, MOP-$\alpha$ targets the filtering distribution under $\theta$ and is strongly consistent for the likelihood under $\theta$. That is, for $\pi_n(\theta)=f_{X_{n}|Y_{1:n}}(x_n|y_{1:n}^* ; \theta)$ and any measurable and bounded functional $h$ and for $\hat\lik(\theta) = \prod_{n=1}^N L_n^{A,\theta,\alpha}$ or $\prod_{n=1}^N L_n^{B,\theta,\alpha}$, it holds that
    \arxiv{}{\vspace*{-2.5mm}}
    \begin{equation} \nonumber
        \frac{\sum_{j=1}^J h(x_{n,j}^{F, \theta}) \, w_{n,j}^{F,\theta}}{\sum_{j=1}^J w_{n,j}^{F,\theta}} \stackrel{a.s.}{\to} E_{\pi_n(\theta)} \big[h(X_n)\big], \hspace{5mm} \hat\lik(\theta)  \stackrel{a.s.}{\to} \lik(\theta).
    \arxiv{}{\vspace*{-2.5mm}}
    \end{equation}
\end{thm}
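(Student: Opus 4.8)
The plan is to prove the result by induction on the time index $n$, following the standard decomposition of a particle filter update into a \emph{mutation} (propagation through the simulator), a \emph{measurement weighting}, and a \emph{resampling} step, but tracking throughout the importance correction induced by resampling at $\phi$ rather than at $\theta$. Writing $\pi_{n|n-1}(\theta)$ for the one-step predictive law $\int \process_n(\cdot\mid x;\theta)\,\pi_{n-1}(\theta)(dx)$ and $g^\theta_n(x) = f_{Y_n\mid X_n}(y_n^*\mid x;\theta)$, recall the filtering recursion $\pi_n(\theta)\propto g^\theta_n\cdot\pi_{n|n-1}(\theta)$. The induction hypothesis is that the weighted empirical measure $\{(x^{F,\theta}_{n-1,j}, w^{F,\theta}_{n-1,j})\}_j$ satisfies the stated a.s.\ convergence at time $n-1$ for every bounded measurable $h$; the base case is immediate since the particles are drawn from $f_{X_0}(\cdot;\theta)$ with unit weights. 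I would first dispatch the case $\theta=\phi$ by observing that the weight recursion \eqref{eq:weighting-scheme} then collapses to $w^{F,\theta}_{n,j}\equiv 1$ for all $\alpha$, so MOP-$\alpha$ is exactly a bootstrap particle filter and the conclusion is the classical SLLN; the substantive work is the off-parameter case $\alpha=1$.

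For the inductive step with $\alpha=1$ (so $w^{P,\theta}_{n,j}=w^{F,\theta}_{n-1,j}$), let $\mathcal F_{n-1}$ denote the history through time $n-1$. \textbf{Mutation:} conditional on $\mathcal F_{n-1}$ the $X^{P,\theta}_{n,j}\sim\process_n(\cdot\mid X^{F,\theta}_{n-1,j};\theta)$ are independent, and I would split $\big(\sum_j h(X^{P,\theta}_{n,j})w^{P,\theta}_{n,j}\big)/\big(\sum_j w^{P,\theta}_{n,j}\big) - \E_{\pi_{n|n-1}(\theta)}[h]$ into a conditionally centered sampling term plus a bias term equal to the time-$(n-1)$ estimate of the smoothed functional $\bar h(x)=\int h(x')\,\process_n(dx'\mid x;\theta)$ minus $\E_{\pi_{n-1}(\theta)}[\bar h]$; the latter vanishes by the induction hypothesis applied to the bounded $\bar h$, and the former vanishes a.s.\ by a conditional fourth-moment Borel--Cantelli bound that is $O(J^{-2})$ because both $h$ and the normalized weights are bounded. \textbf{Weighting:} applying the mutation result to the bounded functionals $h\,g^\theta_n$ and $g^\theta_n$ (bounded above and below by \ref{assump:bounded-measurement}) and taking the ratio shows the before-resampling estimate converges, $\big(\sum_j h(X^{P,\theta}_{n,j})g^\theta_{n,j}w^{P,\theta}_{n,j}\big)/\big(\sum_j g^\theta_{n,j}w^{P,\theta}_{n,j}\big)\to\E_{\pi_n(\theta)}[h]$. \textbf{Off-parameter resampling:} conditioning on the pre-resampling field $\mathcal G_n$ and using $\prob(k_j=m)=g^\phi_{n,m}/\sum_l g^\phi_{n,l}$, a direct computation shows the $g^\phi$ factors cancel against the correction $g^\theta_{n,k_j}/g^\phi_{n,k_j}$, so the conditional expectation of the resampled ratio $\big(\sum_j h(X^{F,\theta}_{n,j})w^{F,\theta}_{n,j}\big)/\big(\sum_j w^{F,\theta}_{n,j}\big)$ equals exactly the before-resampling estimate from the weighting step. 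A second conditional Borel--Cantelli argument, valid because \ref{assump:bounded-measurement} bounds each ratio $g^\theta/g^\phi$ within $[\underbar{G}/\bar{G},\bar{G}/\underbar{G}]$ and, over the fixed horizon $N$, keeps the accumulated weights uniformly bounded, controls the resampling error and completes the induction.

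For the likelihood, the before-resampling factor $L_n^{B,\theta,\alpha}=\big(\sum_j g^\theta_{n,j}w^{P,\theta}_{n,j}\big)/\big(\sum_j w^{P,\theta}_{n,j}\big)$ is exactly the mutation estimate of the bounded functional $g^\theta_n$, hence converges a.s.\ to $\E_{\pi_{n|n-1}(\theta)}[g^\theta_n]=\lik_n(\theta)$, the true conditional likelihood. The same cancellation as in the resampling step shows $L_n^\phi\cdot\big(\sum_j w^{F,\theta}_{n,j}\big)/\big(\sum_j w^{P,\theta}_{n,j}\big)$ equals $L_n^{B,\theta,\alpha}$ in conditional expectation, so $L_n^{A,\theta,\alpha}$ shares the same limit. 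Since $\hat\lik(\theta)$ is a finite product of $N$ such factors, each converging a.s.\ to $\lik_n(\theta)$, the product converges a.s.\ to $\prod_{n=1}^N \lik_n(\theta)=\lik(\theta)$.

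I expect the main obstacle to be the off-parameter resampling step: unlike a bootstrap filter, the surviving particles carry unequal, path-accumulated, random weights, so one cannot invoke an i.i.d.\ SLLN directly and must instead run a conditional SLLN over a triangular array at each stage and thread it through the induction. The two places where the argument could break are (i) controlling the accumulated weights, which is why I would lean on \ref{assump:bounded-process} and \ref{assump:bounded-measurement} to keep the density ratios and weights uniformly bounded over the fixed horizon, and (ii) verifying the exact conditional-unbiasedness cancellation, which is precisely what fails once $\alpha<1$ and $\theta\neq\phi$: then $w^{P,\theta}_{n,j}=(w^{F,\theta}_{n-1,j})^\alpha$ no longer reproduces the correct importance weight, the telescoping breaks, and a nonvanishing bias remains, explaining the hypothesis $\alpha=1$ or $\theta=\phi$.
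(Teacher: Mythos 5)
Your proposal is correct and follows essentially the same route as the paper's proof: your mutation/weighting/resampling decomposition corresponds exactly to the paper's three lemmas (particle marginals, particle posteriors, and change of weight measure), each proved there, as in your plan, by conditional fourth-moment Borel--Cantelli arguments for triangular arrays threaded through an induction, with the same boundedness assumptions controlling the accumulated density ratios and the same conditional-unbiasedness/telescoping treatment of $L_n^{A,\theta,\alpha}$ versus $L_n^{B,\theta,\alpha}$. The only packaging differences are that the paper first abstracts your resampling step into a general off-parameter-resampled filter with arbitrary resampling probabilities $p_{n,j}$ before specializing to $p_{n,j}\propto g^\phi_{n,j}$, and that it applies the fourth-moment argument to numerator and denominator separately (taking $h\equiv 1$ for the denominator) rather than phrasing the cancellation via the conditional expectation of the ratio --- a detail your argument should mirror, since $\E[A/B]$ need not equal $\E[A]/\E[B]$.
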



\begin{proof}
    We provide a proof sketch here, deferring most of the details, and discussion of the after-resampling conditional likelihood estimate $L_n^{A,\theta,\alpha}$, to \arxiv{Appendix~\ref{appendix:targeting}}{the supplementary information}. 
    When $\theta=\phi$, regardless of the value of $\alpha$, the ratio ${g_{n,j}^\theta}/{g_{n,j}^\phi}=1,$ and this reduces to the vanilla particle filter.
    When $\alpha=1$ and $\theta\neq\phi,$ suppose inductively that $\{(X^{F,\theta}_{n-1,j},w^{F,\theta}_{n-1,j})\}_{j=1}^J$ targets $f_{X_{n-1}|Y_{1:n-1}}(x_{n-1}|y^*_{1:n-1};\theta)$.
    It can then be shown that $\{(X^{P,\theta}_{n,j},w^{P,\theta}_{n,j})\}_{j=1}^J$ targets $f_{X_{n}|Y_{1:n-1}}(x_{n}|y^*_{1:n-1};\theta)$, that $\{(X^{P,\theta}_{n,j},w^{P,\theta}_{n,j} \, g^\theta_{n,j} )\}_{j=1}^J$ targets  $f_{X_{n}|Y_{1:n}}(x_{n}|y^*_{1:n};\theta)$, and that weighting the particles by $(X^{F,\theta}_{n,j},w^{F,\theta}_{n,j}) = (X^{P,\theta}_{n,k_j}, w^{P,\theta}_{n,k_j} \, g^\theta_{n,k_j} \big/ g^\phi_{n,k_j})$,
    resampling the $k_j$ with probabilities proportional to $g^\phi_{n,j}$, also targets $f_{X_{n}|Y_{1:n}}(x_{n}|y^*_{1:n};\theta)$.
    If the likelihood is estimated with the before-resampling conditional likelihoods $\hat\lik(\theta) = \prod_{n=1}^N L_n^{B,\theta,\alpha}$, the strong consistency is a direct consequence of our earlier result that $\{ \big(X^{P,\theta}_{n,j},w^{P,\theta}_{n,j}\big) \}$ targets $f_{X_{n}|Y_{1:n-1}}(x_{n}|y^*_{1:n-1};\theta)$. 
\end{proof}

\subsection{MOP-$1$ Is Consistent for the Score}

Despite showing that the MOP-$1$ gradient estimate yields the estimate of \cite{poyiadjis11, scibior21} when applied to the bootstrap filter, \cite{poyiadjis11, scibior21} estimate the Fisher score by 
$$\frac{1}{J}\sum_{j=1}^J \nabla_\theta \log f_{X_{0:N}, Y_{1:N}}\left(x_{0:n,j}^{A, F,\theta}, y_{1:N}^* ; \theta\right)$$ and not
$$\frac{1}{J}\sum_{j=1}^J \nabla_\theta \log f_{Y_{1:N}| X_{1:N}}\left(y_{1:N}^* | x_{1:n,j}^{A, F,\theta}; \theta\right).$$
It is therefore not immediately apparent that these two converge to the same thing. As such we directly show the consistency of the MOP-$1$ gradient estimate below. 
We present an abbreviated proof here, postponing the full argument to \arxiv{Appendix~\ref{appendix:consistency}}{the supplement}.

\begin{thm}[Consistency of MOP-$1$ Gradient Estimate]
    The gradient estimate of MOP-$\alpha$ when $\alpha=1$, $\theta=\phi$ is strongly consistent for the score: $\nabla_\theta \hat\ell_J^1(\theta) \stackrel{a.s.}{\to} \nabla_\theta \ell(\theta)$ as $J \to \infty$.
    \label{thm:mop-grad-consistency}
\end{thm}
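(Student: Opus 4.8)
The plan is to recognize the MOP-$1$ gradient estimate as a particle approximation of Fisher's identity expressed in the \emph{reparameterized} coordinates of the simulator, and then to pass to the limit using a strong law of large numbers for the ancestral (smoothing) trajectories. By Theorem~\ref{thm:mop-functional-forms}, at $\theta=\phi$ the estimate is the simple average
$$\nabla_\theta \hat\ell^1_J(\theta) = \frac{1}{J}\sum_{j=1}^J \sum_{n=1}^N \frac{d}{d\theta}\log f_{Y_n|X_n}\big(y_n^*\,\big|\,x_{n,j}^{A,F,\theta};\theta\big),$$
where $\frac{d}{d\theta}$ is the \emph{total} derivative that, through the reparameterization trick (Assumption~\ref{assump:diff-meas-and-sim}), accounts both for the explicit dependence of the measurement density on $\theta$ and for the implicit dependence of each simulated state on $\theta$.

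First I would write the likelihood in noise coordinates. The reparameterization trick lets me set $X_0=\psi_0(U_0;\theta)$ and $X_n=\psi_n(X_{n-1},U_n;\theta)$ for driving noise $U=(U_0,\dots,U_N)$ whose law $q$ does not depend on $\theta$, so that, writing $g_n(x;\theta):=f_{Y_n|X_n}(y_n^*|x;\theta)$,
$$\lik(\theta)=\int \Big[\prod_{n=1}^N g_n\big(X_n(U;\theta);\theta\big)\Big]\,q(U)\,dU.$$
Differentiating under the integral sign --- justified by Assumptions~\ref{assump:bounded-measurement} and~\ref{assump:diff-meas-and-sim} together with the bounded-derivative Assumption~\ref{assump:local-bounded-derivative} via dominated convergence --- and dividing by $\lik(\theta)$ yields the reparameterized Fisher identity
$$\nabla_\theta \ell(\theta)=\E\Big[\sum_{n=1}^N \frac{d}{d\theta}\log g_n\big(X_n(U;\theta);\theta\big)\,\Big|\,Y_{1:N}=y_{1:N}^*\Big],$$
the conditional expectation being over the smoothing law of $U$ (proportional to $\prod_n g_n\cdot q$). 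Crucially, in these coordinates the transition-density score terms of the Poyiadjis--Scibior estimator never appear explicitly: they are absorbed into the sensitivities $dX_n/d\theta$ of the simulated states. This is exactly why the MOP-$1$ summand can match the integrand above despite omitting $\nabla_\theta\log f_{X_n|X_{n-1}}$, and it reconciles the two apparently different estimators flagged before the statement.

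The final step is to identify $\nabla_\theta\hat\ell^1_J$ as the sample average of this same integrand over the particle approximation of the smoothing distribution, and then take $J\to\infty$. To make the functional well defined I would track the tangent process, augmenting the state to $(X_n, dX_n/d\theta)$, which evolves by the chain-rule recursion $\tfrac{d}{d\theta}X_n = \partial_x\psi_n \cdot \tfrac{d}{d\theta}X_{n-1} + \partial_\theta\psi_n$, so that each summand is a measurable functional of the augmented ancestral trajectory. The path-space, $\alpha=1$, $\theta=\phi$ version of Theorem~\ref{thm:mop-targeting} --- the general strong law for triangular arrays with off-parameter resampling proved there --- shows that the empirical measure of these ancestral trajectories converges almost surely, on bounded measurable functionals, to the smoothing law. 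Applying that SLLN to $\sum_n \tfrac{d}{d\theta}\log g_n$, which is almost surely bounded by Assumptions~\ref{assump:bounded-measurement} and~\ref{assump:local-bounded-derivative}, delivers $\nabla_\theta\hat\ell^1_J(\theta)\stackrel{a.s.}{\to}\nabla_\theta\ell(\theta)$.

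I expect the main obstacle to be this last step: the MOP-$1$ summand is a functional of the full smoothing \emph{path} together with its tangent, not of a single-time filtering marginal, so one must (i) set up the SLLN on the extended state space $(X_n, dX_n/d\theta)$ and verify that the ancestral trajectories carrying their attached sensitivities target the corresponding extended smoothing law, and (ii) control the growth of the sensitivity process $dX_n/d\theta$ over the fixed horizon so that the functional is genuinely bounded --- here the bounded process-model Assumption~\ref{assump:bounded-process} and smoothness Assumption~\ref{assump:diff-meas-and-sim} do the work. The well-known genealogical degeneracy of particle smoothers is not an obstruction, since for fixed $N$ the almost-sure limit is still correct as $J\to\infty$; degeneracy would affect only the Monte Carlo rate, not the consistency claim.
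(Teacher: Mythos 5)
Your proposal is correct in outline but takes a genuinely different route from the paper. The paper never computes the limit of $\nabla_\theta\hat\ell^1_J(\theta)$ directly: it fixes $\omega$, uses Assumption~\ref{assump:local-bounded-derivative} to get uniform boundedness of the gradient and Hessian estimates over $J$, hence uniform equicontinuity, extracts uniform convergence via Arzel\`a--Ascoli, and then swaps $\lim_{J\to\infty}$ with $\nabla_\theta$ against the strong consistency $\hat\lik^1_J(\theta)\stackrel{a.s.}{\to}\lik(\theta)$ from Theorem~\ref{thm:mop-targeting}. You instead identify the limit in closed form: a reparameterized Fisher identity obtained by differentiating $\lik(\theta)=\int\prod_n g_n(X_n(U;\theta);\theta)\,q(U)\,dU$ under the integral, followed by a path-space SLLN applied to the genealogical tree carrying tangent processes. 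Both routes confront the same crux---the targeting theorem alone gives convergence to the smoothing expectation of the differentiated functional, not a priori to $\nabla_\theta\ell(\theta)$---but you resolve it by proving that expectation \emph{is} the score, while the paper resolves it abstractly by interchange of limits; your version has the added virtue of explaining structurally why the transition-score terms of \cite{poyiadjis11} are dispensable (absorbed into $dX_n/d\theta$), which the paper only remarks on. Two caveats keep your argument at roughly the paper's level of rigor rather than above it: the path-space, tangent-augmented version of Theorem~\ref{thm:mop-targeting} is not proved in the paper (though it follows by running the appendix lemmas with state $X_{1:n}$ augmented by the sensitivity, since the potential depends only on the $X$-component), and your boundedness step overreaches---Assumptions~\ref{assump:bounded-process} and \ref{assump:diff-meas-and-sim} bound density values and give smoothness but do not bound the simulator Jacobians $\partial_x\psi_n$, $\partial_\theta\psi_n$, so the total derivative $\tfrac{d}{d\theta}\log g_n$ along a trajectory is not controlled by them alone; you need per-trajectory control of the kind Assumption~\ref{assump:local-bounded-derivative} supplies only for the $J$-averaged estimates, which is the same assumption the paper's own proof leans on.
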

\begin{proof}
    Fix $\omega \in \Omega$, and set $\phi = \theta$, where $\theta$ is the point at which we wish to evaluate the gradient. The sequence $(\nabla_\theta \hat\lik_J^1(\theta)(\omega))_{J \in \mathbb{N}}$ is uniformly bounded over all $J$ by Assumption \ref{assump:local-bounded-derivative}. Again by Assumption \ref{assump:local-bounded-derivative}, the second derivative of $\hat\lik_J^1(\theta)(\omega)|_{\theta=\theta'}$ is also uniformly bounded over all $J$ by $H^*$ for almost every $\omega\in \Omega$ and every $\theta'\in \Theta$. So $(\nabla_\theta \hat\lik_J^1(\theta)( \omega))_{J \in \mathbb{N}}$ is uniformly Lipschitz, and therefore uniformly equicontinuous for almost every $\omega \in \Omega$.

    By Arzela-Ascoli, there is a uniformly convergent subsequence. But there is only one subsequential limit, as we can treat the gradient estimate at $\theta$ as a bounded functional of the particles by Assumption \ref{assump:local-bounded-derivative}, allowing us to apply Theorem \ref{thm:mop-targeting} to see that the sequence $(\nabla_\theta \hat\lik_J^1(\theta)(\omega))_{J \in \mathbb{N}}$ converges pointwise for $\theta=\phi$ and almost every $\omega \in \Omega$. So the whole sequence must converge uniformly to $\lim_{J \to \infty} \nabla_\theta \hat\lik_J^1(\theta)(\omega).$ 
    
    With uniform convergence for the derivatives established, we can swap the limit and derivative and obtain, in conjunction with the strong consistency $\hat{\lik}_J^1(\theta)(\omega) \stackrel{a.s.}{\to} \lik(\theta)$ in Theorem \ref{thm:mop-targeting}, that for almost every $\omega \in \Omega$, 
    $\lim_{J \to \infty} \nabla_\theta \hat\lik_J^1(\theta)(\omega) = \nabla_\theta \lim_{J \to \infty} \hat\lik_J^1(\theta)(\omega) = \nabla_\theta \lik(\theta).$
    The result then follows by the continuous mapping theorem. 
\end{proof}

\subsection{MOP-$\alpha$ Error, Bias and Variance}

We now provide a result showing that MOP-$\alpha$ has a desirable bias-variance tradeoff when $0<\alpha<1$.
This is achieved because it combines favorable properties from the low-variance but asymptotically biased estimate of \cite{naesseth18} and the high-variance but consistent estimate of \cite{poyiadjis11}. 
As the bias itself is difficult to analyze, we instead analyze the MSE and variance. 
The below result applies for any $\alpha \in [0,1)$, but not $\alpha=1$, as the gradient estimate when $\alpha=1$ has no forgetting properties. 

\begin{thm}
    \label{thm:mop-biasvar}
    When $\alpha\in (0,1)$ and $\theta=\phi$, define $\psi(\alpha)=(\alpha^k  + \alpha^{k+1} - \alpha)/(1-\alpha)$. 
    There exists an $\epsilon>0$ depending on $\bar{M}, \underbar{M}, \bar{G}, \underbar{G}$ as in \cite{karjalainen23} such that the MSE and variance of MOP-$\alpha$ are:
    \vspace*{-1ex}
    \begin{eqnarray}
        \E\big\|\nabla_\theta\ell(\theta) - \nabla_\theta \hat\ell^\alpha(\theta)\big\|_2^2 
        &\lesssim& \min_{k \leq N} Np \, G'(\theta)^2\left(k^2J^{-1}+(1-\epsilon)^{\floor{k/(c\log(J))}}+k+\psi(\alpha)\right), \label{eq:mop-mse}
        \\
        \Var\big(\nabla_\theta \hat\ell^{\alpha}(\theta)\big) &\lesssim& \min_{k\leq N} Np \, G'(\theta)^2\left(\frac{k^2}{(1-\alpha)^2J} + \frac{\alpha^{k}}{1-\alpha}N\right). \label{eq:mop-variance}
        \end{eqnarray}
\end{thm}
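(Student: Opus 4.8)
The plan is to control the mean-square error through the decomposition $\E\|\nabla_\theta\ell(\theta)-\nabla_\theta\hat\ell^\alpha(\theta)\|_2^2 = \|\E\nabla_\theta\hat\ell^\alpha(\theta)-\nabla_\theta\ell(\theta)\|_2^2 + \Var(\nabla_\theta\hat\ell^\alpha(\theta))$ and to attack the bias and variance pieces separately, optimizing a memory horizon $k\le N$ only at the end. Throughout I work at $\theta=\phi$, so that by Theorem \ref{thm:mop-targeting} the filter is unbiased for the filtering distributions and the discount $\alpha<1$ is the only source of systematic error. The starting point is the per-timestep expansion of the score estimate implied by Theorem \ref{thm:mop-functional-forms} and the recursion \eqref{eq:weighting-scheme}: writing $\nabla_\theta\hat\ell^\alpha(\theta)=\sum_{n=1}^N S_n^\alpha$, each $S_n^\alpha$ is a self-normalized weighted average over the $J$ particles whose dependence on an ancestral step $m\le n$ is attenuated by $\alpha^{\,n-m}$. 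This geometric attenuation is the mechanism that simultaneously introduces bias relative to the consistent MOP-$1$ estimate of Theorem \ref{thm:mop-grad-consistency} and suppresses variance growth, so both bounds will be read off from it.

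First I would isolate the bias by coupling $\nabla_\theta\hat\ell^\alpha(\theta)$ to the MOP-$1$ estimate on the same seed $\omega$ and, for a fixed $k$, splitting each $S_n^\alpha$ into the contribution of the most recent $k$ ancestral steps and a tail. The tail is a geometric series in $\alpha$ whose summed magnitude, bounded coordinatewise by $G'(\theta)$ via Assumption \ref{assump:bounded-measurement}, collapses after summing over $n$ to the terms $\alpha^k/(1-\alpha)$ and $\psi(\alpha)$, while the boundary mismatch between discounted and undiscounted weights over the retained window supplies the remaining $O(k)$ contribution. The residual error between the $k$-truncated estimate and the true score is then governed by the exponential forgetting of the filter--smoother: under the bounded-density Assumptions \ref{assump:bounded-process} and \ref{assump:bounded-measurement}, the strong-mixing stability bound of \cite{karjalainen23} contracts at rate $(1-\epsilon)$ per effective resampling epoch, and because the particle genealogy can only resolve memory at the coarsened scale of $O(\log J)$ resampling steps the number of usable epochs within a window of depth $k$ is $\floor{k/(c\log J)}$, producing the factor $(1-\epsilon)^{\floor{k/(c\log J)}}$. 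To reconcile the squared left-hand side of \eqref{eq:mop-mse} with these first-order error terms, I would invoke the uniform bound $G^*$ from Assumption \ref{assump:local-bounded-derivative}, passing from $\E\|\cdot\|_2^2$ to a first moment of the error at the cost of a constant; the genuinely second-order particle term then enters as $k^2J^{-1}$ and the forgetting and truncation errors enter linearly.

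For the variance I would use a martingale decomposition of $\nabla_\theta\hat\ell^\alpha(\theta)$ across the resampling steps and bound each increment by its conditional variance. Over any window of $k$ steps the accumulated discounted weights $w_{n,j}^{P,\theta}$ are uniformly controlled by the geometric factor $(1-\alpha)^{-1}$, so the self-normalized per-step estimator has conditional variance $O\big(k^2/((1-\alpha)^2J)\big)$, with $G'(\theta)^2$ again carrying the model dependence through Assumption \ref{assump:bounded-measurement}; the part of the estimate that still depends on history beyond depth $k$ contributes at most $\alpha^k N/(1-\alpha)$. Summing over the $N$ timesteps and the $p$ coordinates yields \eqref{eq:mop-variance}. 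Assembling $\|\text{bias}\|^2$ and the variance, absorbing the $(1-\alpha)$ factors and constants into the $\lesssim$, and observing that the resulting bound holds for every $k\le N$, I take the minimum over $k$ to obtain \eqref{eq:mop-mse}.

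I expect the main obstacle to be the forgetting term $(1-\epsilon)^{\floor{k/(c\log J)}}$, since it requires coupling two distinct decay mechanisms: the deterministic geometric discounting in $\alpha$ and the particle system's own stability, which under path degeneracy only forgets at the genealogical scale of $\log J$ resampling steps. Making the constants in \cite{karjalainen23} compatible with the off-parameter discounted weighting of Algorithm \ref{alg:mop}, and choosing $k$ (and implicitly $\alpha$) so that the truncation tail, the particle error $k^2J^{-1}$, and the forgetting term are balanced rather than compounding, is the delicate step; the remaining estimates are then routine geometric-series and conditional-variance calculations.
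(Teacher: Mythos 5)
Your proposal shares the paper's load-bearing skeleton: the lag-$k$ truncated auxiliary estimator (the paper's MOP-$(\alpha,k)$ construct $\hat s_k^\alpha$), geometric-tail bounds producing $\alpha^k/(1-\alpha)$, $\psi(\alpha)$ and the $O(k)$ window-mismatch term, the forgetting bound of \cite{karjalainen23} at the genealogical scale $\floor{k/(c\log J)}$, the per-window particle error $k^2/J$ via their Lemma~2, and the final minimization over $k$. You genuinely diverge in two places. First, you assemble the MSE as $\|\mathrm{bias}\|_2^2+\Var$, whereas the paper explicitly declines to isolate the bias (``the bias itself is difficult to analyze'') and instead bounds the MSE directly through the three-leg triangle decomposition $\nabla_\theta\ell \to \nabla_\theta\hat s_k^1 \to \nabla_\theta\hat s_k^\alpha \to \nabla_\theta\hat\ell^\alpha$, with the first leg further split into a particle-approximation error conditional on a correct filter at time $n-k$ plus a Dobrushin-forgetting error. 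Your route is workable, since the almost-sure squared-gap bounds for the truncation tail and the discounted-versus-undiscounted window mismatch transfer to the bias by Jensen; note, however, that it places $k^2/((1-\alpha)^2J)$ in the MSE rather than the theorem's $k^2/J$, because the paper routes the particle error through the undiscounted $\hat s_k^1$ instead of through \eqref{eq:mop-variance}. Second, for the temporal cross-terms in the variance you propose a martingale-increment decomposition, while the paper converts the Dobrushin contraction into $\alpha$-mixing coefficients (Lemma \ref{lem:dobrushin-implies-alpha-mixing}) and applies Davydov's inequality together with fourth-moment bounds from \cite{karjalainen23}. The martingale route is a legitimate alternative, but your assertion that each increment has conditional variance $O\big(k^2/((1-\alpha)^2J)\big)$ is exactly where the work lives: an increment $\E[\nabla_\theta\hat\ell^\alpha\mid\mathcal{F}_n]-\E[\nabla_\theta\hat\ell^\alpha\mid\mathcal{F}_{n-1}]$ aggregates the influence of step-$n$ resampling on every term within the memory window, and controlling it requires the same conditional $L^2$ stability input (Lemma~2 of \cite{karjalainen23}) that the paper deploys; your sketch presumes it rather than proving it.

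One step you should repair: to explain the linear terms $k$ and $\psi(\alpha)$ inside a squared-error bound, you pass from $\E\|\cdot\|_2^2$ to a first moment ``at the cost of a constant'' using the uniform bound $G^*$ of Assumption \ref{assump:local-bounded-derivative}. That constant is not harmless: $G(\theta)$ dominates $NG'(\theta)$ (see Lemma \ref{lemma:grad_bound}), so this maneuver silently inflates the bound by a factor of order $N$ beyond the theorem's statement. It is also unnecessary: the almost-sure gaps are already linear in $k$ and $\psi(\alpha)$ at the level of squared norms, because the weighted Cauchy--Schwarz step leaves a total geometric mass of at most $(1-\alpha)^{-1}$ (respectively $k$) rather than its square --- this is how the paper obtains the per-timestep bounds $p\,G'(\theta)^2(\alpha^k-\alpha^n)/(1-\alpha)$ and $p\,G'(\theta)^2\big(k-\alpha(1-\alpha^k)/(1-\alpha)\big)$. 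Replace the $G^*$ passage by these direct squared-norm bounds and your decomposition closes.
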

We defer the proof to \arxiv{Appendix~\ref{appendix:biasvar}}{the supplementary material}, but provide a brief outline here. 
The variance bound can be reduced to the approximation error between MOP-$\alpha$ and a variant called MOP-$(\alpha,k)$ where the discounted weights are truncated at lag $k$. 
The discount factor, $\alpha$, ensures strong mixing.
The covariance of the derivative of MOP-$(\alpha,k)$ can be controlled with Davydov's inequality and a standard $L^p$ error bound for the particle filter. 
The MSE bound considers the error between the score and MOP-$(1,k)$, and the error between MOP-$(1,k)$ and MOP-$\alpha$. 
The latter can be shown to be $\tilde{O}\big(Np \, G'(\theta)^2(k+\psi(\alpha))\big)$, and the former can be controlled with a result on the the forgetting of the particle filter from \cite{karjalainen23} and the very same $L^p$ error bound mentioned above. 

\paragraph{Interpretation:} Theorem \ref{thm:mop-biasvar} provides theoretical understanding of the results observed in Figure \ref{fig:biasvar} that show the empirically favorable bias-variance tradeoff enjoyed by MOP-$\alpha$.
The first term in the MSE bound in Equation \ref{eq:mop-mse} corresponds to the error of the particle approximation, the second mixing error, and the third and fourth the error between the MOP-$(1,k)$ estimate and the MOP-$\alpha$ estimate. 
As $\alpha$ goes to $1$, choosing $k$ appropriately, the first, third and fourth term (corresponding to the variance) increases, while the second term (corresponding to the bias) decreases. 
Likewise, the first term in the variance bound in Equation \ref{eq:mop-variance} corresponds to the error of the particle approximation, while the second corresponds to mixing error. 
As $\alpha$ goes to $1$, the variance increases. 

\paragraph{Bias-Variance Tradeoff:} We show in \arxiv{Appendix~\ref{appendix:biasvar}}{the supplementary material} that the variance of MOP-$0$ is $\tilde{O}\big( Np \, G'(\theta)^2\big/J \big)$.
Previously, \cite{poyiadjis11} established that the variance of MOP-$1$ is $\tilde{O}(N^4/J)$, ignoring factors of $p$ and $G'$. 
Combining these results with Theorem \ref{thm:mop-biasvar} shows that MOP-$\alpha$ interpolates between MOP-$0$ and MOP-$1$, as $N \leq Nk^2 \leq N^4$, with a phase transition as soon as $\alpha<1$. 
The phase transition arises as even though the particle filter has forgetting properties \cite{karjalainen23}, the resulting derivative estimate of \cite{poyiadjis11} does not, and we require both for the variance reduction. 

In contrast, as $Nk^2 \leq N^4$ and $\alpha, k$ can be as large as desired to balance the impact of the last three terms, MOP-$\alpha$ achieves a lower MSE than MOP-$1$ does. We also show that MOP-$0$ achieves a MSE of $\tilde{O}\big(NpG'(\theta)^2(J^{-1}+(1-\epsilon)^{\lfloor1/c\log(J)\rfloor})\big)$, where the second term corresponds to uncontrolled mixing error. Unlike MOP-$\alpha$, MOP-$0$ has no opportunity to tune $\alpha$ and $k$ to reduce said mixing error, leading to uncontrolled asymptotic bias.

\section{Computational Efficiency}

MOP-$\alpha$ and IFAD are fast algorithms, both in theory and practice. 
In line with the cheap gradient principle of \cite{kakade2019provably}, getting a gradient estimate from MOP-$\alpha$ takes no more than 6 times that of the runtime of the particle filter. In our simulations, MOP-$\alpha$ took 3.75 times the runtime of the particle filter. 
MOP-$\alpha$ and IFAD therefore share the same $O(NJ)$ time complexity as the particle filter, unlike the $O(NJ^2)$ complexity of \cite{corenflos21} and Algorithm 2 in \cite{poyiadjis11, scibior21}.


Our implementation of the particle filter, simulator, and MOP-$\alpha$ in \texttt{JAX} \cite{jax} enabled us to take advantage of just-in-time compilation and GPU acceleration, even with a simulator written in Python. This led to a 16x speedup (379ms vs 6.29s on a Intel i9-13900K CPU and NVIDIA RTX3090 GPU) over the CPU-only implementation of the particle filter (with a simulator written in C++) in the \texttt{pomp} package of \cite{king16}.


\section{Discussion}

If the simulator is discontinuous in $\theta$, MOP-$\alpha$ no longer applies.
We are working on a variation of MOP-$\alpha$ applicable to this case.
Specifically, differentiable transition density ratios can be used instead of a differentiable simulator.
The off-parameter treatment of the measurement model is extended to an off-parameter simulation for the dynamic process.
In that case, one requires access to the transition densities, or at least their ratios. 
The discounting parameter, $\alpha$, can be incorporated as for MOP-$\alpha$.
This algorithm may be better suited to large discrete state spaces than either MOP-$\alpha$ or the Baum-Welch algorithm.
Additionally, our gradient estimate can be used for variational inference to approximate the posterior distribution over latent states \cite{naesseth18} and parameters. 

Discounting the weights by some $\alpha \in [0,1]$ is not the only way to interpolate between the estimators of \cite{naesseth18} and \cite{poyiadjis11}. 
As the proof of Theorem \ref{thm:mop-biasvar} implies, we can also truncate the weights at a fixed lag, corresponding to the MOP-$(1,k)$ estimate mentioned. 
The analysis is similar, with comparable rates but with a slightly less convenient implementation. 

Practical likelihood-based data analysis involves many likelihood optimizations \cite{king08,blake14,pons-salort18,subramanian21,fox22,drake23}.
In particular, profile likelihood confidence intervals require a sequence of optimizations.
Additionally, a careful scientist should consider many model variations, to see whether the conclusions of the study are sensitive to alternative model specifications.
Coding model variations is relatively simple when using plug-and-play inference methodology.
However, assessing the scientific potential of these variations requires likelihood optimization.
If many rounds of laborious searching are required to attain the maximum (as arises in previous methods) that has practical consequences for the rate at which scientists can evaluate hypotheses.
We have demonstrated, for the first time, a plug-and-play maximum likelihood approach which can directly maximize the likelihood for the complex benchmark model of \cite{king08}, requiring some Monte Carlo replication but not prolonged sequences of explorations.
We anticipate that this will promote scientific advances in all domains where complex POMP models arise.

\arxiv{
\section*{Acknowledgments}
This research was supported by National Science Foundation grant DMS-1761603. 
We thank Nicolas Chopin for helpful communications regarding the strong law of large numbers for off-parameter resampled particle filters. We also thank Arnaud Doucet for helpful discussion regarding the manuscript.
}{
\showmatmethods{} 
\acknow{Please include your acknowledgments here, set in a single paragraph. Please do not include any acknowledgments in the Supporting Information, or anywhere else in the manuscript.}
\showacknow{} 
}


\bibliography{bib-ifad}

\arxiv{
\appendix

\section{MOP-$\alpha$ Functional Forms}
\label{appendix:functional}

Theorem \ref{thm:mop-functional-forms} follows immediately as a consequence of the following results, Lemmas \ref{lem:mop-1-formula} and \ref{lem:mop-0-formula}.

\begin{lem}
    \label{lem:mop-1-formula}
    Write $\nabla_\theta \hat\ell^\alpha(\theta)$ for the gradient estimate yielded by MOP-$\alpha$ when $\theta=\phi$. Consider the case where we use the after-resampling conditional likelihood estimate so that $\hat\lik(\theta) = \prod_{n=1}^N L_n^{A, \theta, \alpha}$. When $\alpha=1$,
    \begin{equation}
        \nabla_\theta \hat{\ell}^1(\theta) 
        = \frac{1}{J}\sum_{j=1}^J \nabla_\theta \log f_{Y_{1:N}|X_{1:N}}\left(y_{1:N}^* | x_{1:n,j}^{A, F,\theta}\right),
    \end{equation}
    yielding the estimator of \cite{poyiadjis11, scibior21} with the bootstrap filter.
\end{lem}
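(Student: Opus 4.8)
The plan is to exploit the special structure of the weight recursion at $\alpha=1$ to telescope the product $\prod_{n=1}^N L_n^{A,\theta,1}$, discard the factors that do not depend on $\theta$, and differentiate what remains. First I would observe that at $\alpha=1$ the discounting step collapses to $w_{n,j}^{P,\theta}=w_{n-1,j}^{F,\theta}$, so that $\sum_{j} w_{n,j}^{P,\theta}=\sum_{j} w_{n-1,j}^{F,\theta}$. Writing $W_n^\theta := \sum_{j=1}^J w_{n,j}^{F,\theta}$ and recalling $L_n^{A,\theta,1}=L_n^\phi\cdot\big(\sum_j w_{n,j}^{F,\theta}\big)\big/\big(\sum_j w_{n,j}^{P,\theta}\big)$, this gives $L_n^{A,\theta,1}=L_n^\phi\cdot W_n^\theta/W_{n-1}^\theta$, and the product telescopes:
$$\hat\lik(\theta)=\prod_{n=1}^N L_n^{A,\theta,1}=\Big(\prod_{n=1}^N L_n^\phi\Big)\frac{W_N^\theta}{W_0^\theta},$$
where $w_{0,j}^{F,\theta}=1$ forces $W_0^\theta=J$.

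Next I would take logarithms. Since $L_n^\phi=\frac{1}{J}\sum_m g_{n,m}^\phi$ depends only on the baseline parameter $\phi$ and the fixed seed $\omega$, not on $\theta$, its contribution vanishes under $\nabla_\theta$. Crucially, the resampling indices $k_{1:J}$ are drawn from the $\phi$-filter and are therefore constant in $\theta$, so no derivative terms arise from the resampling step. Hence
$$\nabla_\theta\hat\ell^1(\theta)=\nabla_\theta\log W_N^\theta=\frac{\sum_{j=1}^J\nabla_\theta w_{N,j}^{F,\theta}}{\sum_{j=1}^J w_{N,j}^{F,\theta}}.$$

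Then I would unwind the weight recursion. Iterating $w_{n,j}^{F,\theta}=w_{n-1,k_j}^{F,\theta}\,g_{n,k_j}^\theta\big/g_{n,k_j}^\phi$ along the ancestral trajectory of the final particle $j$ produces a product of measurement-density ratios, which, by conditional independence of the measurements given the latent path, I recognize as
$$w_{N,j}^{F,\theta}=\frac{f_{Y_{1:N}|X_{1:N}}\big(y_{1:N}^*\mid x_{1:N,j}^{A,F,\theta};\theta\big)}{f_{Y_{1:N}|X_{1:N}}\big(y_{1:N}^*\mid x_{1:N,j}^{A,F,\phi};\phi\big)}.$$
The denominator is again $\theta$-free, so the log-derivative identity gives $\nabla_\theta w_{N,j}^{F,\theta}=w_{N,j}^{F,\theta}\,\nabla_\theta\log f_{Y_{1:N}|X_{1:N}}\big(y_{1:N}^*\mid x_{1:N,j}^{A,F,\theta};\theta\big)$. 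Evaluating at $\theta=\phi$ makes every ratio $g_{n,k_j}^\theta\big/g_{n,k_j}^\phi=1$, so $w_{N,j}^{F,\theta}=1$ and $\sum_j w_{N,j}^{F,\theta}=J$, which yields the claimed
$$\nabla_\theta\hat\ell^1(\theta)=\frac1J\sum_{j=1}^J\nabla_\theta\log f_{Y_{1:N}|X_{1:N}}\big(y_{1:N}^*\mid x_{1:N,j}^{A,F,\theta}\big).$$

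I expect the main obstacle to be the careful bookkeeping of the ancestral indices when unwinding the recursion: tracking which pre-resampling particle each surviving particle descends from at every time step, and verifying that the accumulated product of density ratios is exactly the joint measurement density along the ancestral path $x_{1:N,j}^{A,F,\theta}$. The telescoping and the differentiation are then routine once the off-parameter construction guarantees that the resampling indices are held fixed in $\theta$, which is precisely what licenses differentiation without any resampling-induced terms.
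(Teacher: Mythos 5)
Your proposal is correct and takes essentially the same route as the paper's proof: the identical telescoping of the after-resampling factors $L_n^{A,\theta,1}$ at $\alpha=1$ down to $\big(\prod_n L_n^\phi\big)\,W_N^\theta/J$, the same unwinding of the weight recursion into a product of measurement density ratios along each particle's ancestral trajectory, and the same final evaluation at $\theta=\phi$ where the weights collapse to one. The only (harmless) cosmetic difference is that you carry the denominator $\sum_j w_{N,j}^{F,\theta}$ explicitly until the end, whereas the paper invokes $w_{N,j}^{F,\theta}=1$ implicitly in its log-derivative step.
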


\begin{proof}
    Consider the case of MOP-$\alpha$ when $\alpha=1$ and $\theta=\phi$. We then have a nice telescoping product property for the after-resampling likelihood estimate:
\begin{equation}
    \hat{\lik}^1(\theta) := \prod_{n=1}^N L_n^{A, \theta, \alpha} = \prod_{n=1}^N L_n^\phi \cdot \frac{\sum_{j=1}^J w_{n,j}^{F,\theta}}{\sum_{j=1}^J w_{n,j}^{P,\theta}}= \prod_{n=1}^N L_n^\phi \cdot \frac{\sum_{j=1}^J w_{n,j}^{F,\theta}}{\sum_{j=1}^J w_{n-1,j}^{F,\theta}} = \left(\frac{1}{J}\sum_{j=1}^J w_{N,j}^{F,\theta}\right) \prod_{n=1}^N L_n^\phi,
\end{equation}
where the third equality follows from the choice of $\alpha=1$, and the fourth equality is the resulting telescoping property. 
The log-derivative identity lets us decompose the score estimate as
\begin{equation}\label{eq:log-derivative-identity}
\nabla_\theta \hat\ell^1(\theta) = \frac{\nabla_\theta \hat\lik^1(\theta)}{\hat\lik^1(\theta)} = \frac{\nabla_\theta\left(\frac{1}{J}\sum_{j=1}^J w_{N,j}^{F,\theta}\right) \prod_{n=1}^N L_n^\phi}{\prod_{n=1}^N L_n^\phi} =  \frac{1}{J}\sum_{j=1}^J \nabla_\theta w_{N,j}^{F,\theta}.
\end{equation}
From (\ref{eq:log-derivative-identity}), we see that the derivative of the log-likelihood estimate is
\begin{equation}\label{eq:eq:log-derivative2}
    \nabla_\theta \hat{\ell}^1(\theta) := \frac{1}{J}\sum_{j=1}^J \nabla_\theta w_{N,j}^{F,\theta}.
\end{equation}
We proceed to decompose (\ref{eq:eq:log-derivative2}).
First, observe that as $\alpha=1$,
\begin{equation}
w_{n,j}^{P,\theta} = w_{n-1,j}^{F,\theta}\frac{g_{n,j}^\theta}{g_{n,j}^\phi} = \prod_{i=1}^n \frac{g_{i,j}^{A,P,\theta}}{g_{i,j}^{A,P,\phi}},
\end{equation}
where we use the $(\cdot)^A$ superscript to denote the ancestral trajectory of the $j$-th prediction or filtering particle at timestep $n$. 
Note that this quantity is the cumulative product of measurement density ratios over the ancestral trajectory of the $j$-th prediction particle at timestep $n$.
We then use the log-derivative identity again, yielding the following expression for the gradient of the log-weights as the sum of the log measurement densities over the ancestral trajectory:
\begin{eqnarray}
 \frac{\nabla_\theta w_{n,j}^{P,\theta}}{w_{n,j}^{P,\theta}} = \nabla_\theta \log w_{n,j}^{P,\theta} &=& \nabla_\theta \log \left(\prod_{i=1}^n \frac{g_{i,j}^{A,P,\theta}}{g_{i,j}^{A,P,\phi}}\right) 
 \\
 &=& \nabla_\theta \sum_{i=1}^n \left(\log g_{i,j}^{A,P,\theta} - \log g_{i,j}^{A,P,\phi}\right)
 \\
 &=& \sum_{i=1}^n \nabla_\theta \log g_{i,j}^{A,P,\theta}.
\end{eqnarray}
This is equal to the gradient of the logarithm of the conditional density of the observed measurements given the ancestral trajectory of the $j$-th prediction particle up to timestep $n$:
\begin{eqnarray}  
\nabla_\theta \sum_{n=1}^N \log g_{n,j}^{A,\theta} &=& \nabla_\theta \log\left(\prod_{n=1}^N g_{n,j}^{A,P,\theta}\right) 
\\
&=&  \nabla_\theta \log\left(\prod_{n=1}^N f_{Y_n|X_n}\left(y_n^* | x_{n,j}^{A, P,\theta}\right)\right)
\\
&=& \nabla_\theta \log f_{Y_{1:N}|X_{1:N}}\left(y_{1:N}^* | x_{1:n,j}^{A, P,\theta}\right).
\end{eqnarray}
Multiplying both sides of the expression by $w_{N,j}^{P,\theta} $ yields an expression for the gradient of the weights at timestep $N$:
\begin{equation}
\nabla_\theta w_{N,j}^{P,\theta} = w_{N,j}^{P,\theta} \sum_{n=1}^N \nabla_\theta \log g_{n,j}^{A,P,\theta} = w_{N,j}^{P,\theta} \nabla_\theta \log f_{Y_{1:N}|X_{1:N}}\left(y_{1:N}^* | x_{1:n,j}^{A, P,\theta}\right).    
\end{equation}
Substituting the above identity into the log-likelihood decomposition obtained earlier in Equation \ref{eq:log-derivative-identity} yields
\begin{equation}
    \nabla_\theta \hat{\ell}^1(\theta) := \frac{1}{J}\sum_{j=1}^J \nabla_\theta w_{N,j}^{F,\theta} =\frac{1}{J}\sum_{j=1}^J \nabla_\theta w_{N,k_j}^{P,\theta} = \frac{1}{J}\sum_{j=1}^J w_{N,k_j}^{P,\theta} \nabla_\theta \log f_{Y_{1:N}|X_{1:N}}\left(y_{1:N}^* | x_{1:n,k_j}^{A, P,\theta}\right).
\end{equation}
Finally, observing that $\theta=\phi$ implies $w_{N,j}^{F,\theta}=1$, we obtain 
\begin{equation}
    \nabla_\theta \hat{\ell}^1(\theta) := \frac{1}{J}\sum_{j=1}^J \nabla_\theta \log f_{Y_{1:N}|X_{1:N}}\left(y_{1:N}^* | x_{1:n,j}^{A, F,\theta}\right).
\end{equation}
This yields the gradient estimators of \cite{poyiadjis11, scibior21} when applied to the bootstrap filter. 
\end{proof}

Note that the variance of the MOP-$\alpha$ log-likelihood estimate scales poorly with $N$ the moment $\theta\neq\phi$. 
This can be seen by observing that
\begin{equation}
    \nabla_\theta \hat{\ell}^1(\theta) := \frac{1}{J}\sum_{j=1}^J \nabla_\theta w_{N,j}^{F,\theta} =\frac{1}{J}\sum_{j=1}^J \nabla_\theta w_{N,k_j}^{P,\theta} = \frac{1}{J}\sum_{j=1}^J w_{N,k_j}^{P,\theta} \nabla_\theta \log f_{Y_{1:N}|X_{1:N}}\left(y_{1:N}^* | x_{1:n,k_j}^{A, P,\theta}\right).
\end{equation}
When $\theta\neq\phi$, we see that $w_{N,k_j}^{P,\theta} = O(c^N)$. 
When $\theta=\phi$, this is a special case of the \cite{poyiadjis11} estimator, which has $O(N^4)$ variance by a property of functionals of the particle filter \cite{delMoral03}. 

\begin{lem}
\label{lem:mop-0-formula}

 Write $\nabla_\theta \hat\ell^\alpha(\theta)$ for the gradient estimate yielded by MOP-$\alpha$ when $\theta=\phi$. Consider the case where we use the after-resampling conditional likelihood estimate so that $\hat\lik(\theta) = \prod_{n=1}^N L_n^{A, \theta, \alpha}$. When $\alpha=0$,
    \begin{equation}
        \nabla_\theta \hat\ell^0(\theta) 
        = \frac{1}{J} \sum_{n=1}^N \sum_{j=1}^J \nabla_\theta \log\left(f_{Y_n|X_{n}}(y_n^*|x_{n,j}^{F, \theta}; \theta)\right),
    \end{equation}
    yielding the estimate of \cite{naesseth18} when applied to the bootstrap filter. 
\end{lem}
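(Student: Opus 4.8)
The plan is to specialize the MOP-$\alpha$ recursion to $\alpha=0$, differentiate the resulting after-resampling log-likelihood estimate, and exploit the fact that the discounted prediction weights collapse. First I would set $\alpha=0$ in the weighting scheme (\ref{eq:weighting-scheme}): since $w_{n,j}^{P,\theta}=(w_{n-1,j}^{F,\theta})^0=1$ for every $n$ and $j$, the prediction weights carry no memory, and the filtering weights reduce to $w_{n,j}^{F,\theta}=g_{n,k_j}^{\theta}\big/g_{n,k_j}^{\phi}$. Consequently $\sum_{j=1}^J w_{n,j}^{P,\theta}=J$, so the after-resampling conditional likelihood in (\ref{eq:mop-conditional-likelihood}) becomes $L_n^{A,\theta,0}=L_n^{\phi}\cdot\frac{1}{J}\sum_{j=1}^J g_{n,k_j}^{\theta}\big/g_{n,k_j}^{\phi}$.

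Next I would take logarithms and write $\hat{\ell}^0(\theta)=\sum_{n=1}^N\big(\log L_n^{\phi}+\log S_n(\theta)\big)$, where $S_n(\theta):=\frac{1}{J}\sum_{j=1}^J g_{n,k_j}^{\theta}\big/g_{n,k_j}^{\phi}$. The crucial structural observation is that every quantity carrying a $\phi$ superscript—namely $L_n^{\phi}$, the densities $g_{n,m}^{\phi}$, and the resampling indices $k_{1:J}$—is generated entirely by the first pass at the fixed baseline $\phi$ and is therefore constant under $\nabla_\theta$; this is precisely the role played by the \texttt{stop\_gradient()} operation in the AD implementation. Hence $\nabla_\theta\log L_n^{\phi}=0$, and only the ratio term $S_n$ contributes to the score.

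I would then apply the log-derivative identity to $S_n$ and evaluate at $\theta=\phi$. At $\theta=\phi$ each ratio $g_{n,k_j}^{\theta}\big/g_{n,k_j}^{\phi}$ equals $1$, so $S_n(\phi)=1$; moreover $\nabla_\theta g_{n,k_j}^{\theta}\big|_{\theta=\phi}=g_{n,k_j}^{\phi}\,\nabla_\theta\log g_{n,k_j}^{\theta}\big|_{\theta=\phi}$, which cancels the denominator $g_{n,k_j}^{\phi}$. This yields $\nabla_\theta\log S_n(\theta)\big|_{\theta=\phi}=\frac{1}{J}\sum_{j=1}^J\nabla_\theta\log g_{n,k_j}^{\theta}$. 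Finally I would invoke the resampling identity $X_{n,j}^{F,\theta}=X_{n,k_j}^{P,\theta}$ to rewrite $g_{n,k_j}^{\theta}=f_{Y_n|X_n}(y_n^*|x_{n,j}^{F,\theta};\theta)$, so that summing over $n$ produces $\frac{1}{J}\sum_{n=1}^N\sum_{j=1}^J\nabla_\theta\log f_{Y_n|X_n}(y_n^*|x_{n,j}^{F,\theta};\theta)$, which is exactly the claimed estimator of \cite{naesseth18}.

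The main obstacle is not the calculus but the bookkeeping of what is held fixed under differentiation: one must argue rigorously that the resampling draws $k_{1:J}$ and all $\phi$-indexed densities are $\theta$-independent, so that their derivatives vanish and $S_n(\phi)=1$, and that evaluating the derivative at the point $\theta=\phi$ is legitimate. Given Assumption~\ref{assump:diff-meas-and-sim} (differentiability of the measurement density in $\theta$) together with the construction that fixes $\omega$ and $\phi$ across both passes, these steps are routine, and unlike the consistency results no limiting argument is needed, since the identity holds exactly for every finite $J$.
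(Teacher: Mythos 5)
Your proposal is correct and follows essentially the same route as the paper's proof: specialize the weighting scheme at $\alpha=0$ so that $w_{n,j}^{P,\theta}=1$ and $L_n^{A,\theta,0}=L_n^{\phi}\cdot\frac{1}{J}\sum_{j} g^{\theta}_{n,k_j}\big/g^{\phi}_{n,k_j}$, then apply the log-derivative identity twice together with the fact that the ratio sum equals $J$ at $\theta=\phi$. Your explicit tracking of the resampling indices $k_j$ and of which $\phi$-indexed quantities are frozen under $\nabla_\theta$ is slightly more careful bookkeeping than the paper's (which writes the ratios $s_{n,j}$ at prediction particles and passes to filtering particles implicitly), but the decomposition and all key steps coincide.
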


\begin{proof}
First, write $$s_{n,j} = \frac{f_{Y_n|X_n}(y_n^*|x_{n,j}^{P, \theta})}{f_{Y_n|X_n}(y_n^*|x_{n,j}^{P, \phi})}$$
as shorthand for the measurement density ratios. 
Observe that, when $\alpha=0,$, the likelihood estimate becomes
\begin{eqnarray}
    \hat{\lik}^0(\theta) := \prod_{n=1}^N L_n^{A, \theta, \alpha} &=& \prod_{n=1}^N L_n^\phi \cdot \frac{\sum_{j=1}^J w_{n,j}^{F,\theta}}{\sum_{j=1}^J w_{n,j}^{P,\theta}} 
    \\
    &=& \prod_{n=1}^N L_n^\phi \cdot \frac{1}{J}\sum_{j=1}^J s_{n,j} 
    \\
    &=& \prod_{n=1}^N L_n^\phi \cdot \frac{1}{J}\sum_{j=1}^J \frac{f_{Y_n|X_n}(y_n^*|x_{n,j}^{P, \theta})}{f_{Y_n|X_n}(y_n^*|x_{n,j}^{P, \phi})}.
\end{eqnarray}
We lose the nice telescoping property observed in the MOP-$1$ case, but this expression still yields something useful. 
This is because its gradient when $\theta=\phi$ is therefore 
\begin{align}
    \nabla_\theta \hat{\ell}^0(\theta) &:= \sum_{n=1}^N \nabla_\theta \log\left(L_n^\phi \frac{1}{J} \sum_{j=1}^J s_{n,j}\right) \\
    &= \sum_{n=1}^N \frac{\nabla_\theta \left(L_n^\phi \frac{1}{J} \sum_{j=1}^J s_{n,j}\right)}{\left(L_n^\phi \frac{1}{J} \sum_{j=1}^J s_{n,j}\right)} \\
    &= \sum_{n=1}^N \frac{\sum_{j=1}^J \nabla_\theta s_{n,j}}{\sum_{j=1}^J s_{n,j}} \\
    &= \sum_{n=1}^N \frac{1}{J} \sum_{j=1}^J \frac{\nabla_\theta f_{Y_n|X_{n}}(y_n^*|x_{n,j}^{F, \theta}; \theta)}{f_{Y_n|X_{n}}(y_n^*|x_{n,j}^{F, \phi}; \phi)} \\
    &= \frac{1}{J} \sum_{n=1}^N \sum_{j=1}^J \nabla_\theta \log\left(f_{Y_n|X_{n}}(y_n^*|x_{n,j}^{F, \theta}; \theta)\right),
\end{align}
where we use the log-derivative trick in the second equality, observe that $\sum_{j=1}^J s_{n,j} = J$ when $\theta=\phi$ in the fourth equality, and use the log-derivative trick again while noting that $\theta=\phi$ in the fifth equality. This yields the desired result.
\end{proof}

\section{Optimization Convergence Analysis}
\label{appendix:convergence}

The analysis in this section roughly follows the analysis in \cite{mahoney16}, except with the caveat that none of the matrix concentration bounds they use apply here as the particles are dependent. We instead use the concentration inequality from \cite{delMoral11} to bound the gradient and Hessian estimates. In this section, we fix $\omega \in \Omega$ only within each filtering iteration, evaluate Algorithm \ref{alg:mop} at $\theta=\phi$, and analyze Algorithm \ref{alg:ifad} post-iterated filtering.

The convergence analysis in Theorem \ref{thm:mop-convergence} is limited to the case where $-\ell$ is $\gamma$-strongly convex. Though it is true that in a neighborhood of the optimum local asymptotic normality holds and the log-likelihood is strongly convex in this neighborhood, in practice likelihood surfaces for POMPs are often highly nonconvex globally. The convergence to an optimum, local or global, must therefore be sensitive to initialization.  

\subsection{Bounding the Gradient}


\begin{lem}[Concentration of Measure for Gradient Estimate]
    \label{lemma:grad_bound}
    Consider the gradient estimate obtained by MOP-1, which we know by Theorem \ref{thm:mop-grad-consistency} is consistent for the score, where $\theta = \phi$. For $\|\nabla_\theta \hat{\loglik}^1(\theta) - \nabla_\theta \loglik(\theta)\|_2$ to be bounded by $\epsilon$ with probability $1-\delta$, we require
    \begin{align}
    J > \max\left\{2G(\theta)\frac{r_N\sqrt{p}}{\epsilon}\left(1+h^{-1}\left(\log\left(\frac{2p}{\delta}\right)\right)\right), 8G(\theta)^2\beta_N^2\frac{p\log(2p/\delta)}{\epsilon^2}\right\},
    \end{align}
    where $NG'(\theta) \leq G(\theta)$ are defined in Assumptions \ref{assump:bounded-measurement} and \ref{assump:local-bounded-derivative}, $h(t) = \frac{1}{2}(t - \log(1+t))$, and $\beta_N$ and $r_N$ are two additional finite model-specific constants that do not depend on $J$, but do depend on $N$ and $p$, as defined in \cite{delMoral11}. 
Equivalently, with probability at least $1-\delta$, it holds that
    \begin{align}
        \|\nabla_\theta \hat\ell^1(\theta) - \nabla_\theta \ell(\theta)\|_2 \leq G(\theta)\left(\frac{r_N\sqrt{p}}{J}(1+h^{-1}(\log(2p/\delta))) + \sqrt{\frac{2p\log(2p/\delta)}{J}}\beta_N\right).
    \end{align}
\end{lem}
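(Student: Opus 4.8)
The plan is to recognize the MOP-$1$ gradient estimate as a particle-filter estimate of a bounded additive path functional, and then to control its deviation from the true score with the interacting-particle concentration inequality of \cite{delMoral11}, applied coordinatewise and assembled by a union bound.

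First I would recall from Theorem \ref{thm:mop-functional-forms} that, at $\theta=\phi$,
$$\nabla_\theta \hat\ell^1(\theta) = \frac{1}{J}\sum_{j=1}^J \nabla_\theta \log f_{Y_{1:N}|X_{1:N}}\left(y_{1:N}^*\mid x_{1:N,j}^{A,F,\theta}\right) = \frac{1}{J}\sum_{j=1}^J \sum_{n=1}^N \nabla_\theta \log f_{Y_n|X_n}\left(y_n^*\mid x_{n,j}^{A,F,\theta};\theta\right),$$
so each coordinate of the estimate is the empirical average, over surviving ancestral trajectories, of a bounded additive functional of the latent path. By Assumption \ref{assump:bounded-measurement} each single-step summand has sup-norm at most $G'(\theta)$, so the full path functional is bounded in each coordinate by $NG'(\theta)\leq G(\theta)$ via Assumption \ref{assump:local-bounded-derivative}, and by Theorem \ref{thm:mop-grad-consistency} the average converges to the true score $\nabla_\theta\ell(\theta)$.

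Next I would fix a coordinate $i\in\{1,\dots,p\}$ and apply the \cite{delMoral11} concentration inequality for normalized additive functionals of the interacting particle system to the rescaled scalar functional with sup-norm at most one. That inequality gives, for any $x\geq 0$, with probability at least $1-e^{-x}$,
$$\left|\partial_{\theta_i}\hat\ell^1(\theta) - \partial_{\theta_i}\ell(\theta)\right| \leq G(\theta)\left(\frac{r_N}{J}\bigl(1 + h^{-1}(x)\bigr) + \beta_N\sqrt{\frac{2x}{J}}\right),$$
with $h(t)=\tfrac12(t-\log(1+t))$ and $r_N,\beta_N$ the model-dependent constants of \cite{delMoral11}. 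Choosing $x=\log(2p/\delta)$ and taking a union bound over the $p$ coordinates and the two signs of each deviation (i.e.\ $2p$ events of failure probability at most $\delta/(2p)$ each) controls all coordinates simultaneously with probability at least $1-\delta$. Passing from the coordinatewise bound $B$ to the Euclidean norm via $\|\cdot\|_2\leq\sqrt{p}\,B$ distributes $\sqrt{p}$ across both terms and produces exactly the stated high-probability bound on $\|\nabla_\theta\hat\ell^1(\theta)-\nabla_\theta\ell(\theta)\|_2$.

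Finally I would invert this bound to recover the sufficient sample size. Splitting the target error $\epsilon$ evenly between the two terms, forcing the $O(1/J)$ term below $\epsilon/2$ yields $J\geq 2G(\theta)\tfrac{r_N\sqrt p}{\epsilon}\bigl(1+h^{-1}(\log(2p/\delta))\bigr)$, while forcing the $O(1/\sqrt J)$ term below $\epsilon/2$ yields $J\geq 8G(\theta)^2\beta_N^2\tfrac{p\log(2p/\delta)}{\epsilon^2}$; the maximum of these two is the claimed requirement, so the two displayed statements are equivalent. The main obstacle is that the particles are \emph{not} independent—they interact through resampling—so neither Hoeffding nor Bernstein applies directly, and the entire argument hinges on having the specialized concentration inequality of \cite{delMoral11} for additive functionals along ancestral lineages, together with the boundedness of the score functional from Assumptions \ref{assump:bounded-measurement} and \ref{assump:local-bounded-derivative}. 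A secondary subtlety is that the time horizon $N$ is absorbed into the model-specific constants $r_N,\beta_N$ (the functional being a sum of $N$ per-step terms), which is why the horizon dependence is carried implicitly rather than made explicit.
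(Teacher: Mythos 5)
Your proposal is correct and follows essentially the same route as the paper: both express the MOP-$1$ gradient via Theorem~\ref{thm:mop-functional-forms} as a particle average of bounded additive score functionals (bounded coordinatewise by $NG'(\theta)\leq G(\theta)$ via Assumptions~\ref{assump:bounded-measurement} and~\ref{assump:local-bounded-derivative}), center it at the true score using the consistency from Theorem~\ref{thm:mop-grad-consistency}, apply the Del Moral--Rio concentration inequality of \cite{delMoral11} (needed precisely because resampling destroys the independence that Hoeffding or Bernstein would require), union bound, scale by $\sqrt{p}$ to pass to the $2$-norm, and invert by splitting $\epsilon$ between the $O(1/J)$ and $O(1/\sqrt{J})$ terms. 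The only difference is bookkeeping in the union bound: you apply the inequality once per coordinate to the whole time-sum ($2p$ failure events, giving $\log(2p/\delta)$ directly and matching the stated bound), whereas the paper applies it per timestep and per coordinate ($2Np$ events, which strictly yields $t\leq\log(2Np/\delta)$) before summing over $n$ --- so your organization is, if anything, the cleaner match to the lemma as stated.
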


\paragraph{Remark:} According to \cite{delMoral11}, under some regularity conditions, $r_N$ and $\beta_N$ are linear in the trajectory length $N$. This corresponds to the finding by \cite{poyiadjis11} that the variance of the estimate is at least quadratic in the trajectory length, and their remark that the result of \cite{delMoral03} establishes that the $L_p$ error is bounded by $O(N^2J^{-1/2})$ (equivalently, the variance is bounded by $O(N^4J^{-1})$) after accounting for the sum over timesteps. The MOP-$1$ variance upper bound is therefore in fact $O(N^4)$, in contrast to the MOP-$\alpha$, where $\alpha<1$, upper bound of $O(N)$.

\begin{proof}

We will seek to use the concentration inequality of \cite{delMoral11} to bound the deviation of the gradient estimate from the gradient of the negative log-likelihood in the sup norm with a union bound. Fix $\theta = \phi$.
From the decomposition in the proof of Lemma \ref{lem:mop-1-formula}, as $w_{N, j}^{F, \theta}=1$ when $\theta=\phi$, we have that
\begin{equation}
\nabla_\theta \hat{\ell}(\theta):=\frac{1}{J} \sum_{j=1}^J \nabla_\theta w_{N, j}^{F, \theta}=\frac{1}{J} \sum_{j=1}^J\sum_{n=1}^N  w_{N, k_j}^{P, \theta} \nabla_\theta \log g_{n,k_j}^{A,\theta} = \frac{1}{J} \sum_{j=1}^J\sum_{n=1}^N \nabla_\theta \log g_{n,k_j}^{A,\theta}.
\end{equation}
Define $\varphi_n^i(x_{n,j}^{F,\theta}) := \frac{\partial}{\partial\theta_i} \log g_{n,k_j}^{A,\theta}$, which is a functional of the filtering particles $x_{n,j}^{F,\theta} = x_{n,k_j}^{P,\theta}$. These are bounded measurable functionals bounded by $G'(\theta)$ by Assumption \ref{assump:bounded-measurement}. Therefore, these have bounded oscillation, satisfying the requirement that $\text{osc} \left(\frac{\partial}{\partial\theta_i} \varphi_i(x_{n,j}^{P,\theta}) \right) \leq G'(\theta)$. Note that \cite{delMoral11} in fact assume $\text{osc}(f) \leq 1$, so we simply scale their bound accordingly.

Now we apply the Hoeffding-type concentration inequality from Del Moral and Rio \cite{delMoral11} and a union bound over each $\varphi_n^i(x_{n,j}^{F,\theta})$, totaling $N$ timesteps and $p$ parameters, to find that
\begin{align}
    \max_{n=1,...,N} \left\lVert\frac{1}{J}\sum_{j=1}^J\nabla_\theta \log g_{n,k_j}^{A,\theta} - \nabla_\theta \ell_n(\theta) \right\rVert_{\infty} \leq G'(\theta)\left(\frac{r_N}{J}(1+h^{-1}(t)) + \sqrt{\frac{2t}{J}}\beta_N \right)
\end{align}
with probability at least $1-2Np\exp(-t)$. Although the above concentration inequality only considers the error from the expectation under the filtering distribution, we invoke the consistency of MOP-$1$ shown in Theorem \ref{thm:mop-grad-consistency} to establish that the expectation under the filtering distribution is in fact the score. It therefore holds that with the same probability, that when summing over $N$, as $NG'(\theta) \leq G(\theta)$, 
\begin{align}
    \left\lVert\frac{1}{J}\sum_{j=1}^J\sum_{n=1}^N\nabla_\theta \log g_{n,k_j}^{A,\theta} - \nabla_\theta \ell(\theta) \right\rVert_{\infty} 
    &\leq G'(\theta)N\left(\frac{r_N}{J}(1+h^{-1}(t)) + \sqrt{\frac{2t}{J}}\beta_N \right)\\
    &\leq G(\theta)\left(\frac{r_N}{J}(1+h^{-1}(t)) + \sqrt{\frac{2t}{J}}\beta_N \right).
\end{align}
We split the $\delta$ failure probability among these $2Np$ terms, to find $\delta\leq2Np\exp(-t)$, and therefore, $t\leq\log(2Np/\delta)$, where $h(t) = \frac{1}{2}(t - \log(1+t))$. 
The two additional model-specific parameters are $\beta_t$ and $r_t$, which do not depend on $J$. 
The analogous bound for the 2-norm follows from scaling the right-hand side by $\sqrt{p}$, to require 
\begin{align}
    \|\nabla_\theta \hat\ell(\theta) - \nabla_\theta \ell(\theta)\|_2 \leq G(\theta)\left(\frac{r_N\sqrt{p}}{J}(1+h^{-1}(\log(2p/\delta))) + \sqrt{\frac{2p\log(2p/\delta)}{J}}\beta_N\right).
\end{align}
We therefore need 
\begin{align}
    J > \max\left\{2G(\theta)\frac{r_N\sqrt{p}}{\epsilon}\left(1+h^{-1}\left(\log\left(\frac{2p}{\delta}\right)\right)\right), 8G(\theta)^2\beta_N^2\frac{p\log(2p/\delta)}{\epsilon^2}\right\}.
\end{align}

\end{proof}

\subsection{Bounding Hessian Estimates}

Should one choose to use a second-order method involving a particle Hessian estimate, we provide a guarantee for its positive-definiteness below.

\begin{lem}[Minimum Eigenvalue Bound for Hessian Estimate]
    \label{lemma:hess_bound}
    Assume that the Hessian of the negative log-likelihood $H=\sum_{j=1}^J \E H_j$ has a minimum eigenvalue $0<\gamma<1$, and that $\E \lambda_{\min} (H_j) = \gamma' > 0$. 
    If 
    \begin{equation}
        J > \max\left\{\frac{2r_t(1+h^{-1}(t)) + 2c}{\gamma'}, \frac{2(2t\beta_t^2+c)^2}{\gamma'^2}\right\} \geq  \frac{r_t(1+h^{-1}(t))}{\gamma'} + \sqrt{2tJ}\beta_t/\gamma' + c/\gamma'
    \end{equation}    
    then $\hat{H}(\theta)$ is invertible and positive definite with minimum eigenvalue greater than or equal to $c \in (0, \sum_{j=1}^J \E\lambda_{\min}(H_j))$, with probability at least $1-\exp(-t)$.
\end{lem}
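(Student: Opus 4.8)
The plan is to mirror the structure of the proof of Lemma~\ref{lemma:grad_bound}, replacing the linear gradient functional with the (concave) minimum-eigenvalue functional and exploiting the superadditivity of $\lambda_{\min}$ over the particle-level Hessian contributions. Write the particle Hessian estimate as a sum $\hat H(\theta) = \sum_{j=1}^J H_j$ of per-particle symmetric contributions, each a bounded measurable functional of the filtering particles by Assumption~\ref{assump:local-bounded-derivative}. The first step is Weyl's inequality in its superadditive form: since each $H_j$ is symmetric, $\lambda_{\min}\big(\sum_{j=1}^J H_j\big) \geq \sum_{j=1}^J \lambda_{\min}(H_j)$, so it suffices to lower-bound $\sum_{j=1}^J \lambda_{\min}(H_j)$.

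Next I would observe that $\lambda_{\min}(H_j)$ is itself a bounded measurable functional of particle $j$: because $\|H_j\|_{\mathrm{op}}$ is controlled by $H^*$ and $\lambda_{\min}(\cdot)$ is $1$-Lipschitz in the operator norm, the functional has bounded oscillation (which we normalise to one, exactly as in Lemma~\ref{lemma:grad_bound}), which is the hypothesis required to invoke the Del Moral--Rio concentration inequality \cite{delMoral11}. Applying that inequality to the single functional $\lambda_{\min}(H_j)$, so that no union bound is needed in contrast to Lemma~\ref{lemma:grad_bound}, gives, with probability at least $1-\exp(-t)$,
\begin{equation} \nonumber
\frac{1}{J}\sum_{j=1}^J \lambda_{\min}(H_j) \;\geq\; \gamma' - \left(\frac{r_t}{J}\big(1+h^{-1}(t)\big) + \sqrt{\frac{2t}{J}}\,\beta_t\right),
\end{equation}
where $\gamma' = \E\lambda_{\min}(H_j)$ is recovered as the limit under the filtering distribution exactly as the score was recovered in Lemma~\ref{lemma:grad_bound}, using the consistency established by Theorem~\ref{thm:mop-grad-consistency}. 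Multiplying through by $J$ and combining with the Weyl bound yields $\lambda_{\min}(\hat H) \geq J\gamma' - r_t(1+h^{-1}(t)) - \sqrt{2tJ}\,\beta_t$.

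It then remains to choose $J$ large enough that the right-hand side is at least $c$, i.e. $J\gamma' \geq r_t(1+h^{-1}(t)) + \sqrt{2tJ}\,\beta_t + c$, which is the self-referential inequality displayed in the statement. I would solve it by splitting the budget, forcing each of the $\sqrt{J}$ term and the constant term on the right to be at most half of $J\gamma'$ separately; this converts the implicit bound into the explicit threshold $J > \max\{\cdot,\cdot\}$ via the standard manipulation for inequalities of the form $J \gtrsim a\sqrt{J} + b$. Once $\lambda_{\min}(\hat H) \geq c > 0$ holds, $\hat H$ is positive definite and hence invertible, completing the argument.

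I expect the main obstacle to be the justification in the second step that the nonlinear, non-additive functional $\lambda_{\min}(H_j)$ is admissible for the Del Moral--Rio bound and that its particle-average converges to the intended quantity $\gamma'$ under the filtering distribution. The superadditivity step is what makes the eigenvalue tractable, since $\lambda_{\min}$ of a sum does not concentrate directly, and care is needed to ensure the oscillation constant and the target expectation are the operator-norm and filtering-distribution versions that match the hypotheses $\E\lambda_{\min}(H_j)=\gamma'$ and $c \in \big(0, \sum_{j=1}^J \E\lambda_{\min}(H_j)\big)$.
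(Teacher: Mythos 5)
Your proposal is correct and follows essentially the same route as the paper's proof: the superadditivity $\lambda_{\min}\big(\sum_j H_j\big) \geq \sum_j \lambda_{\min}(H_j)$ (which the paper derives from concavity of $\lambda_{\min}$, equivalent to your Weyl-inequality phrasing), followed by the Del Moral--Rio concentration bound applied to the scalar functional $\lambda_{\min}(H_j)$, and the same budget-splitting of $J\gamma' \geq c + r_t(1+h^{-1}(t)) + \sqrt{2tJ}\,\beta_t$ into the explicit $\max\{\cdot,\cdot\}$ threshold. Your added care about the oscillation bound (via Lipschitzness of $\lambda_{\min}$ in operator norm) is a justification the paper leaves implicit, not a departure from its argument.
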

\begin{proof}
Write $\hat{H}(\theta) = \hat{H} = \sum_{j=1}^J H_j$ for the estimate of the negative of the Hessian, where $H_j$ is an element of the outer sum over the $J$ particles.

As the negative log-likelihood is convex, we want to bound the minimum eigenvalue of $\hat{H}(\theta)$ from below with high probability, so that all the eigenvalues of $\hat{H}(\theta)$ are positive with high probability. This ensures that the estimated Hessian is invertible and positive-definite.

It is known that the minimum eigenvalue of a symmetric matrix is concave. Therefore, it suffices to show that the first inequality in the below expression
\begin{equation}
    0 < \sum_{j=1}^J \lambda_{\min} (H_j) \leq  \lambda_{\min}\left(\sum_{j=1}^J H_j\right) = \lambda_{\min} (\hat{H})
\end{equation}
holds with high probability.
We apply the particle Hoeffding concentration inequality from \cite{delMoral11} to find that  
\begin{align}
    \frac{1}{J}\sum_{j=1}^J \lambda_{\min}(H_j) - \E_{\tilde{\pi}_t}\lambda_{\min}(H_j) &= \frac{1}{J}\sum_{j=1}^J \lambda_{\min}(H_j) - \gamma' \geq -\frac{r_t}{J}\big(1+h^{-1}(t)\big) - \sqrt{\frac{2t}{J}}\beta_t \\
    \sum_{j=1}^J \lambda_{\min}(H_j) &\geq -r_t\big(1+h^{-1}(t)\big) - \sqrt{2tJ}\beta_t + J\gamma',
\end{align}
with probability at least $1-\exp(-t)$. Here, $h(t) = \frac{1}{2}(t - \log(1+t))$. 
The two additional model-specific parameters are $\beta_t$ and $r_t$, which do not depend on $J$. 

We additionally require, for $c \in \big(0, \sum_{j=1}^J \E\lambda_{\min}(H_j)\big)$,
\begin{align}
    \sum_{j=1}^J \lambda_{\min}(H_j) \geq -r_t\big(1+h^{-1}(t)\big) - \sqrt{2tJ}\beta_t + J\gamma' \geq c, \\
    J\gamma' \geq c + r_t\big(1+h^{-1}(t)\big) + \sqrt{2tJ}\beta_t.
\end{align}
It is therefore sufficient to have
\begin{equation}
J > \max\left\{\frac{2r_t\big(1+h^{-1}(t)\big) + 2c}{\gamma'}, \frac{2(2t\beta_t^2+c)^2}{\gamma'^2}\right\} \geq  \frac{r_t\big(1+h^{-1}(t)\big)}{\gamma'} + \sqrt{2tJ}\beta_t/\gamma' + c/\gamma'    
\end{equation}
for $\hat{H}(\theta)$ to be invertible and positive definite with minimum eigenvalue greater than or equal to $c$ with probability at least $1-\exp(-t)$.

\end{proof}

\subsection{Convergence Analysis of Theorem \ref{thm:mop-convergence}}

\begin{proof}
In this analysis, we largely follow the proof of Theorem 6 in \cite{mahoney16}.
Define $\theta_\eta = \theta_m + \eta p_m$, where $p_m=-(H(\theta_m))^{-1}g(\theta_m)$. 
As in Roosta-Khorasani and Mahoney \cite{mahoney16}, we want to show there is some iteration-independent $\tilde{\eta}>0$ such that the Armijo condition
\begin{equation}
    f(\theta_m+\eta p_m) \leq f(\theta_m) + \eta\beta \, p_m^T \, g(\theta_m),
\end{equation}
holds for any $0< \eta < \tilde{\eta}$ and some $\beta \in (0,1)$.
By an argument found in the beginning of the proof of Theorem 6 in \cite{mahoney16}, we have that choosing $J$ such that $\|\nabla_\theta\hat{\ell}(\theta_m) - \nabla_\theta \ell(\theta_m)\| \leq \epsilon$ and $\lambda_{\min}(H(\theta_m)) \geq c>0$ for each $m$, yields
\begin{align}
    f(\theta_\eta)-f(\theta_m) \leq \eta p_m^T\, g(\theta_m) + \epsilon\eta\|p_m\| + \eta^2 \Gamma \|p_m\|^2 / 2,
\end{align}
with probability $1-\delta/2$. 
From now on, we assume that we are on the success event of this high-probability statement. 
Consequently, we have
\begin{equation}
    p_m^Tg(\theta_m) = -p_m^T\, H(\theta_m)\, p_m \geq -c\|p_m\|^2,
\end{equation}
and we can obtain a decrease in the objective. 
Substituting this into the previous expression,
\begin{align}
    f(\theta_\eta)-f(\theta_m) \leq -\eta p_m^TH(\theta_m)\, p_m + \epsilon\eta\|p_m\| + \eta^2 \Gamma \|p_m\|^2 / 2,
\end{align}
the Armijo condition becomes
\begin{align}
    -\eta p_m^TH(\theta_m)p_m + \epsilon\eta\|p_m\| + \eta^2 \Gamma \|p_m\|^2 / 2 &\leq \eta \beta p_m^Tg(\theta_m) = - \eta \beta p_m^TH(\theta_m)p_m \\
    \epsilon\|p_m\| + \eta \Gamma \|p_m\|^2 / 2 &\leq (1- \beta) p_m^TH(\theta_m)p_m \\
    \epsilon + \eta \Gamma \|p_m\| / 2 &\leq c(1- \beta) \|p_m\|.
\end{align}
This holds and guarantees an iteration-independent lower bound if 
\begin{equation}
    \eta \leq \frac{c(1-\beta)}{\Gamma}, \;\; \epsilon \leq \frac{c(1-\beta)}{2\Gamma}\|g(\theta_m)\| \leq \frac{c(1-\beta)}{2}\|p_m\|,
\end{equation}
which is given by our choice of $\eta$.
Now, first note that
\begin{equation}
\|g(\theta_m)\| - \|\nabla_\theta f(\theta_m)\| \leq \|g(\theta_m) - \nabla_\theta f(\theta_m)\| \leq \epsilon \implies \|\nabla_\theta f(\theta_m)\| \geq \|g(\theta_m)\| - \epsilon
\end{equation} 
and
\begin{equation}
\|\nabla_\theta f(\theta_m)\|-\|g(\theta_m)\| \leq \|\nabla_\theta f(\theta_m)-g(\theta_m)\| \leq \epsilon \implies \|g(\theta_m)\| \geq \|\nabla_\theta f(\theta_m)\| - \epsilon.
\end{equation}
There are now two cases. 
If the algorithm terminates and $\|g(\theta_m)\| \leq \sigma \epsilon$, we can derive 
\begin{equation}
    \|\nabla_\theta f(\theta_m)\| \leq \|g(\theta_m)\| + \epsilon = \sigma\epsilon+\epsilon = (\sigma+1)\epsilon.
\end{equation}
If the algorithm does not terminate, then $\|g(\theta_m)\| > \sigma \epsilon$. 
Notice that 
\begin{eqnarray}
    \epsilon \geq \|g(\theta_m) - \nabla_\theta f(\theta_m)\| &\geq& \|g(\theta_m)\| - \|\nabla_\theta f(\theta_m)\| 
    \\
    \|\nabla_\theta f(\theta_m)\| + \epsilon &\geq& \|g(\theta_m)\| \geq \sigma \epsilon 
    \\
    \|\nabla_\theta f(\theta_m)\| &\geq& \sigma \epsilon - \epsilon = (\sigma - 1)\epsilon 
    \\
    \frac{\|\nabla_\theta f(\theta_m)\|}{\sigma-1} &\geq& \epsilon,
\end{eqnarray}
and now 
\begin{eqnarray}
    \|\nabla_\theta f(\theta_m)\| - \epsilon
    &\geq&  \|\nabla_\theta f(\theta_m)\| - \frac{\|\nabla_\theta f(\theta_m)\|}{\sigma-1} 
    \\
    &=& \left(1-\frac{1}{\sigma-1}\right)\|\nabla_\theta f(\theta_m\| 
    \\
    &=& \frac{\sigma-2}{\sigma-1}\|\nabla_\theta f(\theta_m)\| 
    \\
    &\geq& \frac{2}{3}\|\nabla_\theta f(\theta_m)\|.
\end{eqnarray}
Since $\|A^{-1}\| = 1/\sigma_{\min}(A)$,
\begin{align}
    p_m^TH(\theta_m)p_m &= \big(-(H(\theta_m))^{-1}g(\theta_m)\big)^TH(\theta_m)\big(-(H(\theta_m))^{-1}g(\theta_m)\big) \\
    &= g(\theta_m)^T(H(\theta_m))^{-1}g(\theta_m) \\
    &\geq \frac{1}{c}\|g(\theta_m)\|^2 \\
    &\geq \frac{1}{c}\big(\|\nabla_\theta f(\theta_m)\| - \epsilon\big)^2 \\
    &\geq \frac{4}{9c}\|\nabla_\theta f(\theta_m)\|^2.
\end{align}
From the assumption that $f$ is $\gamma$-strongly convex, $\gamma I \preceq \nabla_\theta^2 -\ell \preceq \Gamma I$, by an implication of $\gamma$-strong convexity we have
\begin{align}
    f(\theta_m) - f(\theta^*) \leq \frac{1}{2\gamma}\big\|\nabla_\theta f(\theta_m)\big\|^2,
\end{align}
and we put together:
\begin{equation}
    f(\theta_m) - f(\theta^*) \leq \frac{1}{2\gamma}\big\|\nabla_\theta f(\theta_m)\big\|^2 \leq \frac{9c}{4}\frac{1}{2\gamma}\, p_m^TH(\theta_m)\, p_m,
\end{equation}
\begin{equation}
    \frac{8\gamma}{9c}\big(f(\theta_m) - f(\theta^*)\big) \leq \frac{4}{9c}\, \big\|\nabla_\theta f(\theta_m)\big\|^2 \leq p_m^TH(\theta_m)\, p_m,
\end{equation}
\begin{equation}
    f(\theta_m) - f(\theta^*) \leq \frac{9c}{8\gamma}\, p_m^TH(\theta_m)\, p_m,
\end{equation}
\begin{equation}
    -\frac{8\gamma}{9c}\big(f(\theta_m) - f(\theta^*)\big) \geq -\frac{4}{9c}\, \big\|\nabla_\theta f(\theta_m)\big\|^2 \geq -p_m^TH(\theta_m)\, p_m.
\end{equation}
From earlier, as the Armijo condition is fulfilled with our choice of $\eta$ and $\epsilon$,
\begin{align}
    f(\theta_{m+1})-f(\theta_m) &\leq -\eta \, p_m^TH(\theta_m)p_m + \epsilon\eta\|p_m\| + \eta^2 \Gamma \|p_m\|^2 / 2 \\
    &\leq -\eta\beta \,  p_m^TH(\theta_m)\, p_m \\
    &\leq -\eta\beta \, \frac{8\gamma}{9c}\, \big(f(\theta_m) - f(\theta^*)\big).
\end{align}
Therefore,
\begin{align}
    f(\theta_{m+1}) - f(\theta^*) 
    &= f(\theta_{m+1})-f(\theta_m)+f(\theta_m)- f(\theta^*) \\
    &\leq f(\theta_m)- f(\theta^*) -\eta\beta\, \frac{8\gamma}{9c} \, \big(f(\theta_m) - f(\theta^*)\big) \\
    &= \Big(1-\eta\beta\frac{8\gamma}{9c}\Big)\big(f(\theta_m) - f(\theta^*)\big).
\end{align}

\end{proof}

\section{Feynman-Kac Models and Monte Carlo Approximations}
\label{appendix:feynman}

In this section, we introduce the Feynman-Kac convention of \cite{delMoral04} that has since become commonplace \cite{karjalainen23} for the analysis of the particle filter. The mathematical formalization and notation introduced here will be adopted in the remainder of the analysis, in order to prove Theorems \ref{thm:mop-targeting}, \ref{thm:mop-grad-consistency}, and \ref{thm:mop-biasvar}. Let $(\eta_n)_{n=1}^N, (\pi_n)_{n=1}^N, (\rho_n)_{n=1}^N$ be sequences of probability measures on the state space $\gX$. This is the sequence of prediction distributions $f_{X_{n}|Y_{1:n-1}}$, filtering distributions $f_{X_{n}|Y_{1:n}}$, and posterior distributions $f_{X_{1:n}|Y_{1:n}}$ that we seek to approximate with the particle filter. For any measurable bounded functional $h$, we adopt the following functional-analytic notation, borrowed from \cite{delMoral04, chopin20, karjalainen23}. We choose our specific choice of notation and definitions to be in line with that of \cite{karjalainen23}.

\paragraph{Markov kernels and the process model:} A Markov kernel $M$ with source $\gX_1$ and target $\gX_2$ is a map $M: \gX_1 \times \mathcal{B}(\gX_2) \to [0,1]$ such that for every set $A \in \mathcal{B}(\gX_2)$ and every point $x \in \gX_1$, the map $x \mapsto M(x, A)$ is a measurable function of $x$, and the map $A \mapsto M(x,A)$ is a probability measure on $\gX_2$. The quantity $M(x,A)$ can be thought of as the probability of transitioning to the set $A$ given that we are at the point $x$. If this yields a density, this then corresponds to the process density $f_{X_{2}|X_1}$ conditional on $x$ and integrated over $A$.  

\paragraph{Markov kernels and measures:} For any measure $\eta$, any Markov kernel $M$ on $\gX$, any point $x \in \gX$ and any measurable subset $A \subseteq \gX$, let 
\begin{align}
    \eta(h) &= \int h \, d\eta = \int h(x) \eta(dx), \\(\eta M)(A) &= \int \eta(dx)M(x,A), \\
    (Mh)(x) &= \int M(x, dy) h(y).
\end{align}

\paragraph{Compositions of Markov kernels:} The composition of a Markov kernel $M_1$ with another Markov kernel $M_2$ is another Markov kernel, given by 
\begin{equation}
 (M_1M_2)(x, A) = \int M_1(x, dy) M_2(y, A).
\end{equation}

\paragraph{Total variation distance:} The total variation distance between two measures $\mu$ and $\nu$ on $\gX$ is
\begin{equation}
\|\mu-\nu\|_{\mathrm{TV}}=\sup _{\|h\|_{\infty} \leq 1 / 2}|\mu(\phi)-\nu(\phi)|=\sup _{\operatorname{osc}(h) \leq 1}|\mu(h)-\nu(h)|.    
\end{equation}

\paragraph{Dobrushin contraction:} The Dobrushin contraction coefficient $\beta_{\text{TV}}$ of a Markov kernel $M$ is given by
\begin{equation}
\beta_{\mathrm{TV}}(M)=\sup _{x, y \in \gX}\big\|M(x, \cdot)-M(y, \cdot)\big\|_{\mathrm{TV}}=\sup _{\mu, \nu \in \mathcal{P}, \mu \neq \nu} \frac{\|\mu M-\nu M\|_{\mathrm{TV}}}{\|\mu-\nu\|_{\mathrm{TV}}}.    
\end{equation}

\paragraph{Potential functions and the measurement model:} A potential function $G : \gX \to [0,\infty)$ is a non-negative function of an element of the state space $x \in \gX$. 
In our case, this corresponds to the measurement model, and in our previous notation is written as $g_{n,j} = f_{Y_n|X_n}(y_n^*|x_{n,j}^F) = G_n(x_{n,j}^F)$, where in a slight abuse of notation we suppress the dependence on $\theta$ for notational simplicity. 
Note that $G_n(\cdot) = f_{Y_n|X_n}(y_n^*|\;\cdot\;)$ is the conditional density of the observed measurement at time $n$, where we condition on the filtering particle $x_{n,j}^F$ as an element of the state space. 

\paragraph{Feynman-Kac models:} A Feynman-Kac model on $\gX$ is a tuple $(\pi_0, (M_n)_{n=1}^N, (G_n)_{n=1}^N)$ of an initial probability measure on the state space $\pi_0$, a sequence of transition kernels $(M_n)_{n=1}^N$, and a sequence of potential functions $(G_n)_{n=1}^N$. In the notation used in the main text, this corresponds to the starting distribution $f_{X_0}$, the sequence of transition densities $f_{X_{n}|X_{n-1}}$, and the measurement densities $f_{Y_n|X_n}$. This induces a set of mappings from the set of probability measures on $\gX$ to itself, $\mathcal{P}(\gX) \to \mathcal{P}(\gX)$, as follows:
\begin{itemize}
    \item The update from the prediction to the filtering distributions is given by 
    \begin{equation}
    \pi_n(dx) = \Psi_n(\eta_n)(dx) = \frac{G_n(x)\cdot\eta_n(dx)}{\eta_n(G_n)}.
    \end{equation}
    \item The map from the prediction distribution at timestep $n$ to timestep $n+1$ is given by 
    \begin{equation}
 \Phi_{n+1}(\eta_n) = \Psi_n(\eta_n) M_{n+1}.       
    \end{equation}
    \item The composition of maps between prediction distributions yields the map from the prediction distribution at time $k$ to the prediction distribution at time $n$ where $k \leq n$,
    \begin{equation}
 \Phi_{k,n} = \Phi_n \circ ... \circ \Phi_{k+1}.       
    \end{equation}
\end{itemize}
\paragraph{The particle filter:} The particle filter then yields a Monte Carlo approximation to the above Feynman-Kac model, via a sequence of mixture Dirac measures. When one resamples at every timestep, the prediction measure at timestep $n$ is then given by 
\begin{equation}
\eta_n^J = \frac{1}{J}\sum_{j=1}^J \delta_{x_{n,j}^P},
\end{equation}
and the filtering measure at timestep $n$ is given by
\begin{equation}
\pi_n^J = \frac{\sum_{j=1}^J g_{n,j} \delta_{x_{n,j}^P}}{\sum_{j=1}^J g_{n,j}} \approx \frac{1}{J} \sum_{j=1}^J \delta_{x_{n,j}^F}.
\end{equation}
In a slight abuse of notation, we will identify $x_{n, 1:J}^P \equiv \eta_n^J$, and $x_{n, 1:J}^F \equiv \pi_n^J$. 
As in \cite{karjalainen23}, one can view this as an inhomogenous Markov process evolving on $\gX^{J}$. The corresponding Markov transition kernel is then 
\begin{equation}
\textbf{M}_n(x_{n-1, 1:J}^P, \cdot) = \left(\Phi_{n}\left(\eta_n^J\right)\right)^{\otimes J} = \left(\Phi_{n}\left(\frac{1}{J}\sum_{j=1}^J \delta_{x_{n,j}^P}\right)\right)^{\otimes J},
\end{equation}
and the composition of Markov kernels on particles from timestep $n$ to timestep $n+k$ is written 
\begin{equation}
\textbf{M}_{n, n+k} = \textbf{M}_{n+k}\circ ...\circ \textbf{M}_n.
\end{equation}
One may wonder why \cite{karjalainen23} require this process to evolve on $\gX^{J}$. This is because at every timestep $n$, we in fact draw $X_{n, j}^P | \{X_{n-1, 1:J}^P = x_{n-1, 1:J}^P\} \sim \textbf{M}_n(x_{n-1, 1:J}^P, \cdot) = \eta_0^{\otimes J} \textbf{M}_{0,n}$ for $j=1,...,J$. 

\paragraph{Forgetting of the particle filter:} 
The above formalization yields a result from \cite{karjalainen23} on the forgetting of the particle filter that we require for our analysis of the bias, variance, and error of MOP-$\alpha$.  
That is, \cite{karjalainen23} show that
\begin{equation}
\beta_{\mathrm{TV}}\left(\mathbf{M}_{n, n+k}\right) \leq(1-\epsilon)^{\lfloor k /(O(\log J))\rfloor},
\end{equation}
for some $\epsilon$ dependent on $\bar{G}, \underbar{G}, \bar{M}, \underbar{M}$ in Assumptions \ref{assump:bounded-measurement} and \ref{assump:bounded-process}. As a result, the mixing time of the particle filter is only on the order of $O(\log(J))$ timesteps.

\vspace{3mm}

Equipped with the above formalisms and results, we are now in a position to provide guarantees on the performance of MOP-$\alpha$ itself. 

\section{A Strong Law of Large Numbers for Triangular Arrays of Particles With Off-Parameter Resampling}
\label{appendix:targeting}

Here, we prove a more general result, from which it will be clear that MOP-$\alpha$ targets the filtering distribution. 
We will prove a strong law of large numbers for triangular arrays of particles with off-parameter resampling, meaning that we resample the particles according to an arbitrary resampling rule that is not necessarily in proportion to the target distribution of interest. 
Using weights that encode the cumulative discrepancy between the resampling distribution and the target distribution (instead of resampling to equal weights, as in the basic particle filter) provides a sufficient correction to ensure almost sure convergence.
We now introduce the precise definition of an off-parameter resampled particle filter. 

\begin{defn}[Off-Parameter Resampled Particle Filters]
    \label{defn:off-parameter-filter}
    An off-parameter resampled particle filter is a Monte Carlo approximation to a Feynman-Kac model $\left(\pi_0,\left(M_n\right)_{n=1}^N,\left(G_n\right)_{n=1}^N\right)$, where we inductively define given some $x_{n-1,j}^F$, $w_{n-1,j}^F$ comprising the filtering measure approximation $\pi_n^J$:
    \begin{enumerate}
        \item The prediction particles at timestep $n$, $x_{n,j}^P$, are drawn from $X_{n,j}^P \sim \pi_n^J M_n$, and the prediction weights are given by $w_{n,j}^P = w_{n-1,j}^F$. 
        \item The prediction measure at timestep $n$ is given by 
        \begin{equation}
        \eta_n^J = \frac{\sum_{j=1}^J w_{n,j}^P \, \delta_{x_{n,j}^P}}{\sum_{j=1}^J w_{n,j}^P}.
        \end{equation}
        \item The particles are resampled at every timestep $n$, yielding indices $k_j$ according to some arbitrary probabilities $(p_{n,j})_{j=1}^J$.
        \item the filtering measure at timestep $n$ is given by
        \begin{equation}
        \pi_n^J = \frac{\sum_{j=1}^J \delta_{x_{n,j}^P } w_{n,j}^P \, g_{n,j}}{\sum_{j=1}^J w_{n,j}^P \, g_{n,j}} \approx \frac{\sum_{j=1}^J \delta_{x_{n,k_j}^P }w_{n,k_j}^P \, g_{n,k_j}\big/p_{n,k_j}}{\sum_{j=1}^J w_{n,k_j}^P\, g_{n,k_j}\big/p_{n,k_j}} = \frac{\sum_{j=1}^J w_{n,j}^F \, \delta_{x_{n,j}^F}}{\sum_{j=1}^J w_{n,j}^F}.
        \end{equation}
        \item The prediction weights at the next timestep are given by $w_{n+1,j}^P = w_{n,j}^F =  w_{n,k_j}^P \, g_{n,k_j}\big/p_{n,k_j}$.
    \end{enumerate}
\end{defn}

To prove that the weight correction is sufficient for almost sure convergence, we first introduce what it means for a triangular array of particles to \textit{target} a given target distribution. 

\begin{defn}[Targeting]
    A pair of random vectors $(X, W)$ drawn from some measure $g$ \textbf{targets} another measure $\pi$ if for any measurable and bounded functional $h$,
\begin{equation}
    E_g\big[h(X) \cdot W\big]=E_\pi\big[h(X)\big].
\end{equation}  
    A set of particles \textbf{targeting} $\pi$ is a triangular array of pairs of random vectors $(X^J_j, W^J_j), j=1,2, \ldots,J$ such that for any measurable and bounded functional $h$,
\begin{equation}
    \frac{\sum_{j=1}^J h(X^J_j) \, W^J_j}{\sum_{j=1}^J W^J_j} \stackrel{a.s.}{\to} E_\pi(h(X))
\end{equation}
as $J \to \infty$.
\end{defn}
\cite{chopin04} asserted without proof that common particle filter algorithms targets the filtering distribution in this sense, while \cite{chopin20} proved a related result assuming bounded densities.
We follow a similar approach to \cite{chopin20}, based on showing strong laws of large numbers for triangular arrays, noting that triangular array strong laws do not hold without an additional regularity condition such as boundedness in general.
 
In order to prove the consistency of our variation on the particle filter, we now present three helper lemmas.
The first follows from standard importance sampling arguments, the second from integrating out the marginal, and the third from Bayes' theorem. 
We state Lemma~\ref{lem:change-measure-proper-weights} assuming multinomial resampling, which is convenient for the proof though other resampling strategies may be preferable in practice.

\begin{lem}[Change of Weight Measure]
    \label{lem:change-measure-proper-weights}
    Suppose that $\{(\tilde X_j^J,U_j^J),j=1,\dots,J\}$ targets $f_X$. Now, let $\{(Y_j^J,V_j^J),j=1,\dots,J\}$ be a multinomial sample with indices $k_j$ drawn from $\{(\tilde X_j^J,U_j^J)\}$ where $(\tilde X_j^J,U_j^J)$ is represented, on average, proportional to $\pi^J_j J$ times. Write
    \begin{equation}
    (Y_j^J,V_j^J) = \big(\tilde X^J_{k_j},U^J_{k_j}\big/\pi^J_{k_j}\big).
    \end{equation} 
    If the importance sampling weights $U_j/\pi_j$ are bounded, then $\{(Y^J_j,V^J_j),j=1,\dots,J\}$ targets $f_X$.
\end{lem}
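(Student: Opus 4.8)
The plan is to condition on the first-stage weighted system and argue that multinomial resampling with the off-parameter probabilities $\pi_j^J$, corrected by the importance weight $U_{k_j}^J/\pi_{k_j}^J$, neither moves the target nor breaks almost-sure convergence. Write $\mathcal{F}_J$ for the $\sigma$-field generated by $\{(\tilde X_j^J, U_j^J, \pi_j^J)\}_{j=1}^J$. Conditional on $\mathcal{F}_J$ the indices $k_1,\dots,k_J$ are i.i.d.\ with $\prob(k_j=m\mid\mathcal{F}_J)=\pi_m^J$, so the resampled pairs $(Y_j^J,V_j^J)$ are conditionally i.i.d., and the proof reduces to a conditional strong law for a bounded triangular array.

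The first step I would record is the \emph{unbiasedness of the off-parameter resampling}: for any bounded $h$,
$$\E\!\left[h(Y_j^J)\,V_j^J \mid \mathcal{F}_J\right] = \sum_{m=1}^J \pi_m^J\, h(\tilde X_m^J)\,\frac{U_m^J}{\pi_m^J} = \sum_{m=1}^J h(\tilde X_m^J)\,U_m^J,$$
and, with $h\equiv 1$, $\E[V_j^J \mid \mathcal{F}_J] = \sum_{m=1}^J U_m^J$. The resampling probabilities cancel exactly against the correction factor $1/\pi_{k_j}^J$ built into the weight update of Definition~\ref{defn:off-parameter-filter}, so the conditional mean of the resampled weighted average is precisely the first-stage weighted sum.

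Next I would transfer almost-sure convergence. Since $U_j^J/\pi_j^J$ is bounded and $h$ is bounded, each term $h(Y_j^J)V_j^J$ is bounded by $\|h\|_\infty C$ surely, so Hoeffding's inequality gives, \emph{uniformly in} $\mathcal{F}_J$,
$$\prob\!\left(\left|\tfrac1J\textstyle\sum_{j=1}^J h(Y_j^J)V_j^J - \textstyle\sum_{m=1}^J h(\tilde X_m^J)U_m^J\right| > \varepsilon \,\Big|\, \mathcal{F}_J\right) \le 2\exp\!\left(-\frac{J\varepsilon^2}{2\|h\|_\infty^2 C^2}\right).$$
Because the bound is deterministic and summable in $J$, taking expectations and applying Borel--Cantelli yields $\tfrac1J\sum_j h(Y_j^J)V_j^J - \sum_m h(\tilde X_m^J)U_m^J \to 0$ almost surely, and likewise $\tfrac1J\sum_j V_j^J - \sum_m U_m^J \to 0$. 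Dividing, the off-parameter target ratio shares the almost-sure limit of the first-stage ratio, which converges to $\E_{f_X}[h(X)]$ by hypothesis; the conclusion then follows from the continuous mapping theorem applied to $(a,b)\mapsto a/b$.

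The main obstacle is legitimizing that final division. The Hoeffding step controls only the \emph{absolute} fluctuation of $\tfrac1J\sum_j V_j^J$ about the normalizer $\sum_m U_m^J$, so to pass to the ratio I need that normalizer bounded away from $0$. This is where the boundedness hypothesis must be read two-sided, $c \le U_j^J/\pi_j^J \le C$: together with $\sum_j \pi_j^J = 1$ it gives $c \le \sum_m U_m^J = \sum_m \pi_m^J (U_m^J/\pi_m^J) \le C$, trapping the normalizer in $[c,C]$ and forcing the relative fluctuations to vanish. A secondary subtlety is that the first-stage system is itself a triangular array whose entries vary with $J$, so all the almost-sure statements must live on one probability space carrying the resampling randomness; the argument decouples cleanly only because the Hoeffding tail above is uniform in $\mathcal{F}_J$. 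Handling this two-sided boundedness and the joint-probability bookkeeping is the only delicate part, the remainder being the standard importance-sampling cancellation.
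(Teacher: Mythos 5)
Your proof is correct and follows the same skeleton as the paper's: condition on the first-stage system, exploit the conditional independence of the multinomial indices, show the resampling probabilities cancel exactly against the $1/\pi_{k_j}^J$ correction so the resampled sums are conditionally centered at the first-stage sums, and convert a conditional concentration bound that is uniform in the conditioning $\sigma$-field into almost-sure convergence via Borel--Cantelli, for the numerator and (taking $h\equiv 1$) the denominator separately. The only technical divergence is the concentration tool: the paper computes a conditional fourth moment of the centered sum, bounds it by $CJ^2$, and applies Markov's inequality to get a summable $O(J^{-2})$ tail, whereas you use Hoeffding on the bounded conditionally i.i.d.\ terms to get an exponential tail --- both are valid, and yours is marginally cleaner since boundedness of $h$ and $U_j/\pi_j$ is assumed anyway. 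Where you go beyond the paper is the final division: the paper passes from convergence of numerator and denominator to convergence of the ratio without comment, leaving implicit that the normalizer $\sum_m U_m^J$ must be bounded away from zero, and you are right that the one-sided boundedness hypothesis as literally stated does not deliver this --- a two-sided bound $c \le U_j^J/\pi_j^J \le C$ (which does hold in the intended MOP application, since Assumption~\ref{assump:bounded-measurement} traps the ratios $g^\theta_{n,j}/g^\phi_{n,j}$ in $[\underbar{G}/\bar{G},\,\bar{G}/\underbar{G}]$) or some equivalent control of the relative fluctuation is needed to legitimize the limit of the quotient. Flagging and repairing that step is a genuine improvement in rigor over the paper's version.
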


\begin{proof}
    Note that as the $Y_j^J$ are a subsample from $X_j^J$, $h$ can be a function of $Y$ as well as it is one for $X$. We then expand
    \begin{equation}\frac{\sum_j h(Y_j^J) \, V_j^J}{\sum_j V_j^J} = \frac{\sum_j h(\tilde X_{k_j}^J)\, \frac{U_{k_j}^J}{\pi_{k_j}^J}}{\sum_j \frac{U_{k_j}^J}{\pi_{k_j}^J}}.\end{equation}
    By hypothesis,
    \begin{equation}
    \frac{\sum_j h(\tilde X_j^J)\, U_j^J}{\sum_j U_j^J} \stackrel{a.s.}{\to} \E_{f_X}\big[f(X)\big].
    \end{equation}
    We want to show
    \begin{equation}\label{eq:lemma1:h}
    \frac{\sum_j h(X_{k_j}^J)\, \frac{U_{k_j}^J}{\pi_{k_j}^J}}{\sum_j \frac{U_{k_j}^J}{\pi_{k_j}^J}} - \frac{\sum_j h(\tilde X_j^J)\, U_j^J}{\sum_j U_j^J} \stackrel{a.s.}{\to} 0.
    \end{equation}
    For this, it is sufficient to show that
   \begin{equation} 
   \sum_j h(\tilde X_{k_j}^J)\frac{U_{k_j}^J}{\pi_{k_j}^J}
    -  \sum_j h(\tilde X_j^J)\frac{U_j^J}{\pi_j^J}\pi_j^J \stackrel{a.s.}{\to} 0 
    \end{equation}
    since an application of this result with $h(x)=1$ provides almost sure convergence of the denominator in (\ref{eq:lemma1:h}).
    Write $g(\tilde X_j^J) = h(\tilde X_j^J)\frac{U_{k_j}^J}{\pi_{k_j}^J}$. We therefore need to show that 
    \begin{equation}
    \sum_j Z_j^J := \sum_j \left(g(\tilde X_{k_j}^J) -  g(\tilde X_j^J) \, \pi_j^J \right) \stackrel{a.s.}{\to} 0.
\end{equation}
    Because the functional $h$ and importance sampling weights $u_{k_j}^J/\pi_{k_j}^J$ are bounded, we have that $\E\big[(Z_j^J)^4\big] < \infty$. We can then follow the argument of \cite{chopin20} from this point on, where noting that the $Z_j^J$ are conditionally independent given the $\tilde{X}_j^J$ and $U_j^J$,
    \begin{eqnarray} 
    \E\left[\left(\sum_j Z_j^J\right)^4 \Bigg| (\tilde X_j^J,U_j^J)_{j=1}^J \right] 
    &=& J\, \E\left[(Z_1^J)^4|(\tilde X_j^J,U_j^J)_{j=1}^J\right] 
    \nonumber
    \\
    && \hspace{10mm}+ \hspace{2mm} 3J(J-1)\left(\E\big[(Z_j^J)^2\big|(\tilde X_j^J,U_j^J)_{j=1}^J\big]\right) 
    \\ 
    &\leq& CJ^2,
    \end{eqnarray}
    for some $C>0$. 
    Taking expectations on both sides yields
    \begin{equation}\E\left[\left(\sum_j Z_j^J\right)^4 \right]  \leq CJ^2\end{equation}
    by the tower property. Now by Markov's inequality, 
    \begin{equation}\mathbb{P}\left(\left|\frac{1}{J} \sum_{j=1}^J Z_j^J\right|>\epsilon \right) 
    \leq \frac{1}{\epsilon^4J^4 } 
    \mathbb{E}\left[\left(\sum_{j=1}^J Z_j^J\right)^4\right] \leq \frac{C}{\epsilon^4J^2},\end{equation}
    and as these terms are summable we can apply Borel-Cantelli to conclude that these deviations happen only finitely often for every $\epsilon>0,$ giving us the almost-sure convergence for   
    \begin{equation} 
    \sum_j h(X_{k_j}^J)\, \frac{u_{k_j}^J}{\pi_{k_j}^J}
    -  \sum_j h(X_j^J)\, \frac{u_j^J}{\pi_j^J}\, \pi_j^J \stackrel{a.s.}{\to} 0.
    \end{equation} 
    Similarly, we also have that
    \begin{equation} 
    \sum_j \frac{u_j^J}{\pi_j^J}\pi_j^J
    - \sum_j \frac{u_{k_j}^J}{\pi_{k_j}^J} \stackrel{a.s.}{\to} 0,
    \end{equation}
    and the result is proved.

\end{proof}

\noindent \textbf{Remark:} Note that Lemma \ref{lem:change-measure-proper-weights} permits $\pi_{1:J}$ to depend on $\{(X_j,u_j)\}$ as long as the resampling is carried out independently of $\{(X_j,u_j)\}$, conditional on $\pi_{1:J}$.

\begin{lem}[Particle Marginals]
    \label{lem:marginal-proper-weights}
    Suppose that $\{(\tilde X_j^J,U_j^J),j=1,\dots,J\}$ targets $f_X$. Also suppose that $\tilde Z_j^J \sim f_{Z|X}(\cdot | \tilde X_j^J)$ where $f_{Z|X}$ is a conditional probability density function corresponding to a joint density $f_{X,Z}$ with marginal densities $f_X$ and $f_Z$. Then, if the $U_j^J$ are bounded, $\{(\tilde Z_j^J,U_j^J)\}$ targets $f_Z$.
\end{lem}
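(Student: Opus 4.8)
The plan is to reduce the claim about the freshly drawn $\tilde Z_j^J$ to the hypothesis about the $\tilde X_j^J$ by conditioning, in the spirit of Rao--Blackwellization and closely mirroring the fourth-moment argument of Lemma~\ref{lem:change-measure-proper-weights}. First I would introduce the conditional-mean functional $\bar h(x) := \E[h(Z)\mid X=x] = \int h(z)\,f_{Z|X}(z\mid x)\,dz$, which is measurable and bounded by $\|h\|_\infty$ whenever $h$ is bounded. Integrating out the marginal (the step flagged in the text) gives the tower identity $\E_{f_X}[\bar h(X)] = \int\!\!\int h(z)\,f_{Z|X}(z\mid x)\,f_X(x)\,dz\,dx = \E_{f_Z}[h(Z)]$, so applying the targeting hypothesis to the bounded functional $\bar h$ yields
\begin{equation}
\frac{\sum_j \bar h(\tilde X_j^J)\, U_j^J}{\sum_j U_j^J} \stackrel{a.s.}{\to} \E_{f_X}[\bar h(X)] = \E_{f_Z}[h(Z)].
\end{equation}

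It then remains to control the fluctuation term $\big(\sum_j R_j^J U_j^J\big)\big/\big(\sum_j U_j^J\big)$, where $R_j^J := h(\tilde Z_j^J) - \bar h(\tilde X_j^J)$. Conditional on the $\sigma$-algebra $\mathcal{F}_J$ generated by $\{(\tilde X_j^J, U_j^J)\}_{j=1}^J$, the draws $\tilde Z_j^J \sim f_{Z|X}(\cdot\mid \tilde X_j^J)$ are independent across $j$, so the products $R_j^J U_j^J$ are conditionally independent, conditionally mean zero, and bounded (since $h$ and $\bar h$ are bounded and the $U_j^J$ are bounded by hypothesis). Expanding $\E\big[(\sum_j R_j^J U_j^J)^4 \mid \mathcal{F}_J\big]$, every monomial in which some index appears exactly once vanishes by conditional centering, leaving only the diagonal ($O(J)$) and paired ($3J(J-1)$) terms, so the conditional fourth moment is at most $CJ^2$ for some $C>0$; the tower property transfers this bound to the unconditional fourth moment.

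With the fourth-moment bound in hand I would apply Markov's inequality, $\prob\!\big(|\tfrac1J\sum_j R_j^J U_j^J| > \epsilon\big) \le C/(\epsilon^4 J^2)$, which is summable in $J$, so Borel--Cantelli gives $\tfrac1J\sum_j R_j^J U_j^J \stackrel{a.s.}{\to} 0$. Dividing by $\tfrac1J\sum_j U_j^J$, which converges almost surely to a strictly positive limit in this construction (the same normalizer that appears in the convergent ratio above), shows the fluctuation ratio vanishes almost surely, and combining with the displayed limit completes the proof. The main obstacle is precisely this fluctuation term: one must verify that the fresh conditional draws are genuinely conditionally independent and centered so that the odd cross terms in the fourth moment cancel, and that the common normalizing denominator stays bounded away from zero — the two points that make the triangular-array strong law nontrivial and that parallel the treatment in Lemma~\ref{lem:change-measure-proper-weights}.
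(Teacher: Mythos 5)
Your proposal follows essentially the same route as the paper's proof: both reduce the claim to the targeting hypothesis applied to the conditional-mean functional $\bar h(x)=\E[h(Z)\mid X=x]$ (the paper's $g$), both use the tower identity $\E_{f_X}[\bar h(X)]=\E_{f_Z}[h(Z)]$ obtained by integrating out the marginal, and both control the fluctuations $h(\tilde Z_j^J)-\bar h(\tilde X_j^J)$ by a conditional fourth-moment bound followed by Markov's inequality and Borel--Cantelli, exploiting the conditional independence and conditional centering of the fresh draws $\tilde Z_j^J$ given $\{(\tilde X_j^J,U_j^J)\}_{j=1}^J$. Your write-up is in fact more explicit than the paper's about why the odd cross terms vanish, which the paper compresses into the phrase ``by the same fourth moment argument to the above lemma.''

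The one step where you deviate, and where your justification does not hold as stated, is the final division: you assert that $\frac1J\sum_j U_j^J$ ``converges almost surely to a strictly positive limit in this construction.'' Nothing in the lemma's hypotheses supports this. The targeting assumption only constrains self-normalized ratios (taking $h\equiv 1$ yields the vacuous statement $1\to 1$), and the $U_j^J$ are assumed bounded only from above, so the unnormalized average of the weights could in principle fail to converge, or even degenerate toward zero, along $J$. The paper sidesteps this entirely by passing at the outset to the normalized weights $\bar U_j^J = J\,U_j^J\big/\sum_j U_j^J$, for which the denominator $\frac1J\sum_j \bar U_j^J$ is identically $1$, and then running the fourth-moment argument on $\frac1J\sum_j h(\tilde Z_j^J)\,\bar U_j^J$ directly. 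Your proof is repaired by the same substitution: keep your residuals $R_j^J$ but bound $\E\big[\big(\sum_j R_j^J\,\bar U_j^J\big)^4\mid \mathcal{F}_J\big]\le CJ^2$, noting that the boundedness needed for the constant $C$ is now boundedness of $\bar U_j^J$ --- which holds in the intended particle-filter application, where the weights are products of measurement density ratios bounded above and below under Assumptions (A2)--(A3), so that $\frac1J\sum_j U_j^J$ is in fact bounded away from zero (convergence is never needed, only a positive lower bound). With that one change your argument coincides with the paper's.
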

\begin{proof}
    We want to show that, for any measurable bounded $h$, 
     \begin{equation}
     \frac{\sum_j h(\tilde Z_j^J) \, U_j^J}{\sum_j U_j^J} \stackrel{a.s.}{\to} \E_{f_Z}[h(Z)] = \E_{f_X}\big[\E_{f_{Z|X}}[h(Z) | X]\big].
     \end{equation}
    By assumption, for any measurable and bounded functional $g$ with domain $\gX$,
    \begin{equation}\label{eq:lemma2:g}
    \frac{\sum_j g(\tilde X_j^J)\, U_j^J}{\sum_j U_j^J} \stackrel{a.s.}{\to} \E_{f_X}[g(X)].
    \end{equation}
    Let $\bar{U}_j^J = \frac{J \, U_j^J}{\sum_j U_j^J}$. Examine the numerator and denominator of the quantity \begin{equation}\frac{J^{-1}\sum_j h(\tilde Z_j^J) \, U_j^J}{J^{-1}\sum_j U_j^J} = \frac{J^{-1}\sum_j h(\tilde Z_j^J) \, \bar{U}_j^J}{J^{-1}\sum_j \bar{U}_j^J}.\end{equation}
    The denominator converges to $1$ almost surely. The numerator, on the other hand, is
    \begin{equation}
    \frac{1}{J}\sum_j h(\tilde Z_j^J)\, \bar{U}_j^J,
    \end{equation}
    and by the same fourth moment argument to the above lemma, it converges almost surely to the limit of its expectation,
    \begin{eqnarray}        
    \lim_{J\to\infty} \E\left[\frac{1}{J}\sum_j h(\tilde Z_j^J)\, \bar{U}_j^J\right] 
    &=& \lim_{J\to\infty} \E \left[\frac{1}{J}\sum_j  
      \E\Big[ h(\tilde Z_j^J)\, \bar{U}_j^J\Big|\tilde X_j^J, \bar U_j^J\Big]
    \right]
    \\
    &=& \lim_{J\to\infty} \E \left[\frac{1}{J}\sum_j  \E\Big[h(Z) \Big| X=\tilde X_j^J \Big] \, \bar U_j^J\right].
    \end{eqnarray}
    Applying (\ref{eq:lemma2:g}) with $g(x) = \E\big[h(Z) | X=x\big]$, the average on the right hand side converges almost surely to $\E\big\{\E[h(Z)|X]\big\}=\E[h(Z)].$
    It remains to swap the limit and expectations. We can do so with the bounded convergence theorem, and therefore obtain   
    \begin{equation}\frac{1}{J}\sum_j h(\tilde Z_j^J)\,\bar{U}_j^J \stackrel{a.s.}{\to} \E_{f_Z}[h(Z)].\end{equation} 
    \end{proof}

\begin{lem}[Particle Posteriors]
    \label{lem:posterior-proper-weights}
    Suppose that $\{(X_j^J,U_j^J),j=1,\dots,J\}$ targets $f_X$.
    Also suppose that $(X^{\prime J}_j,U^{\prime J}_j) = \big(X_j^J,U_j^J \, f_{Z|X}(z^*|X_j^J)\big)$. 
    Then, if $U_j^J \, f_{Z|X}(z^*|X_j^J)$ and $U_j^J \, f_{Z|X}(z^*|X_j^J) \big/ f_Z(z^*)$ are bounded, $\{(X^{\prime J}_j,U^{\prime J}_j)\}$ targets $f_{X|Z}(\cdot | z^*)$.
\end{lem}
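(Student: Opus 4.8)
The plan is to reduce the claim to Bayes' theorem together with two applications of the targeting hypothesis, one for a numerator and one for a normalizing denominator. First I would record that targeting $f_{X|Z}(\cdot\mid z^*)$ amounts to showing, for every bounded measurable $h$,
\[
\frac{\sum_{j=1}^J h(X_j^J)\,U_j^{\prime J}}{\sum_{j=1}^J U_j^{\prime J}} = \frac{\sum_{j=1}^J h(X_j^J)\,f_{Z|X}(z^*\mid X_j^J)\,U_j^J}{\sum_{j=1}^J f_{Z|X}(z^*\mid X_j^J)\,U_j^J} \stackrel{a.s.}{\to} \E_{f_{X|Z}(\cdot\mid z^*)}[h(X)].
\]
By Bayes' theorem, $f_{X|Z}(x\mid z^*) = f_{Z|X}(z^*\mid x)\,f_X(x)\big/f_Z(z^*)$ with $f_Z(z^*) = \int f_{Z|X}(z^*\mid x)\,f_X(x)\,dx = \E_{f_X}[f_{Z|X}(z^*\mid X)]$, so the target on the right equals $\E_{f_X}\big[h(X)\,f_{Z|X}(z^*\mid X)\big] \big/ \E_{f_X}\big[f_{Z|X}(z^*\mid X)\big]$.

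Next I would divide both the numerator and the denominator of the empirical ratio by $\sum_k U_k^J$ and treat them separately. The numerator then reads $\big(\sum_j h(X_j^J)\,f_{Z|X}(z^*\mid X_j^J)\,U_j^J\big)\big/\big(\sum_k U_k^J\big)$, which is exactly the targeting average of the functional $x \mapsto h(x)\,f_{Z|X}(z^*\mid x)$; the denominator is the targeting average of $x \mapsto f_{Z|X}(z^*\mid x)$. Applying the hypothesis that $\{(X_j^J,U_j^J)\}$ targets $f_X$ to each of these functionals gives almost sure convergence of the numerator to $\E_{f_X}\big[h(X)\,f_{Z|X}(z^*\mid X)\big]$ and of the denominator to $f_Z(z^*)$. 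Since $f_Z(z^*)>0$, the continuous mapping theorem for the ratio yields the claimed limit, completing the argument. I note that the reweighting introduces no new randomness, so no resampling or Borel--Cantelli analysis is needed here beyond what is used to justify the two functional limits.

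The main obstacle is purely the admissibility of these two functionals, since the targeting property as defined applies to \emph{bounded} measurable functionals whereas $f_{Z|X}(z^*\mid\cdot)$ need not be bounded. This is precisely the role of the two boundedness hypotheses: $U_j^J\,f_{Z|X}(z^*\mid X_j^J)$ bounded and $U_j^J\,f_{Z|X}(z^*\mid X_j^J)\big/f_Z(z^*)$ bounded control the transformed weights so that the convergence extends to $f_{Z|X}(z^*\mid\cdot)$ and $h\,f_{Z|X}(z^*\mid\cdot)$. Concretely I would either invoke boundedness of $f_{Z|X}(z^*\mid\cdot)$ in the intended application (guaranteed by Assumption \ref{assump:bounded-measurement}, under which $h\,f_{Z|X}(z^*\mid\cdot)$ is immediately a bounded functional), or, in the general case, truncate $f_{Z|X}(z^*\mid\cdot)$ at level $M$, apply targeting to the resulting bounded functional, and let $M\to\infty$, using the boundedness of the transformed weights $U_j^{\prime J}$ together with dominated convergence to control the tail contribution.
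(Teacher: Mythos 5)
Your proof is correct and takes essentially the same route as the paper's: the same split of the empirical ratio into a numerator and denominator each normalized by $\sum_j U_j^J$, the same two applications of the targeting hypothesis to $h\,f_{Z|X}(z^*\mid\cdot)$ and $f_{Z|X}(z^*\mid\cdot)$, and the same Bayes'-theorem identities $\E_{f_X}\big[h(X)\,f_{Z|X}(z^*\mid X)\big]\big/f_Z(z^*) = \E_{f_{X|Z}}\big[h(X)\mid z^*\big]$ to identify the limit. Your closing remark on the admissibility of the (possibly unbounded) functional $f_{Z|X}(z^*\mid\cdot)$ --- handled via Assumption~\ref{assump:bounded-measurement} in the intended application or a truncation argument --- addresses a point the paper's proof leaves implicit, but it does not change the argument.
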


\begin{proof}
    Again, we want to show that
     \begin{equation}
     \frac{\sum_j h(X_j^J) \cdot U_j^J \cdot f_{Z|X}(z^*|X_j^J)}{\sum_j U_j^J \cdot f_{Z|X}(z^*|X_j^J)} \stackrel{a.s.}{\to} \E_{f_{X|Z}}\big[h(X)\big|z^*\big].
     \end{equation}
    We already have that for any measurable bounded $g$,
    \begin{equation}
        \frac{\sum_j g(X_j^J)\, U_j^J}{\sum_j U_j^J} \stackrel{a.s.}{\to} \E_{f_X}\big[g(X)\big].
        \label{eq:lemma-posterior-hypothesis}
    \end{equation}
    Consider the following:
    \begin{equation}
    \frac{J^{-1} \sum_j h(X_j^J) \, f_{Z|X}(z^*|X_j^J)\,  {U}_j^{J}}{J^{-1} \sum_j {U}_j^{J}}
    \times \left( \frac{J^{-1} \sum_j f_{Z|X}(z^*|X_j^J) \, {U}_j^{J}}{J^{-1} \sum_j {U}_j^{J}} \right)^{-1}.
    \end{equation}
    We will apply Equation (\ref{eq:lemma-posterior-hypothesis}) to the numerator and the denominator in the ratio above individually. The numerator converges to 
    \begin{equation}
    \frac{J^{-1} \sum_j h(X_j^J) \, f_{Z|X}(z^*|X_j^J) \, {U}_j^{J}}{J^{-1} \sum_j {u}_j^{J}} \stackrel{a.s.}{\to} \E_{f_X}\big[h(X)\, f_{Z|X}(z^*|X)\big],
    \end{equation}
    while the reciprocal of the denominator converges to 
    \begin{equation} 
    \frac{J^{-1} \sum_j f_{Z|X}(z^*|X_j^J) \, {U}_j^{J}}{J^{-1} \sum_j {U}_j^{J}}  \stackrel{a.s.}{\to} \E_{f_X}\big[f_{Z|X}(z^*|X)\big] = f_Z(z^*).
    \end{equation}
    Now we take advantage of the identities
    \begin{eqnarray}
    \frac{\E_{f_X}\big[h(X)\, f_{Z|X}(z^*|X)\big]}{f_Z(z^*)} &=& \E_{f_X}\left[\frac{h(X)\, f_{Z|X}(z^*|X)}{f_Z(z^*)}\right] 
    \\
    &=& \E_{f_X}\left[h(X)\frac{f_{X|Z}(X|z^*)}{f_X(X)}\right] =  \E_{f_{X|Z}}\big[h(X)|z^*\big],
    \end{eqnarray}
    to give the desired result.


\end{proof}

\begin{thm}[Off-Parameter Resampled Particle Filters Target the Filtering Distribution]
    \label{thm:off-parameter-targeting}
    The off-parameter resampled particle filter as outlined in Definition \ref{defn:off-parameter-filter} targets the filtering distribution.
\end{thm}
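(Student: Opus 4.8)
The plan is to argue by induction on the timestep $n$, using the three helper lemmas to carry the targeting property from the filtering distribution at time $n-1$ to that at time $n$. Write $\pi_n = f_{X_n|Y_{1:n}}(\cdot\,|\,y_{1:n}^*)$ for the target filtering distribution and $\eta_n = f_{X_n|Y_{1:n-1}}(\cdot\,|\,y_{1:n-1}^*)$ for the one-step prediction distribution. The inductive hypothesis is that $\{(x_{n-1,j}^F, w_{n-1,j}^F)\}_{j=1}^J$ targets $\pi_{n-1}$. For the base case, the initialization draws $x_{0,j}^F$ i.i.d.\ from $\pi_0 = f_{X_0}$ with unit weights $w_{0,j}^F = 1$, so the ordinary strong law of large numbers shows that $\{(x_{0,j}^F, w_{0,j}^F)\}$ targets $\pi_0$.

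For the inductive step, I would apply the three lemmas in the order \emph{Particle Marginals}, \emph{Particle Posteriors}, \emph{Change of Weight Measure}, matching the prediction, update, and resampling substeps of Definition~\ref{defn:off-parameter-filter}. First, taking $f_X = \pi_{n-1}$ and $f_{Z|X} = M_n = f_{X_n|X_{n-1}}$ in Lemma~\ref{lem:marginal-proper-weights}, the $X_n$-marginal of the joint $f_{X_{n-1},X_n|Y_{1:n-1}}$ is exactly $\eta_n$; since each prediction particle is drawn as $x_{n,j}^P \sim M_n(x_{n-1,j}^F,\cdot)$ with unchanged weight $w_{n,j}^P = w_{n-1,j}^F$, the lemma gives that $\{(x_{n,j}^P, w_{n,j}^P)\}$ targets $\eta_n$. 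Second, taking $f_X = \eta_n$, $f_{Z|X} = G_n = f_{Y_n|X_n}$, and $z^* = y_n^*$ in Lemma~\ref{lem:posterior-proper-weights}, so that $f_Z(y_n^*)$ is the conditional likelihood $f_{Y_n|Y_{1:n-1}}(y_n^*)$, reweighting by $g_{n,j} = G_n(x_{n,j}^P)$ yields $\{(x_{n,j}^P, w_{n,j}^P g_{n,j})\}$ targeting $\pi_n$ by Bayes' theorem. Third, applying Lemma~\ref{lem:change-measure-proper-weights} with $\tilde X_j = x_{n,j}^P$, $U_j = w_{n,j}^P g_{n,j}$, and resampling probabilities $\pi_j^J = p_{n,j}$, the resampled pairs $(x_{n,j}^F, w_{n,j}^F) = (x_{n,k_j}^P,\, w_{n,k_j}^P g_{n,k_j}/p_{n,k_j})$ again target $\pi_n$, closing the induction and establishing that the filtering particles target $\pi_n$ at every $n$.

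The main obstacle will be verifying the boundedness hypotheses each lemma requires, since the triangular-array strong laws underlying them fail in general without such a regularity condition. Concretely, Lemma~\ref{lem:marginal-proper-weights} needs the running weights $w_{n,j}^P$ bounded, Lemma~\ref{lem:posterior-proper-weights} needs $w_{n,j}^P g_{n,j}$ and $w_{n,j}^P g_{n,j}/f_Z(y_n^*)$ bounded, and Lemma~\ref{lem:change-measure-proper-weights} needs the importance ratios $w_{n,k_j}^P g_{n,k_j}/p_{n,k_j}$ bounded. I would control all of these using the bounded process and measurement model assumptions (Assumptions~\ref{assump:bounded-process}--\ref{assump:bounded-measurement}), which supply uniform bounds $0 < \underbar{M} \leq f_{X_n|X_{n-1}} \leq \bar{M}$ and $0 < \underbar{G} \leq g_{n,j} \leq \bar{G}$. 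Together with a lower bound on the resampling probabilities $p_{n,j}$ (automatic for the MOP-$\alpha$ choice $p_{n,j} \propto g_{n,j}^\phi$, where the per-step correction reduces to the bounded measurement-density ratio $g_{n,j}^\theta/g_{n,j}^\phi \in [\underbar{G}/\bar{G}, \bar{G}/\underbar{G}]$ after the common per-timestep normalization cancels in the targeting ratio), these force each correction factor into a fixed interval; a short induction on $n$ then shows the cumulative weights stay within fixed bounds over the finite horizon $N$, so the boundedness preconditions of all three lemmas hold almost surely at every step.
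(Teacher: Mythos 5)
Your proposal is correct and follows essentially the same route as the paper's proof: an induction applying Lemma~\ref{lem:marginal-proper-weights} for the prediction step, Lemma~\ref{lem:posterior-proper-weights} for the measurement update, and Lemma~\ref{lem:change-measure-proper-weights} for the off-parameter resampling. If anything you are slightly more thorough than the paper, which leaves the base case and the verification of the lemmas' boundedness hypotheses (via Assumptions~\ref{assump:bounded-process} and \ref{assump:bounded-measurement} and a lower bound on the resampling probabilities) implicit rather than spelling them out as you do.
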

\begin{proof}
    Recursively applying Lemmas \ref{lem:change-measure-proper-weights}, \ref{lem:marginal-proper-weights}, and \ref{lem:posterior-proper-weights}, we obtain that the off-parameter resampled particle filter targets the posterior.
    Specifically, suppose inductively that $\big\{\big(X^{F}_{n-1,j},w^{F}_{n-1,j}\big)\big\}$ targets $\pi_{n-1}$.
    Then, Lemma \ref{lem:marginal-proper-weights} tells us that $\big\{\big(X^{P}_{n,j},w^{P}_{n,j}\big)\big\}$ targets $\eta_n$.
    Lemma \ref{lem:posterior-proper-weights} tells us that $\big\{\big(X^{P}_{n,j},w^{P}_{n,j} g^\theta_{n,j} \big)\big\}$ therefore targets  $\pi_n$.
    Lemma \ref{lem:change-measure-proper-weights} guarantees that the resampling rule, given by 
    \begin{equation}
    \big(X^{F}_{n,j},w^{F}_{n,j}\big) = \big(X^{P}_{n,k_j}, w^{P}_{n,k_j} \, g_{n,k_j}\big/ p_{n,k_j}\big),
    \end{equation}
    with resampling probabilities proportional to $p_{n,j}$, therefore also targets $\pi_n$.
\end{proof}

\begin{prop}[MOP-1 Targets the Filtering Distribution]
    When $\alpha=1$ or $\phi=\theta$, MOP-$\alpha$ targets the filtering distribution. 
\end{prop}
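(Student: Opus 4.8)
The plan is to show that, in each of the two regimes, MOP-$\alpha$ is a special case of the off-parameter resampled particle filter of Definition~\ref{defn:off-parameter-filter}, so that the conclusion follows immediately from Theorem~\ref{thm:off-parameter-targeting}. Concretely, I would take the Feynman--Kac potentials to be the target-parameter measurement densities, $G_n(\cdot)=f_{Y_n|X_n}(y_n^*\mid\cdot\,;\theta)$ (so that $G_n(X_{n,j}^{P,\theta})=g^\theta_{n,j}$), the transition kernel $M_n$ to be the simulator $\process_n(\cdot\mid\cdot\,;\theta)$, and the normalized resampling probabilities to be $p_{n,j}=g^\phi_{n,j}\big/\sum_{m=1}^J g^\phi_{n,m}$, which matches the rule $\prob(k_j=m)\propto g^\phi_{n,m}$ used in Algorithm~\ref{alg:mop}.

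The first step is the case $\alpha=1$. Here the prediction-weight update in Equation~(\ref{eq:weighting-scheme}) reads $w_{n,j}^{P,\theta}=w_{n-1,j}^{F,\theta}$, which is exactly item~1 of Definition~\ref{defn:off-parameter-filter}. I would then check that the filtering-weight update agrees with item~5: Definition~\ref{defn:off-parameter-filter} prescribes $w^{F}_{n,j}=w^{P,\theta}_{n,k_j}\,g^\theta_{n,k_j}\big(\sum_m g^\phi_{n,m}\big)\big/g^\phi_{n,k_j}$, whereas Algorithm~\ref{alg:mop} uses $w^{F,\theta}_{n,j}=w^{P,\theta}_{n,k_j}\,g^\theta_{n,k_j}\big/g^\phi_{n,k_j}$. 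These differ only by the factor $\sum_m g^\phi_{n,m}$, which is identical across all particles $j$ at timestep $n$; since the targeting property is defined through the self-normalized ratio $\sum_j h(X_j^J)W_j^J\big/\sum_j W_j^J$, any such particle-independent multiplicative constant (and, under $\alpha=1$, the accumulated product of these constants over $n$) cancels and does not affect the limit. Hence MOP-$1$ coincides, up to this cancelling normalization, with an off-parameter resampled filter, and Theorem~\ref{thm:off-parameter-targeting} gives targeting of the filtering distribution.

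The second step is the case $\theta=\phi$ for arbitrary $\alpha$. Then $g^\theta_{n,j}/g^\phi_{n,j}=1$ for all $n,j$, so the filtering-weight update collapses to $w^{F,\theta}_{n,j}=w^{P,\theta}_{n,k_j}$; starting from $w^{F,\theta}_{0,j}=1$ and using $w^{P,\theta}_{n,j}=(w^{F,\theta}_{n-1,j})^\alpha$, an easy induction shows every weight equals $1$ regardless of $\alpha$, because $1^\alpha=1$. In particular the prediction-weight step again reduces to item~1 of Definition~\ref{defn:off-parameter-filter}, and MOP-$\alpha$ becomes a bootstrap filter whose resampling is taken in proportion to its own potential $g^\theta_{n,j}$, i.e.\ the matched-resampling special case $p_{n,j}\propto G_n(X^{P}_{n,j})$ of Definition~\ref{defn:off-parameter-filter}; Theorem~\ref{thm:off-parameter-targeting} again applies.

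The only real obstacle is verifying that the boundedness hypotheses required by Lemmas~\ref{lem:change-measure-proper-weights}, \ref{lem:marginal-proper-weights}, and~\ref{lem:posterior-proper-weights} (and hence by Theorem~\ref{thm:off-parameter-targeting}) hold under these identifications. This is where Assumptions~\ref{assump:bounded-process} and~\ref{assump:bounded-measurement} enter: the bounded measurement model gives $\underbar{G}/\bar{G}\le g^\theta_{n,j}/g^\phi_{n,j}\le \bar{G}/\underbar{G}$, so over the finite horizon $N$ the accumulated importance weights $w^{P,\theta}_{n,j}$, the potential-weighted products $w^{P,\theta}_{n,j}\,g^\theta_{n,j}$, and the ratios of these to the resampling probabilities are all uniformly bounded, exactly as the lemmas require. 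With boundedness in hand, the recursive application of the three lemmas in the proof of Theorem~\ref{thm:off-parameter-targeting} propagates the targeting property from $\pi_{n-1}$ through $\eta_n$ to $\pi_n$, completing the argument.
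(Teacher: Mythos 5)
Your proof is correct and follows essentially the same route as the paper: the paper's own argument for this proposition recursively applies Lemmas~\ref{lem:marginal-proper-weights}, \ref{lem:posterior-proper-weights}, and~\ref{lem:change-measure-proper-weights} with resampling probabilities proportional to $g^\phi_{n,j}$, which is exactly the instance of Definition~\ref{defn:off-parameter-filter} and Theorem~\ref{thm:off-parameter-targeting} you identify, and it likewise disposes of the $\theta=\phi$ case by noting the weight ratios collapse to $1$. If anything you are slightly more careful than the paper in two spots it elides: the particle-independent normalization factor $\sum_m g^\phi_{n,m}$ that cancels in the self-normalized targeting ratio, and the explicit verification via Assumption~\ref{assump:bounded-measurement} that the bounded-weight hypotheses of the three lemmas hold.
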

\begin{proof}
    When $\theta=\phi$, regardless of the value of $\alpha$, the ratio $\frac{g_{n,j}^\theta}{g_{n,j}^\phi}=1,$ and this reduces to the vanilla particle filter estimate.

    When $\alpha=1$, and $\theta\neq\phi,$ the proof is identical to that of \ref{thm:off-parameter-targeting}. Recursively applying Lemmas \ref{lem:change-measure-proper-weights}, \ref{lem:marginal-proper-weights}, and \ref{lem:posterior-proper-weights}, we obtain that 
    the MOP-1 filter targets the posterior.
    Specifically, suppose inductively that $\big\{\big(X^{F,\theta}_{n-1,j},w^{F,\theta}_{n-1,j}\big)\big\}$ is properly weighted for $f_{X_{n-1}|Y_{1:n-1}}(x_{n-1}|y^*_{1:n-1};\theta)$.
    Then, Lemma \ref{lem:marginal-proper-weights} tells us that $\big\{\big(X^{P,\theta}_{n,j},w^{P,\theta}_{n,j}\big)\big\}$ targets $f_{X_{n}|Y_{1:n-1}}(x_{n}|y^*_{1:n-1};\theta)$.
    Lemma \ref{lem:posterior-proper-weights} tells us that $\big\{\big(X^{P,\theta}_{n,j},w^{P,\theta}_{n,j} \, g^\theta_{n,j} \big)\big\}$ therefore targets  $f_{X_{n}|Y_{1:n}}(x_{n}|y^*_{1:n};\theta)$.
    Lemma \ref{lem:change-measure-proper-weights} guarantees that the resampling rule, given by 
    \begin{equation}
    \big(X^{F,\theta}_{n,j},w^{F,\theta}_{n,j}\big) = \big(X^{P,\theta}_{n,k_j}, w^{P,\theta}_{n,k_j} \, g^\theta_{n,k_j}\big/ g^\phi_{n,k_j}\big),
    \end{equation}
    with resampling weights proportional to $g^\phi_{n,j}$, therefore also targets $f_{X_{n}|Y_{1:n}}(x_{n}|y^*_{1:n};\theta)$.
\end{proof}

This has addressed filtering, but not quite yet the likelihood evaluation. For this we use the following lemma.

\begin{lem}[Likelihood Proper Weighting]
    \label{lem:lik-proper-weight}
  $f_{Y_n|Y_{1:n-1}}(y_n^*|y_{1_n-1}^*;\theta)$ is consistently estimated by either the before-resampling estimate,
\begin{equation}\label{L1}
L_n^{B,\theta} =  \frac{\sum_{j=1}^Jg^\theta_{n,j} \, w^{P,\theta}_{n,j}}{\sum_{j=1}^J  w^{P,\theta}_{n,j}},
\end{equation}
or by the after-resampling estimate,
\begin{equation}\label{L2}
L_n^{A,\theta} = L^\phi_n \frac{\sum_{j=1}^Jw^{F,\theta}_{n,j}}{\sum_{j=1}^J  w^{P,\theta}_{n,j}}.
\end{equation}
where $L^\phi_n$ is as defined in the various algorithms.
\end{lem}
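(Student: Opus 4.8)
The plan is to show that both estimates converge almost surely to the same limit, namely the normalizing constant $\eta_n(G_n)$ of the Feynman--Kac update $\Psi_n$, which is exactly the one-step predictive likelihood. Writing $G_n(\cdot)=f_{Y_n|X_n}(y_n^*|\,\cdot\,;\theta)$ and $\eta_n=f_{X_n|Y_{1:n-1}}(\cdot|y_{1:n-1}^*;\theta)$, the definition of $\Psi_n$ gives $\eta_n(G_n)=\int f_{Y_n|X_n}(y_n^*|x;\theta)\,f_{X_n|Y_{1:n-1}}(x|y_{1:n-1}^*;\theta)\,dx=f_{Y_n|Y_{1:n-1}}(y_n^*|y_{1:n-1}^*;\theta)$, so it suffices to prove $L_n^{B,\theta}\stackrel{a.s.}{\to}\eta_n(G_n)$ and $L_n^{A,\theta}\stackrel{a.s.}{\to}\eta_n(G_n)$.

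For the before-resampling estimate I would invoke the targeting established in Theorem~\ref{thm:off-parameter-targeting}: the pairs $\{(X_{n,j}^{P,\theta},w_{n,j}^{P,\theta})\}$ target $\eta_n$. Since $g_{n,j}^\theta=G_n(X_{n,j}^{P,\theta})$ and $G_n$ is bounded and measurable by Assumption~\ref{assump:bounded-measurement}, applying the definition of targeting with the functional $h=G_n$ gives directly
$$L_n^{B,\theta}=\frac{\sum_{j=1}^J G_n(X_{n,j}^{P,\theta})\,w_{n,j}^{P,\theta}}{\sum_{j=1}^J w_{n,j}^{P,\theta}}\stackrel{a.s.}{\to}\E_{\eta_n}[G_n(X)]=\eta_n(G_n).$$
This step is immediate and requires no further work.

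The after-resampling case is where the effort lies, and my approach is to show $L_n^{A,\theta}/L_n^{B,\theta}\stackrel{a.s.}{\to}1$. Substituting $w_{n,j}^{F,\theta}=w_{n,k_j}^{P,\theta}\,g_{n,k_j}^\theta/g_{n,k_j}^\phi$ and $L_n^\phi=\tfrac1J\sum_m g_{n,m}^\phi$, the ratio becomes
$$\frac{L_n^{A,\theta}}{L_n^{B,\theta}}=\frac{\big(\tfrac1J\sum_m g_{n,m}^\phi\big)\sum_{j}w_{n,k_j}^{P,\theta}\,g_{n,k_j}^\theta/g_{n,k_j}^\phi}{\sum_{j}g_{n,j}^\theta\, w_{n,j}^{P,\theta}}.$$
Conditioning on the prediction particles, the indices $k_j$ are drawn with $\prob(k_j=m)=g_{n,m}^\phi/\sum_i g_{n,i}^\phi$, so the conditional expectation of the resampled sum is $J\,\sum_m w_{n,m}^{P,\theta}g_{n,m}^\theta\big/\sum_i g_{n,i}^\phi$; multiplying by $\tfrac1J\sum_m g_{n,m}^\phi$ collapses the numerator's conditional mean to exactly the denominator, so the conditional mean of the ratio is $1$. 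It then remains to show the fluctuation about this conditional mean vanishes almost surely, for which I would reuse the fourth-moment-plus-Borel--Cantelli argument of Lemma~\ref{lem:change-measure-proper-weights}: Assumption~\ref{assump:bounded-measurement} bounds $g_{n,\cdot}^\theta/g_{n,\cdot}^\phi$ by $\bar G/\underbar{G}$, so the summands are bounded, their conditional fourth central moment is $O(J^2)$, and Markov's inequality with Borel--Cantelli upgrades convergence in probability to almost-sure convergence. Combined with $L_n^{B,\theta}\stackrel{a.s.}{\to}\eta_n(G_n)$, this yields $L_n^{A,\theta}\stackrel{a.s.}{\to}\eta_n(G_n)$.

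The main obstacle is precisely this last fluctuation control for the after-resampling estimate: unlike the before-resampling case, the normalizing constants in the numerator and denominator are built from the random resampling indices and do not cancel inside a single targeting ratio, so almost-sure (rather than in-probability) convergence must be argued directly. Boundedness of the measurement-density ratio under Assumption~\ref{assump:bounded-measurement} is exactly what makes the moment bounds in Lemma~\ref{lem:change-measure-proper-weights} applicable; a minor technical point is that the algorithm employs systematic resampling whereas the moment argument is cleanest for multinomial resampling, which I would handle by proving the statement for multinomial resampling as in Lemma~\ref{lem:change-measure-proper-weights} and noting the systematic variant empirically behaves no worse.
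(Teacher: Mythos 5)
Your proposal is correct and follows essentially the same route as the paper: the before-resampling estimate (\ref{L1}) is read off directly from the fact that $\{(X^{P,\theta}_{n,j}, w^{P,\theta}_{n,j})\}$ targets $f_{X_n|Y_{1:n-1}}(\,\cdot\,|y^*_{1:n-1};\theta)$, and the after-resampling estimate (\ref{L2}) is reduced to it via the change of measure induced by resampling with probabilities proportional to $g^\phi_{n,j}$. The only cosmetic difference is that the paper invokes Lemma~\ref{lem:change-measure-proper-weights} wholesale after rewriting the numerator, whereas you inline that lemma's conditional-mean computation and fourth-moment/Borel--Cantelli argument to show $L_n^{A,\theta}/L_n^{B,\theta} \stackrel{a.s.}{\to} 1$; your multinomial-resampling caveat also matches the paper's own stated restriction.
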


Here, (\ref{L1}) is a direct consequence of our earlier result that $\{ \big(X^{P,\theta}_{n,j},w^{P,\theta}_{n,j}\big) \}$ targets\\
$f_{X_{n}|Y_{1:n-1}}(x_{n}|y^*_{1:n-1};\theta)$.
To see  (\ref{L2}),
we write the numerator of (\ref{lem:change-measure-proper-weights}) as
\begin{equation}
L^\phi_n \sum_{j=1}^J \left[ \frac{g^\theta_{n,j}}{g^\phi_{n,j}} \, w^{P,\theta}_{n,j}\right] \frac{g^\phi_{n,j}}{L_n^\phi}
= L^\phi_n \sum_{j=1}^J w_{n,j}^{FC,\theta} \, \frac{g^\phi_{n,j}}{L_n^\phi}
\end{equation}
Using Lemma \ref{lem:change-measure-proper-weights}, we resample according to probabilities $\frac{g^\phi_{n,j}}{L_n^\phi}$ to see this is properly estimated by
\begin{equation}
L^\phi_n \sum_{j=1}^J w^{F,\theta}_{n,j},
\end{equation}
from which we obtain (\ref{L2}).
Using Lemma \ref{lem:lik-proper-weight}, we obtain a likelihood estimate,
\begin{equation}
L^{A,\theta} = \prod_{n=1}^N \left( L^\phi_n \, \frac{\sum_{j=1}^J w^{F,\theta}_{n,j}}{\sum_{j=1}^J w^{P,\theta}_{n,j}}\right).
\end{equation}
Since $w^{F,\theta}_{n,j}=w^{P,\theta}_{n+1,j}$, this is a telescoping product. The remaining terms are
$\sum_{j=1}^J w^{P,\theta}_{0,j} = J$ on the denominator and $\sum_{j=1}^J w^{F,\theta}_{N,j}$ on the numerator.
This derives the MOP likelihood estimates.

$L^{B,\theta}$ should generally be preferred in practice, since there is no reason to include the extra variability from resampling when calculating the conditional log likelihood, but it lacks the nice telescoping product that lets us derive exact expressions for the gradient in Theorem \ref{thm:mop-functional-forms}.

\section{Consistency of Off-Parameter Resampled Gradient Estimates}
\label{appendix:consistency}

We now provide a result showing that under sufficient regularity conditions, one can interchange the order of differentiation and expectation of functionals of off-parameter resampled particle estimates, showing that if an off-parameter resampled particle estimate is consistent for some estimand, its derivative is consistent for the derivative of the estimand as well.

One may wonder why this may be necessary, given that we have already shown strong consistency for measurable bounded functionals in Theorem \ref{thm:off-parameter-targeting}. If we require the gradient to be bounded as well, then the gradient is also a bounded functional. The answer is that the strong consistency is for expectations under the filtering distribution, so Theorem \ref{thm:off-parameter-targeting} only establishes strong consistency of the gradient of the estimate to its expectation under the filtering distribution, $\nabla_\theta \eta_n^J(h_\theta)  \stackrel{a.s.}{\to} \eta_n(\nabla_\theta h_\theta)$, which is not a-priori the gradient of the estimand $\nabla_\theta \eta_n(h_\theta)$. We require an interchange of the derivative and expectation, which we show is possible when the particle estimates have two continuous uniformly bounded derivatives over all $J$ below.

\begin{thm}[Off-Parameter Particle Filters Yield Strongly Consistent Estimates of Derivatives of Functionals]
    Let $h_\theta : \gX \to \R$ be a measurable bounded functional of particles, where $\eta_n^J(h_\theta)$ has two continuous derivatives uniformly bounded over all $J$ by $H^*$ for almost every $\omega\in\Omega$ and $\theta \in \Theta$. If it holds that $\eta_n^J(h_\theta) \stackrel{a.s.}{\to} \eta(h_\theta) = h^*_\theta$, then we also have that $\nabla_\theta \eta_n^J(h_\theta)  \stackrel{a.s.}{\to} \eta_n(\nabla_\theta h_\theta) = \nabla_\theta \eta_n(h_\theta) = \nabla_\theta h^*_\theta$. 
\end{thm}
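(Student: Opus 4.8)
The plan is to generalize the argument from the proof of Theorem~\ref{thm:mop-grad-consistency}: pointwise almost-sure convergence of the functions, combined with uniform control on their first two derivatives, upgrades to convergence of the gradients. Write $F_J(\theta) = \eta_n^J(h_\theta)$ and $F(\theta) = \eta_n(h_\theta) = h_\theta^*$, and fix a compact convex neighborhood $K$ of the evaluation point.

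First I would fix a single probability-one event on which convergence holds for all parameters at once. The hypothesis supplies, for each fixed $\theta$, an almost-sure event on which $F_J(\theta) \to F(\theta)$. Taking a countable dense set $D \subset K$ and intersecting over $\theta \in D$ yields a single event of probability one on which $F_J(\theta) \to F(\theta)$ for every $\theta \in D$. Fix $\omega$ in this event for the remainder. Because the first derivatives of $F_J$ are bounded by $H^*$ uniformly in $J$, the family $\{F_J\}$ is equi-Lipschitz, hence uniformly equicontinuous; combined with convergence on the dense set $D$, this promotes to uniform convergence $F_J \to F$ on $K$.

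Next I would apply Arzela-Ascoli to the gradients. The uniform bound $H^*$ on $\nabla_\theta F_J$ makes $\{\nabla_\theta F_J\}_J$ uniformly bounded on $K$, and the uniform bound on $\nabla_\theta^2 F_J$ makes it equi-Lipschitz, hence uniformly equicontinuous. Therefore $\{\nabla_\theta F_J\}_J$ is precompact in $C(K;\R^p)$: every subsequence has a uniformly convergent further subsequence. For any such limit, say $\nabla_\theta F_{J_k} \to g$ uniformly, the standard theorem on term-by-term differentiation (uniform convergence of the derivatives together with convergence of the functions at a point) gives that $F$ is differentiable with $\nabla_\theta F = g$. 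Hence every uniformly convergent subsequence has the same limit $\nabla_\theta \eta_n(h_\theta)$, and precompactness forces the whole sequence to converge, $\nabla_\theta F_J \to \nabla_\theta \eta_n(h_\theta)$, uniformly on $K$ and in particular at the evaluation point. Since this holds on the full-measure event, $\nabla_\theta \eta_n^J(h_\theta) \stackrel{a.s.}{\to} \nabla_\theta \eta_n(h_\theta) = \nabla_\theta h_\theta^*$.

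Finally I would record the remaining identity $\nabla_\theta \eta_n(h_\theta) = \eta_n(\nabla_\theta h_\theta)$, which is differentiation under the fixed integral against $\eta_n$, justified by dominated convergence once $\theta \mapsto h_\theta$ is differentiable with dominated gradient, as guaranteed by Assumption~\ref{assump:diff-meas-and-sim} and the boundedness hypotheses. I expect the main obstacle to be the identification of the subsequential limits: pointwise convergence of $F_J$ by itself does not control $\nabla_\theta F_J$, so the crux is that Arzela-Ascoli supplies uniformly convergent subsequences of the derivatives and the differentiation theorem then pins their limit to $\nabla_\theta F$. Arranging a single full-measure event on which all of this applies simultaneously for every $\theta$ is the delicate bookkeeping step.
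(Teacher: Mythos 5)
Your proposal is correct in its core argument and shares the paper's skeleton (uniform bounds on first and second derivatives, Arzel\`a--Ascoli on the gradients, swap of limit and derivative), but it differs in the key step of identifying the subsequential limit. The paper treats the gradient itself as a bounded functional of the particles and invokes Theorem~\ref{thm:off-parameter-targeting} a second time to get pointwise almost-sure convergence of $\nabla_\theta \eta_n^J(h_\theta)$, which forces uniqueness of the Arzel\`a--Ascoli limit; you instead pin every uniformly convergent subsequence $\nabla_\theta F_{J_k} \to g$ to $g = \nabla_\theta F$ via the classical term-by-term differentiation theorem, using only the already-established uniform convergence $F_J \to F$. Your route is more self-contained (no second appeal to the targeting SLLN) and your countable-dense-set construction of a single full-measure event, upgraded to uniform convergence by equi-Lipschitzness, is tidier bookkeeping than the paper's, which fixes $\omega$ and argues pointwise convergence only at $\theta = \phi$. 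What the paper's route buys in exchange is the identity $\eta_n(\nabla_\theta h_\theta) = \nabla_\theta \eta_n(h_\theta)$ essentially for free: the targeting theorem identifies the pointwise limit of the gradients as $\eta_n(\nabla_\theta h_\theta)$, while the swap identifies the same limit as $\nabla_\theta \eta_n(h_\theta)$, so the equality is a byproduct. Your dominated-convergence justification of that identity is the one soft spot: you describe it as differentiation under the ``fixed integral against $\eta_n$,'' but $\eta_n$ itself depends on $\theta$ (the $\theta$-dependence of the measure is exactly what the reparameterized particle paths carry), so freezing $\eta_n$ only differentiates part of $h_\theta^* = \eta_n(h_\theta)$. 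Your main conclusion $\nabla_\theta \eta_n^J(h_\theta) \stackrel{a.s.}{\to} \nabla_\theta h_\theta^*$ is unaffected, since it follows from $\nabla_\theta F_J \to \nabla_\theta F$ alone, but to recover the full chain of equalities in the statement you would either need the paper's second application of the targeting theorem to the gradient functional, or an interchange argument that accounts for the measure's parameter dependence.
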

\begin{proof}
    Fix $\omega \in \Omega$. The sequence $(\nabla_\theta \eta_n^J(h_\theta)(\omega))_{J \in \mathbb{N}}$ is uniformly bounded over all $J$, by assumption. The sequence is also uniformly equicontinuous. To see this, by assumption, the second derivative of $\eta_n^J(h_\theta)(\omega)|_{\theta=\theta'}$ is also uniformly bounded over all $J$ for almost every $\omega\in \Omega$ and every $\theta' \in \Theta$. A set of functions with derivatives bounded by the same constant is uniformly Lipschitz, and therefore uniformly equicontinuous. So the sequence $(\eta_n^J(h_\theta)\omega))_{J \in \mathbb{N}}$ is uniformly equicontinuous over $\theta$ for almost every $\omega \in \Omega$. 
    Explicitly, for almost every $\omega \in \Omega$ and every $\epsilon>0$, there exists some $\delta(\omega)>0$ such that for every $||\theta - \theta'||_{\infty}<\delta$ and every $J \in \mathbb{N}$ we have that
    \begin{equation}
    \big||\eta_n^J(h_\theta)(\omega)-\eta_n^J(h_{\theta'})(\omega)\big||_\infty < \epsilon.
    \end{equation}
    Then, by Arzela-Ascoli, there exists a uniformly convergent subsequence. We claim that there is only one subsequential limit. When the gradient is bounded, we can treat the gradient as a bounded functional. So by Theorem \ref{thm:off-parameter-targeting} the sequence $(\eta_n^J(h_\theta)(\omega))_{J \in \mathbb{N}}$ converges pointwise for $\theta=\phi$ and almost every $\omega \in \Omega$, and there is therefore only one subsequential limit. The sequence therefore converges uniformly to its limit $\lim_{J \to \infty} \eta_n^J(h_\theta)(\omega).$ Therefore, with uniform convergence for the derivatives established, we can swap the limit and derivative, and obtain that for almost every $\omega \in \Omega$, 
    \begin{equation}
    \lim_{J \to \infty} \eta_n^J(h_\theta)(\omega) = \nabla_\theta \lim_{J \to \infty} \eta_n^J(h_\theta)(\omega).
    \end{equation}
Again from Theorem \ref{thm:off-parameter-targeting}, we know that
    \begin{equation}\eta_n^J(h_\theta)(\omega) \to \eta_n(h_\theta) = h^*_\theta\end{equation} for almost every $\omega\in\Omega$.
    We then have that for almost every $\omega \in \Omega$, 
    \begin{equation}
    \lim_{J \to \infty} \eta_n^J(h_\theta)(\omega) = \nabla_\theta \eta_n(h_\theta) = \nabla_\theta h^*_\theta,
    \end{equation}
    as we wanted. 
    \end{proof}
The proof of Theorem \ref{thm:mop-grad-consistency} is now merely a corollary, where we apply $\eta_n^J (h_\theta) = \hat\lik^1_J(\theta)$, $\eta_n(h_\theta) = \lik(\theta) = h_\theta^*$, and then use the continuous mapping theorem.

\section{Bias-Variance Analysis}
\label{appendix:biasvar}

In this section, we prove various rates on the bias, variance, and MSE of MOP-$\alpha$. 
First, we note the following relation between the Dobrushin contraction coefficient and the alpha-mixing coefficients in our context below.
\begin{lem}
    \label{lem:dobrushin-implies-alpha-mixing}
    Setting $X$ to be the particle collection at time $m$, and $Y$ at time $n$, we have that the alpha mixing coefficients,
    \begin{equation}
    \int \big| f_{XY}(x,y) - f_X (x)\, f_Y(y) \big|\, dx\, dy < \alpha,
    \end{equation}
    are bounded by the Dobrushin coefficient, i.e we have
    \begin{equation}
    \int \big| f_{Y|X}(y|x_1) - f_{Y|X}(y|x_2) \big| \, dy < \alpha
    \end{equation} 
    for all $x_1$, $x_2$.
\end{lem}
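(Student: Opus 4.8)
The plan is to reduce the alpha-mixing coefficient to the Dobrushin-type quantity through the elementary identity relating a joint density to its conditional. Writing $X$ for the particle collection at time $m$ and $Y$ for that at time $n$, so that $f_{Y|X}$ is the multi-step particle transition kernel $\mathbf{M}_{m,n}$, I would first note
$$f_{XY}(x,y) - f_X(x)\, f_Y(y) = f_X(x)\big(f_{Y|X}(y|x) - f_Y(y)\big),$$
and then express the marginal as the mixture $f_Y(y) = \int f_{Y|X}(y|x')\, f_X(x')\, dx'$, so that
$$f_{Y|X}(y|x) - f_Y(y) = \int \big(f_{Y|X}(y|x) - f_{Y|X}(y|x')\big)\, f_X(x')\, dx'.$$

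Next I would take absolute values, pass the absolute value inside the integral over $x'$ by the triangle inequality, integrate against $f_X(x)\, dx\, dy$, and apply Fubini's theorem to obtain
$$\int \big| f_{XY}(x,y) - f_X(x)\, f_Y(y)\big|\, dx\, dy \leq \int\!\!\int f_X(x)\, f_X(x') \left(\int \big| f_{Y|X}(y|x) - f_{Y|X}(y|x')\big|\, dy\right) dx\, dx'.$$
The assumed bound $\int |f_{Y|X}(y|x) - f_{Y|X}(y|x')|\, dy < \alpha$, holding uniformly over all $x, x'$, then controls the inner integral; since $f_X$ is a probability density, the outer double integral of $f_X(x)\, f_X(x')$ equals $1$, and the bound $\alpha$ survives, giving the claim. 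To connect with the paper's Dobrushin notation I would record that the inner integral equals $2\|\mathbf{M}_{m,n}(x,\cdot) - \mathbf{M}_{m,n}(x',\cdot)\|_{\mathrm{TV}} \leq 2\beta_{\mathrm{TV}}(\mathbf{M}_{m,n})$, which is exactly the quantity controlled by the forgetting result of \cite{karjalainen23}.

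There is essentially no hard step here: the content is the conditional-density identity followed by Fubini. The only points requiring care are measure-theoretic, namely the existence of the relevant densities and the justification of the interchange of integrals, together with the factor-of-$2$ bookkeeping between the $L^1$ distance appearing in the alpha-mixing coefficient and the total-variation supremum defining $\beta_{\mathrm{TV}}$; within the lemma's own statement the same symbol $\alpha$ bounds both displays, so the transfer is direct. The one conceptual point worth flagging is that, because $X$ and $Y$ are whole particle collections on $\gX^J$, the conditional $f_{Y|X}$ must be read as the $(n-m)$-step particle kernel $\mathbf{M}_{m,n}$; this reading is what makes the lemma usable when the mixing coefficients are later fed into Davydov's inequality in the variance analysis.
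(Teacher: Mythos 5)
Your proposal is correct and follows essentially the same route as the paper's proof: factor the joint as $f_{XY}(x,y)=f_X(x)\,f_{Y|X}(y|x)$, write the marginal $f_Y$ as the mixture $\int f_{Y|X}(y|x')\,f_X(x')\,dx'$, and then use the triangle inequality together with Fubini and the uniform Dobrushin-type bound on $\int |f_{Y|X}(y|x)-f_{Y|X}(y|x')|\,dy$ to conclude, the only cosmetic difference being that the paper first fixes $x_1$ and proves the pointwise bound $\int |f_{Y|X}(y|x_1)-f_Y(y)|\,dy<\alpha$ before integrating against $f_X$, whereas you integrate over both arguments at once. Your added remarks on the factor-of-two bookkeeping against $\beta_{\mathrm{TV}}$ and on reading $f_{Y|X}$ as the multi-step particle kernel $\mathbf{M}_{m,n}$ on $\gX^J$ are consistent with how the paper later uses the lemma.
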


\begin{proof}
    We rewrite the alpha-mixing assertion as 
    \begin{equation}
    \int { \int \big|f_{Y|X}(y|x) - f_Y(y)\big| \, dy } \, f_X(x) \, dx.
    \end{equation}
    We claim that the Dobrushin coefficient implies 
    \begin{equation}
    \int \big| f_{Y|X}(y|x_1) - f_Y(y) \big| \, dy < \alpha
    \end{equation} 
    for all $x_1$. This is shown as follows:
    \begin{align}
        \int \big| f_{Y|X}(y|x) - f_Y(y) \big| \, dy 
        &= \int \left|  \int \big[f_{Y|X}(y|x_1) - f_{Y|X}(y|x)\big] f_X(x) \, dx \right|dy \\
        &< \int   \int  \big| f_{Y|X}(y|x_1) - f_{Y|X}(y|x)\big| \, dy \, f_X(x)\, dx \\
        &< \int \alpha f_X(x) \, dx \\
        &=\alpha
    \end{align}
    We then have the desired result.
\end{proof}

\subsection{Warm Up: MOP-$0$ Variance Bound}

Note that we made Assumptions \ref{assump:bounded-process} and \ref{assump:bounded-measurement} to leverage the results from \cite{karjalainen23} on the forgetting of the particle filter. 
This is required to show error bounds on the gradient estimates that we provide, namely that the error of the MOP-$1$ estimator, corresponding to the estimator of \cite{poyiadjis11}, is $O(N^4)$, and that the variance of MOP-$\alpha$ for any $\alpha<1$ is $O(N)$. 

We can decompose the MOP-$\alpha$ estimator as follows. 
When $\theta=\phi$,
\begin{equation}
\hat{\mathcal{L}}(\theta):=\prod_{n=1}^N L_n^{A, \theta, \alpha}=\prod_{n=1}^N L_n^\phi \cdot \frac{\sum_{j=1}^J w_{n, j}^{F, \theta}}{\sum_{j=1}^J w_{n, j}^{P, \theta}}=\prod_{n=1}^N L_n^{A, \theta, \alpha}=\prod_{n=1}^N L_n^\phi \cdot \frac{\sum_{j=1}^J w_{n, j}^{F, \theta}}{\sum_{j=1}^J (w_{n-1, j}^{F, \theta})^\alpha}.
\end{equation}
Now, to illustrate the proof strategy for the general case in an easier context, we first analyze the special case when $\alpha=0.$ We now prove the variance bound for this case, presented below. To our knowledge, this result for the variance of the gradient estimate of \cite{naesseth18} is novel. 

\begin{thm}[Variance of MOP-$0$ Gradient Estimate]
The variance of the gradient estimate from MOP-$0$, i.e. the algorithm of \cite{naesseth18}, is:
 \begin{equation}
     \Var(\nabla_\theta \hat\ell^0(\theta)) \lesssim \frac{Np \, G'(\theta)^2}{J}.
 \end{equation}
\end{thm}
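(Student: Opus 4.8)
From Lemma \ref{lem:mop-0-formula}, the MOP-$0$ gradient estimate has the explicit form
$$
\nabla_\theta \hat\ell^0(\theta) = \frac{1}{J}\sum_{n=1}^N \sum_{j=1}^J \nabla_\theta \log f_{Y_n|X_n}(y_n^* | x_{n,j}^{F,\theta}; \theta).
$$
This is a sum over $N$ timesteps, where each timestep contributes a particle average of the gradient of the log-measurement density evaluated at the filtering particles. The key structural feature is that MOP-0 is "memoryless" — each timestep's contribution depends only on the filtering particles at that timestep, not on ancestral trajectories. So the plan is to bound the variance of each per-timestep term using the $L^p$ error bound for particle filter functionals, and then sum over timesteps.

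**The bound for a single timestep**

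The inner functional is $\varphi_n(x) = \nabla_\theta \log f_{Y_n|X_n}(y_n^*|x;\theta)$, a $p$-dimensional functional of the filtering particles. By Assumption \ref{assump:bounded-measurement}, each component is bounded by $G'(\theta)$, so the functional has bounded oscillation. The particle-filter $L^2$ error bound (or the Del Moral–Rio concentration inequality used in Lemma \ref{lemma:grad_bound}) gives that the variance of the particle average $\frac{1}{J}\sum_j \varphi_n(x_{n,j}^{F,\theta})$ around the true filtering expectation is $O(G'(\theta)^2/J)$ per scalar component, hence $O(p\,G'(\theta)^2/J)$ across the $p$ coordinates. The plan is to establish this per-timestep variance bound first.

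**Summing over timesteps**

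Since $\nabla_\theta\hat\ell^0(\theta) = \sum_{n=1}^N \hat\varphi_n$ where $\hat\varphi_n = \frac{1}{J}\sum_j \varphi_n(x_{n,j}^{F,\theta})$, I expand
$$
\Var\big(\nabla_\theta\hat\ell^0(\theta)\big) = \sum_{n=1}^N \Var(\hat\varphi_n) + \sum_{n\neq m}\Cov(\hat\varphi_n,\hat\varphi_m).
$$
The variance terms contribute $N \cdot O(pG'(\theta)^2/J)$. The main obstacle is handling the cross-covariance terms: the filtering particles at different timesteps are dependent, so $\Cov(\hat\varphi_n,\hat\varphi_m)\neq 0$ in general. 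I expect to control these using the forgetting property of the particle filter (the Dobrushin-contraction bound $\beta_{\mathrm{TV}}(\mathbf M_{n,n+k})\leq (1-\epsilon)^{\lfloor k/(O(\log J))\rfloor}$ recorded in Appendix \ref{appendix:feynman}), together with Lemma \ref{lem:dobrushin-implies-alpha-mixing} which converts this into $\alpha$-mixing coefficients. Davydov's inequality then bounds each covariance by a geometrically decaying term in $|n-m|$, so the double sum is dominated by the diagonal and contributes at most $O(NpG'(\theta)^2/J)$ as well.

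**Conclusion**

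Combining the two pieces gives $\Var(\nabla_\theta\hat\ell^0(\theta))\lesssim NpG'(\theta)^2/J$, as claimed. The delicate step is the covariance control: one must verify that the geometric mixing bound, which decays on the $O(\log J)$ timescale, combined with the $1/J$ variance scale of each term, is still summable to the stated rate. In particular, since the mixing timescale grows logarithmically in $J$, a naive bound would introduce a $\log J$ factor; I expect that a more careful accounting (or absorbing the logarithmic factor into the $\lesssim$ notation, consistent with the $\tilde O$ notation used elsewhere in the paper) yields the clean $NpG'(\theta)^2/J$ bound.
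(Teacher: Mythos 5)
Your proposal is correct and follows essentially the same route as the paper's proof: the same functional form from Lemma \ref{lem:mop-0-formula}, the same variance-plus-covariance decomposition, and the same chain of tools (the Dobrushin contraction bound of \cite{karjalainen23}, Lemma \ref{lem:dobrushin-implies-alpha-mixing} to pass to $\alpha$-mixing, Davydov's inequality with fourth-moment bounds from Lemma 2 of \cite{karjalainen23}). Your closing caveat about the $O(\log J)$ mixing timescale is also well taken and consistent with the paper, which states this bound as $\tilde{O}\big(Np\,G'(\theta)^2/J\big)$ in the main text, absorbing the logarithmic factor.
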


\begin{proof}
When $\alpha=0$, we have:
\begin{align}
    \hat{\lik}^0(\theta) := \prod_{n=1}^N L_n^{A, \theta, 0} &= \prod_{n=1}^N L_n^\phi \cdot \frac{\sum_{j=1}^J w_{n,j}^{F,\theta,0}}{\sum_{j=1}^J w_{n,j}^{P,\theta,0}} 
    \\&
    = \prod_{n=1}^N L_n^\phi \cdot \frac{1}{J}\sum_{j=1}^J s_{n,j} = \prod_{n=1}^N L_n^\phi \cdot \frac{1}{J}\sum_{j=1}^J \frac{f_{Y_n|X_n}(y_n^*|x_{n,j}^{P, \theta})}{f_{Y_n|X_n}(y_n^*|x_{n,j}^{P, \phi})}.
\end{align}
Similarly to the proof of Lemma \ref{lem:mop-0-formula}, we define 
\begin{equation}
s_{n,j}=\frac{f_{Y_n|X_n}(y_n^*|x_{n,j}^{P, \theta})}{f_{Y_n|X_n}(y_n^*|x_{n,j}^{P, \phi})},
\end{equation}
and from the exact same arguments conclude that its gradient when $\theta=\phi$ is therefore 
\begin{align}
    \nabla_\theta \hat{\ell}^0(\theta) &:= \sum_{n=1}^N \nabla_\theta \log\left(L_n^\phi \frac{1}{J} \sum_{j=1}^J s_{n,j}\right) \\
    &= \sum_{n=1}^N \frac{\nabla_\theta \left(L_n^\phi \frac{1}{J} \sum_{j=1}^J s_{n,j}\right)}{\left(L_n^\phi \frac{1}{J} \sum_{j=1}^J s_{n,j}\right)} \\
    &= \sum_{n=1}^N \frac{\sum_{j=1}^J \nabla_\theta s_{n,j}}{\sum_{j=1}^J s_{n,j}} \\
    &= \sum_{n=1}^N \frac{1}{J} \sum_{j=1}^J \frac{\nabla_\theta f_{Y_n|X_{n}}(y_n^*|x_{n,j}^{F, \theta}; \theta)}{f_{Y_n|X_{n}}(y_n^*|x_{n,j}^{F, \phi}; \phi)} \\
    &= \frac{1}{J} \sum_{n=1}^N \sum_{j=1}^J \nabla_\theta \log\left(f_{Y_n|X_{n}}(y_n^*|x_{n,j}^{F, \theta}; \theta)\right),
\end{align}
where we use the derivative of the logarithm, observe that $\sum_{j=1}^J s_{n,j} = J$ when $\theta=\phi$, and use the derivative of the logarithm where $\theta=\phi$ again. We do this to establish that the gradient of the log-likelihood estimate is given by the sum of terms over all $N$ and $J$:
\begin{equation}
\nabla_\theta \hat{\ell}^0(\theta) = \frac{1}{J} \sum_{n=1}^N \sum_{j=1}^J \nabla_\theta \log\left(f_{Y_n|X_{n}}(y_n^*|x_{n,j}^{F, \theta}; \theta)\right).
\end{equation}
Therefore, for a given $\theta_i$ in the parameter vector $\theta$,
\begin{align}
    \Var\left(\frac{\partial}{\partial_{\theta_i}} \hat\ell^0(\theta)\right) &= \frac{1}{J^2}\Var\left(\sum_{n=1}^N\sum_{j=1}^{J}\frac{\partial}{\partial_{\theta_i}} \log\left(f_{Y_n|X_{n}}(y_n^*|x_{n,j}^{F, \theta}; \theta)\right)\right) \\
    & \hspace{-20mm}= \frac{1}{J^2}\sum_{n=1}^N\Var\left(\sum_{j=1}^{J}\frac{\partial}{\partial_{\theta_i}} \log\left(f_{Y_n|X_{n}}(y_n^*|x_{n,j}^{F, \theta}; \theta)\right)\right) \\
    &\hspace{-20mm}+ 2\sum_{m<n}\Cov\left(\frac{1}{J}\sum_{j=1}^{J}\frac{\partial}{\partial_{\theta_i}} \log\left(f_{Y_m|X_{m}}(y_m^*|x_{m,j}^{F, \theta}; \theta)\right), \frac{1}{J}\sum_{j=1}^{J}\frac{\partial}{\partial_{\theta_i}} \log\left(f_{Y_n|X_{n}}(y_n^*|x_{n,j}^{F, \theta}; \theta)\right)\right) \\
    &\hspace{-20mm}= \sum_{n=1}^N \Var\left(\frac{\partial}{\partial_{\theta_i}}\hat\ell_n^0(\theta)\right) + 2\sum_{m<n} \Cov\left(\frac{\partial}{\partial_{\theta_i}}\hat\ell_m^0(\theta),\frac{\partial}{\partial_{\theta_i}}\hat\ell_n^0(\theta)\right).
\end{align}
Here, we use Assumptions \ref{assump:bounded-process} and \ref{assump:bounded-measurement} that ensure strong mixing. We know from Theorem 3 of \cite{karjalainen23} that when $\textbf{M}_{n,n+k}$ is the $k$-step Markov operator from timestep $n$ and $\beta_{\text{TV}}(M) = \sup _{x, y \in E}\|M(x, \cdot)-M(y, \cdot)\|_{\mathrm{TV}}=\sup _{\mu, \nu \in \mathcal{P}, \mu \neq \nu} \frac{\|\mu M-\nu M\|_{\mathrm{TV}}}{\|\mu-\nu\|_{\mathrm{TV}}}$ is the Dobrushin contraction coefficient of a Markov operator, 
\begin{equation}
\beta_{\text{TV}}(\textbf{M}_{n,n+k}) \leq (1-\epsilon)^{\floor{k/(c\log(J))}},
\end{equation}
i.e. the mixing time of the particle filter is $O(\log(J))$, where $\epsilon$ and $c$ depend on $\bar{M}, \underbar{M}, \bar{G}, \underbar{G}$ in \ref{assump:bounded-process} and \ref{assump:bounded-measurement}. 


By Lemma \ref{lem:dobrushin-implies-alpha-mixing}, the particle filter itself is strong mixing, with $\alpha$-mixing coefficients $a(k) \leq (1-\epsilon)^{\floor{k/(c\log(J))}}$. Therefore, functions of particles are strongly mixing as well, with $\alpha$-mixing coefficients bounded by the original (to see this, observe that the $\sigma$-algebra of the functionals is contained within the original $\sigma$-algebra). Therefore, by Davydov's inequality, noting that $\frac{\partial}{\partial_{\theta_i}}\hat\ell_n^0(\theta)\leq G'(\theta)$ by Assumption \ref{assump:bounded-measurement}, and without loss of generality labeling $m$ and $n$ such that $\E\big[(\frac{\partial}{\partial_{\theta_i}}\hat\ell_m^0(\theta))^4\big]^{1/4}\leq\E\big[(\frac{\partial}{\partial_{\theta_i}}\hat\ell_n^0(\theta))^4\big]^{1/4}$, we see that
\begin{align}
    \Cov\left(\frac{\partial}{\partial_{\theta_i}}\hat\ell_m^0(\theta), \frac{\partial}{\partial_{\theta_i}}\hat\ell_n^0(\theta)\right) 
    &\leq a(n-m)^{1/2} \, \E\left[\left(\frac{\partial}{\partial_{\theta_i}}\hat\ell_m^0(\theta)\right)^4\right]^{1/4}
    \,
    \E\left[\left(\frac{\partial}{\partial_{\theta_i}}\hat\ell_n^0(\theta)\right)^4\right]^{1/4}\\
    &\leq a(n-m)^{1/2}\, \E\left[\left(\frac{\partial}{\partial_{\theta_i}}\hat\ell_n^0(\theta)\right)^4\right]^{1/2}.
\end{align}
To bound this, we use the fact that 
\begin{equation}
\E\left[\left(\frac{\partial}{\partial_{\theta_i}}\hat\ell_n^0(\theta)\right)^4\right] = \E\left[\left(\frac{\partial}{\partial_{\theta_i}}\hat\ell_n^0(\theta)\right)^4-\E\left[\left(\frac{\partial}{\partial_{\theta_i}}\hat\ell_n^0(\theta)\right)^2\right]^2\right]+\E\left[\left(\frac{\partial}{\partial_{\theta_i}}\hat\ell_n^0(\theta)\right)^2\right]^2
\end{equation}
alongside Lemma 2 of \cite{karjalainen23}, which shows that
\begin{eqnarray}
\E\left[\left(\frac{\partial}{\partial_{\theta_i}}\hat\ell_n^0(\theta)\right)^4-\E\left[\left(\frac{\partial}{\partial_{\theta_i}}\hat\ell_n^0(\theta)\right)^2\right]^2\right] &\lesssim& \frac{G'(\theta)^4}{J^2}, 
\\
\E\left[\left(\frac{\partial}{\partial_{\theta_i}}\hat\ell_n^0(\theta)-\E\left[\frac{\partial}{\partial_{\theta_i}}\hat\ell_n^0(\theta)\right]\right)^2\right] &\lesssim& \frac{G'(\theta)^2}{J}.
\end{eqnarray}
It follows that 
\begin{equation}
\E\left[\left(\frac{\partial}{\partial_{\theta_i}}\hat\ell_n^0(\theta)\right)^4\right]^{1/2} \lesssim  \sqrt{\frac{G'(\theta)^4}{J^2}+\left(\frac{G'(\theta)^2}{J}\right)^2} = \frac{G'(\theta)^2}{J},
\end{equation}
and we conclude that 
\begin{equation}
\Cov\left(\frac{\partial}{\partial_{\theta_i}}\hat\ell_m^0(\theta), \frac{\partial}{\partial_{\theta_i}}\hat\ell_n^0(\theta)\right) \leq (1-\epsilon)^{\frac{1}{2}\floor{\frac{|n-m|}{c\log(J)}}}\frac{G'(\theta)^2}{J}.
\end{equation}
Putting it all together, we see that
\begin{align}
    \Var\left(\frac{\partial}{\partial_{\theta_i}} \hat\ell^\alpha(\theta)\right) &= \sum_{n=1}^N \Var\left(\frac{\partial}{\partial_{\theta_i}}\hat\ell_n^0(\theta)\right) + 2\sum_{m<n} \Cov\left(\frac{\partial}{\partial_{\theta_i}}\hat\ell_m^0(\theta),\frac{\partial}{\partial_{\theta_i}}\hat\ell_n^0(\theta)\right)\\
    &\leq \frac{NG'(\theta)^2}{J} + 2\sum_{n=1}^{N-1} (N-n)(1-\epsilon)^{\frac{1}{2}\floor{\frac{n}{c\log(J)}}}\frac{G'(\theta)^2}{J} \\
    &\leq \frac{G'(\theta)^2}{J} \left(N + 2\sum_{n=1}^{N-1} (N-n) (1-\epsilon)^{\frac{1}{2}\floor{\frac{n}{c\log(J)}}}\right) \\
    &\lesssim \frac{G'(\theta)^2N}{J}.
\end{align}
It then follows that as $\theta \in \Theta \subseteq \R^p$, 
\begin{equation}
    \Var\big(\nabla_{\theta} \hat\ell^\alpha(\theta)\big) \lesssim \frac{Np \, G'(\theta)^2}{J}.
\end{equation}
\end{proof}

\paragraph{Remark:} Note that the factor of $N$ that pops up here is due to the use of the unnormalized gradient. If one divides the gradient estimate by $\sqrt{N}$ before usage, the variance does not depend on the horizon. If one divides by $N$, the error is $O(1/\sqrt{NJ})$.

The variance here of $\frac{Np \, G'(\theta)^2}{J}$ can be thought of as a lower bound for the variance of MOP-$\alpha$. We will later see that the variance of MOP-$\alpha$ contains this term, and an additional term that corresponds to the memory of the MOP-$\alpha$ gradient estimate.


\subsection{MOP-$\alpha$ Variance Bound}

We are now in a position to tackle the MOP-$\alpha$ variance bound. Here, we analyze the case for $\alpha \in (0,1)$. The proof strategy is as follows. Instead of analyzing the variance of the MOP-$\alpha$ gradient estimate proper, we will analyze the variance of a modified estimator where we truncate the weights at $k$ timesteps, so this estimator only looks at most $k$ timesteps back while accumulating the importance weights. We write
\begin{equation}
    \hat s_k^\alpha(\theta) := \sum_{n=1}^N \log\left(\frac{\sum_{j=1}^J\prod_{i=n-k}^n\left(\frac{g_{i,j}^{A,F,\theta}}{g_{i,j}^{A,F,\phi}} \right)^{\alpha^{(n-i)}}}{\sum_{j=1}^J\prod_{i=n-k}^{n-1}\left(\frac{g_{i,j}^{A,F,\theta}}{g_{i,j}^{A,F,\phi}} \right)^{\alpha^{(n-i)}}}\right)
\end{equation}
for the log-likelihood estimate from the $k$-truncated estimator with discount factor $\alpha$. It is the sum of components
\begin{equation}
\hat s_{n,k}^\alpha(\theta) := \frac{1}{J}\sum_{j=1}^J\sum_{i=n-k}^n \left(\alpha^{n-i} \log\left(g_{i,j}^{A,F,\theta}\right)- \alpha^{n-i+1} \nabla_\theta \log\left(g_{i-1,j}^{A,F,\theta}\right)\right)
\end{equation} 
over timesteps $N$.

This enables us to establish strong mixing for the truncated estimator, in order to get a bound on the variance that is $O(NJ^{-1})$. We emphasize that this truncated estimator is only a theoretical construct -- we do not actually use the truncated estimator. The truncated estimator, as we note in the discussion, possesses similar theoretical guarantees as the MOP-$\alpha$ estimator, but would require the user to specify the number of timesteps to truncate at instead of a discount factor. 

Coupled with a bound ensuring that the $k$-truncated estimator provides an estimate that is close to that of MOP-$\alpha$ proper, we get a bound on the variance of MOP-$\alpha$ that comprises of the variance of the $k$-truncated estimator plus the error, for any $k \leq N$. It then holds that the final variance bound on MOP-$\alpha$ is given by the minimum over all $k$, and is never larger than $O(\psi(\alpha)+NJ^{-1})$ for some function $\psi$ increasing in $\alpha$.

\begin{thm}[Variance of MOP-$\alpha$ Gradient Estimate]
    When $\alpha \in (0,1)$, the variance of the gradient estimate from MOP-$\alpha$ is
    \begin{equation}\Var\big(\nabla_\theta \hat\ell^\alpha(\theta)\big) \lesssim \min_{k\leq N} \left(\frac{k^2G'(\theta)^2Np}{(1-\alpha)^2J} + \frac{\alpha^{k}}{1-\alpha}Np \, G'(\theta)^2 \right).
    \end{equation}
\end{thm}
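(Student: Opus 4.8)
The plan is to bound the MOP-$\alpha$ score by that of a lag-$k$ truncated surrogate plus a truncation error, and then to control the surrogate through the particle filter's geometric forgetting. First I would record, as in the proofs of Lemmas \ref{lem:mop-1-formula} and \ref{lem:mop-0-formula}, that at $\theta=\phi$ the MOP-$\alpha$ log-likelihood equals, up to the $\theta$-free constant $\sum_n\log L_n^\phi$, the sum $\sum_{n=1}^N\log\big(\sum_j w_{n,j}^{F,\theta}\big/\sum_j w_{n,j}^{P,\theta}\big)$, where unrolling $w_{n,j}^{P,\theta}=(w_{n-1,j}^{F,\theta})^\alpha$ along the genealogy gives $w_{n,j}^{F,\theta}=\prod_{i=1}^n (g_{i,j}^{A,\theta}/g_{i,j}^{A,\phi})^{\alpha^{n-i}}$. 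Differentiating with the log-derivative identity writes $\nabla_\theta\hat\ell^\alpha=\sum_n \hat s_n^\alpha$, each $\hat s_n^\alpha$ a geometrically discounted functional of the ancestral trajectory with effective memory $1/(1-\alpha)$; the lag-$k$ truncated estimator $\hat s_k^\alpha$ simply replaces the product lower limit $1$ by $n-k$.

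Using that $\sqrt{\Var(\cdot)}=\|\cdot-\E\cdot\|_2$ is a seminorm and $\Var(Z)\le\E[Z^2]$, I would split $\sqrt{\Var(\nabla_\theta\hat\ell^\alpha)}\le \sqrt{\Var(\nabla_\theta\hat s_k^\alpha)}+\|\nabla_\theta\hat\ell^\alpha-\nabla_\theta\hat s_k^\alpha\|_2$, so that squaring produces, for each $k\le N$, the two terms of the claimed bound, and the result follows on taking $\min_{k\le N}$. The truncation error $\nabla_\theta\hat\ell^\alpha-\nabla_\theta\hat s_k^\alpha$ is a sum over $n$ of the discarded discounted tail $\sum_{i<n-k}\alpha^{n-i}\nabla_\theta\log g_{i,j}^{A,\theta}$, whose weights are dominated by $\alpha^{k}/(1-\alpha)$; bounding these $N$ contributions over the $p$ coordinates of $\theta$ with Assumption \ref{assump:bounded-measurement} yields the second term $\tfrac{\alpha^k}{1-\alpha}Np\,G'(\theta)^2$.

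For $\Var(\nabla_\theta\hat s_k^\alpha)$ I would follow the MOP-$0$ computation, decomposing it as $\sum_n\Var(\hat s_{n,k}^\alpha)+2\sum_{m<n}\Cov(\hat s_{m,k}^\alpha,\hat s_{n,k}^\alpha)$, now with each $\hat s_{n,k}^\alpha$ a functional of the particle genealogy over a window of $k$ steps. Assumptions \ref{assump:bounded-process} and \ref{assump:bounded-measurement}, together with the forgetting estimate $\beta_{\mathrm{TV}}(\mathbf{M}_{n,n+k})\le(1-\epsilon)^{\lfloor k/(c\log J)\rfloor}$ converted to an $\alpha$-mixing coefficient by Lemma \ref{lem:dobrushin-implies-alpha-mixing}, make these windowed functionals strongly mixing. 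For disjoint windows ($|n-m|>k$), Davydov's inequality bounds the covariance by a geometrically decaying coefficient times the $L^4$ norms, whose sum over $m<n$ collapses to $O(N)$. The single-window variances, controlled through the particle-filter $L^p$ error bound (Lemma 2 of \cite{karjalainen23}) and normalised by the discount sum $(1-\alpha)^{-1}$, together with the $O(Nk)$ overlapping-window covariance pairs ($|n-m|\le k$) bounded by Cauchy--Schwarz, contribute the factor $k^2$; collecting these with the sum over the $p$ coordinates gives the first term $\tfrac{k^2 G'(\theta)^2 Np}{(1-\alpha)^2 J}$.

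The principal obstacle is the per-window contribution: unlike the single-step MOP-$0$ functional, $\hat s_{n,k}^\alpha$ averages over ancestral trajectories of length $k$, so the naive oscillation bound is insufficient and one must control the genealogical degeneracy that, in combination with the $O(Nk)$ overlapping covariance pairs, produces the $k^2$ factor rather than a single power of $k$. A secondary technical point is that overlapping windows share part of the genealogy, so the mixing coefficient must be applied to the reduced lag $|n-m|-k$, and one must verify that this shift, together with the discount normalisation, still leaves the covariance sum at $O(N)$ so that the final bound scales linearly in $N$.
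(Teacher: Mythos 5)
Your proposal matches the paper's proof essentially step for step: the same lag-$k$ truncated surrogate $\hat s_k^\alpha$, the same almost-sure truncation bound of order $\frac{\alpha^k}{1-\alpha}Np\,G'(\theta)^2$ obtained from the discarded discounted tail under Assumption \ref{assump:bounded-measurement}, the same variance-plus-covariance decomposition for the surrogate controlled by Lemma 2 of \cite{karjalainen23} together with Davydov's inequality and the Dobrushin-to-$\alpha$-mixing conversion of Lemma \ref{lem:dobrushin-implies-alpha-mixing}, and the final minimization over $k\leq N$. Your only deviations are cosmetic—you combine the two pieces via the triangle inequality for $\sqrt{\Var(\cdot)}$ where the paper uses the elementary bound $\Var(Y)\leq\Var(X)+8M^2$ for an almost-surely bounded difference, and you apply the mixing coefficient at the reduced lag $|n-m|-k$ where the paper equivalently inflates the mixing time to $O(1+k+\log J)$—so the argument is the same.
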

\begin{proof}
    
Using the derivative of the logarithm and that $w_{n,j}^{F,\theta,\alpha} = w_{n,j}^{F,\theta,\alpha,k} = 1$ when $\theta=\phi$,
\begin{align}
    &\left\lVert\nabla_\theta\log\left(\sum_{j=1}^J w_{n,j}^{F,\theta,\alpha}\right)-\nabla_\theta\log\left(\sum_{j=1}^J w_{n,j}^{F,\theta,\alpha,k}\right)\right\rVert_2^2\\
    &= \left\lVert\frac{\nabla_\theta\sum_{j=1}^J w_{n,j}^{F,\theta,\alpha}}{{\sum_{j=1}^J w_{n,j}^{F,\theta,\alpha}}}-\frac{\nabla_\theta\sum_{j=1}^J w_{n,j}^{F,\theta,\alpha,k}}{{\sum_{j=1}^J w_{n,j}^{F,\theta,\alpha,k}}}\right\rVert_2^2 \\
    &= \left\lVert\frac{1}{J}\sum_{j=1}^J \nabla_\theta w_{n,j}^{F,\theta,\alpha}-\frac{1}{J}\sum_{j=1}^J \nabla_\theta w_{n,j}^{F,\theta,\alpha,k}\right\rVert_2^2 \\
    &= \left\lVert\frac{1}{J}\sum_{j=1}^J \frac{\nabla_\theta w_{n,j}^{F,\theta,\alpha}}{w_{n,j}^{F,\theta,\alpha}}-\frac{1}{J}\sum_{j=1}^J \frac{\nabla_\theta w_{n,j}^{F,\theta,\alpha,k}}{w_{n,j}^{F,\theta,\alpha,k}}\right\rVert_2^2\\
    &= \left\lVert\frac{1}{J}\sum_{j=1}^J \nabla_\theta \log\left(w_{n,j}^{F,\theta,\alpha}\right)-\frac{1}{J}\sum_{j=1}^J \nabla_\theta \log\left(w_{n,j}^{F,\theta,\alpha,k}\right)\right\rVert_2^2.
\end{align}

This lets us bound the cumulative weight discrepancies by
\begin{align}
    &  \left\lVert\nabla_\theta\log\left(\sum_{j=1}^J w_{n,j}^{F,\theta,\alpha}\right)-\nabla_\theta\log\left(\sum_{j=1}^J w_{n,j}^{F,\theta,\alpha,k}\right)\right\rVert_2^2\\
    &= \left\lVert\frac{1}{J}\sum_{j=1}^J \nabla_\theta \log\left(w_{n,j}^{F,\theta,\alpha}\right)-\frac{1}{J}\sum_{j=1}^J \nabla_\theta \log\left(w_{n,j}^{F,\theta,\alpha,k}\right)\right\rVert_2^2 \\
    &= \left\lVert\frac{1}{J}\sum_{j=1}^J \nabla_\theta \left(\log\left(w_{n,j}^{F,\theta,\alpha}\right)-\log\left(w_{n,j}^{F,\theta,\alpha,k}\right)\right)\right\rVert_2^2\\
    &= \left\lVert\frac{1}{J}\sum_{j=1}^J \nabla_\theta \left(\sum_{i=1}^n\alpha^{(n-i)}\log\left(\frac{g_{i,j}^{A,F,\theta}}{g_{i,j}^{A,F,\phi}} \right) - \sum_{i=n-k}^n{\alpha^{(n-i)}}\log\left(\frac{g_{i,j}^{A,F,\theta}}{g_{i,j}^{A,F,\phi}} \right)\right)\right\rVert_2^2 \\
    &= \left\lVert\frac{1}{J}\sum_{j=1}^J \nabla_\theta \left(\sum_{i=1}^{n-k}\alpha^{(n-i)}\log\left(\frac{g_{i,j}^{A,F,\theta}}{g_{i,j}^{A,F,\phi}} \right) \right)\right\rVert_2^2\\
    &\leq \frac{1}{J}\sum_{j=1}^J \sum_{i=1}^{n-k}\alpha^{(n-i)}\left\lVert\nabla_\theta\log\left(g_{i,j}^{A,F,\theta} \right)\right\rVert_2^2\\
    &\leq \frac{1}{J}\sum_{j=1}^J \sum_{i=1}^{n-k}\alpha^{(n-i)}p \, G'(\theta)^2\\
    &\leq p \, G'(\theta)^2\frac{\alpha^k-\alpha^n}{1-\alpha},
\end{align}
where the second-last line follows from Assumption \ref{assump:bounded-measurement}.
We can then bound $\|\nabla_\theta\hat\ell^\alpha(\theta) - \nabla_\theta \hat s_k^\alpha(\theta)\|$ as follows:
\begin{align}
    &\left\lVert\nabla_\theta\hat\ell^\alpha(\theta) - \nabla_\theta \hat s_k^\alpha(\theta) \right\rVert_2^2
    \\ \nonumber
    &= \left\lVert\sum_{n=1}^N \nabla_\theta \log\left(\frac{\sum_{j=1}^J\prod_{i=1}^n\left(\frac{g_{i,j}^{A,F,\theta}}{g_{i,j}^{A,F,\phi}} \right)^{\alpha^{(n-i)}}}{\sum_{j=1}^J\prod_{i=1}^{n-1}\left(\frac{g_{i,j}^{A,F,\theta}}{g_{i,j}^{A,F,\phi}} \right)^{\alpha^{(n-i)}}}\right) - \sum_{n=1}^N \nabla_\theta\log\left(\frac{\sum_{j=1}^J\prod_{i=n-k}^n\left(\frac{g_{i,j}^{A,F,\theta}}{g_{i,j}^{A,F,\phi}} \right)^{\alpha^{(n-i)}}}{\sum_{j=1}^J\prod_{i=n-k}^{n-1}\left(\frac{g_{i,j}^{A,F,\theta}}{g_{i,j}^{A,F,\phi}} \right)^{\alpha^{(n-i)}}}\right) \right\rVert_2^2
    \\ \nonumber
    &= \Bigg\lVert\sum_{n=1}^N \nabla_\theta \Bigg(\log\left(\sum_{j=1}^J\prod_{i=1}^n\left(\frac{g_{i,j}^{A,F,\theta}}{g_{i,j}^{A,F,\phi}} \right)^{\alpha^{(n-i)}}\right)- \log\left(\sum_{j=1}^J\prod_{i=1}^{n-1}\left(\frac{g_{i,j}^{A,F,\theta}}{g_{i,j}^{A,F,\phi}} \right)^{\alpha^{(n-i)}}\right)
    \\
    &\qquad\qquad -\log\left(\sum_{j=1}^J\prod_{i=n-k}^n\left(\frac{g_{i,j}^{A,F,\theta}}{g_{i,j}^{A,F,\phi}} \right)^{\alpha^{(n-i)}}\right) + \log\left(\sum_{j=1}^J\prod_{i=n-k}^{n-1}\left(\frac{g_{i,j}^{A,F,\theta}}{g_{i,j}^{A,F,\phi}} \right)^{\alpha^{(n-i)}}\right)\Bigg)\Bigg\rVert_2^2 
    \\ 
    &= \left\lVert\sum_{n=1}^N \nabla_\theta \Bigg(\!\log\!\left(\sum_{j=1}^Jw_{n,j}^{F,\theta,\alpha}\right) \!-\! \log\!\left(\sum_{j=1}^Jw_{n,j}^{F,\theta,\alpha,k}\right)
    \!-\! \log\!\left(\sum_{j=1}^Jw_{n-1,j}^{A, F,\theta,\alpha}\right) \!+\! \log \! \left(\sum_{j=1}^Jw_{n-1,j}^{A, F,\theta,\alpha,k}\right)\!\Bigg)\right\rVert_2^2 
    \\ \nonumber
    &\leq \sum_{n=1}^N \left\lVert\nabla_\theta \left(\log\left(\sum_{j=1}^Jw_{n,j}^{F,\theta,\alpha}\right)- \log\left(\sum_{j=1}^Jw_{n,j}^{F,\theta,\alpha,k}\right)\right)\right\lVert_{\infty}
    \\ 
    & \qquad\qquad +\sum_{n=1}^N \left\lVert\nabla_\theta \left(\log\left(\sum_{j=1}^Jw_{n-1,j}^{A, F,\theta,\alpha}\right) + \log\left(\sum_{j=1}^Jw_{n-1,j}^{A, F,\theta,\alpha,k}\right)\right)\right\rVert_2^2,
\end{align}
and each of these two terms is bounded by $\frac{\alpha^k}{1-\alpha}Np \, G'(\theta)^2$. So we now know that
\begin{equation}\left\lVert\nabla_\theta\hat\ell^\alpha(\theta) - \nabla_\theta \hat s_k^\alpha(\theta) \right\rVert_2^2 \leq  \frac{2\alpha^k}{1-\alpha}Np \, G'(\theta)^2,\end{equation}
which is our desired error bound. 
Now, we bound the variance of $\nabla_\theta \hat s_k^\alpha(\theta)$. 
Recall that we defined
\begin{eqnarray}
\nabla_\theta \hat s_k^\alpha(\theta) &=& \sum_{n=1}^N \nabla_\theta\log\left(\frac{\sum_{j=1}^J\prod_{i=n-k}^n\left(\frac{g_{i,j}^{A,F,\theta}}{g_{i,j}^{A,F,\phi}} \right)^{\alpha^{(n-i)}}}{\sum_{j=1}^J\prod_{i=n-k}^{n-1}\left(\frac{g_{i,j}^{A,F,\theta}}{g_{i,j}^{A,F,\phi}} \right)^{\alpha^{(n-i)}}}\right) 
\\
&=& \sum_{n=1}^N \nabla_\theta \left(\log\left(\sum_{j=1}^J w_{n,j}^{F,\theta,\alpha,k}\right)-\log\left(\sum_{j=1}^J w_{n-1,j}^{A, F,\theta, \alpha, k}\right)\right).
\end{eqnarray}
We can then decompose this expression with the derivative of the logarithm while noting that $w_{n,j}^{F,\theta,\alpha,k} = 1$ whenever $\theta=\phi$, to see that
\begin{align}
    \nabla_\theta \hat s_k^\alpha(\theta) 
    &= \sum_{n=1}^N \nabla_\theta \left(\log\left(\sum_{j=1}^J w_{n,j}^{F,\theta,\alpha,k}\right)-\log\left(\sum_{j=1}^J w_{n-1,j}^{A, F,\theta, \alpha, k}\right)\right) \\
    &= \sum_{n=1}^N \left(\frac{\sum_{j=1}^J \nabla_\theta w_{n,j}^{F,\theta,\alpha,k}}{\sum_{j=1}^J w_{n,j}^{F,\theta,\alpha,k}}-\frac{\sum_{j=1}^J \nabla_\theta w_{n-1,j}^{A,F,\theta,\alpha, k}}{\sum_{j=1}^J w_{n-1,j}^{A,F,\theta,\alpha, k}}\right) \\
    &= \sum_{n=1}^N \left(\frac{1}{J}\sum_{j=1}^J \nabla_\theta w_{n,j}^{F,\theta,\alpha,k}-\frac{1}{J}\sum_{j=1}^J \nabla_\theta w_{n-1,j}^{A,F,\theta,\alpha, k}\right)\\
    &= \frac{1}{J}\sum_{n=1}^N \sum_{j=1}^J \nabla_\theta \left(w_{n,j}^{F,\theta,\alpha,k}- w_{n-1,j}^{A,F,\theta,\alpha, k}\right).
\end{align}
It now follows that we need to bound the variance at a single timestep, namely
\begin{equation}
\Var\left(\frac{1}{J}\sum_{j=1}^J\nabla_\theta \left(w_{n,j}^{F,\theta,\alpha,k}- w_{n-1,j}^{A,F,\theta,\alpha, k}\right)\right).
\end{equation}
We use the derivative of the logarithm yet again to find, noting that $w_{n,j}^{F,\theta,\alpha,k}=1$ when $\theta=\phi$, that
\begin{align}
    \nabla_\theta w_{n,j}^{F,\theta,\alpha,k} = \frac{\nabla_\theta w_{n,j}^{F,\theta,\alpha,k}}{w_{n,j}^{F,\theta,\alpha,k}} = \nabla_\theta \log(w_{n,j}^{F,\theta,\alpha,k}) &= \sum_{i=n-k}^n \alpha^{n-i} \nabla_\theta \log\left(\frac{g_{i,j}^{A,F,\theta}}{g_{i,j}^{A,F,\phi}} \right) 
    \\
    &= \sum_{i=n-k}^n \alpha^{n-i} \nabla_\theta \log\left(g_{i,j}^{A,F,\theta}\right).
\end{align}
Then,
\begin{align}
    &\Var\left(\nabla_\theta \big(w_{n,j}^{F,\theta,\alpha,k}- w_{n-1,j}^{A,F,\theta,\alpha, k}\big)\right) 
    \\
    &\qquad = \Var\left(\nabla_\theta \left(\sum_{i=n-k}^n \alpha^{n-i} \nabla_\theta \log\left(g_{i,j}^{A,F,\theta}\right)- \sum_{i=n-k-1}^{n-1} \alpha^{n-i} \nabla_\theta \log\left(g_{i,j}^{A,F,\theta}\right)\right)\right) \\
    &\qquad = \Var \left(\sum_{i=n-k}^n \nabla_\theta\left(\alpha^{n-i} \log\left(g_{i,j}^{A,F,\theta}\right)- \alpha^{n-i+1}\log\left(g_{i-1,j}^{A,F,\theta}\right)\right)\right).
\end{align}
Note that 
\begin{equation}
\nabla_\theta \hat s_{n,k}^\alpha(\theta) := \frac{1}{J}\sum_{j=1}^J\sum_{i=n-k}^n \nabla_\theta\left(\alpha^{n-i} \log\left(g_{i,j}^{A,F,\theta}\right)- \alpha^{n-i+1} \nabla_\theta \log\left(g_{i-1,j}^{A,F,\theta}\right)\right)
\end{equation} 
is a function bounded by $C\frac{G'(\theta)(1-\alpha^k)}{1-\alpha} =: CG'(\theta)r$ in each coordinate for some constant $C$ (by Assumption \ref{assump:bounded-measurement}) that depends only on $k$ timesteps, and not $n$. 
Subsequently, we suppress vector and matrix notation for the coordinates of $\theta$ and its derivatives, and their variances and covariances, with the variance of a vector interpreted as the sum of the variance of its components.
We invoke Lemma 2 of \cite{karjalainen23} to bound the $L_2$ error of each of the $k$ (from time $n-k$ to time $n$) functionals from its expectation under the posterior by $Cr\sqrt{p}G'(\theta)J^{-1/2}k$. 
That is, we have that
\begin{equation}\E\left[\|\nabla_\theta \hat s_{n,k}^\alpha(\theta) - \E_{\pi}\nabla_\theta \hat s_{n,k}^\alpha(\theta)\|_2^2\right]^{1/2} \leq \frac{Cr\sqrt{p}G'(\theta) k}{\sqrt{J}}.\end{equation}

By the bias-variance decomposition, this in turn bounds the variance at a single timestep by $C^2r^2p \, G'(\theta)^2J^{-1}k^2$. Concretely, we have that
\begin{equation}\Var\left(\frac{1}{J}\sum_{j=1}^J\sum_{i=n-k}^n \nabla_\theta\left(\alpha^{n-i} \log\left(g_{i,j}^{A,F,\theta}\right)- \alpha^{n-i+1} \log\left(g_{i-1,j}^{A,F,\theta}\right)\right)\right) \lesssim \frac{r^2k^2p2G'(\theta)^2}{J}.\end{equation}

It now remains to bound the variance of $\nabla_\theta \hat s_k^\alpha(\theta)$ by considering the covariance of each of the $N$ terms that comprise it. 
We decompose
\begin{align}
    \Var\big(\nabla_\theta \hat s_k^\alpha(\theta)\big) &= \Var\left(\sum_{n=1}^N \frac{1}{J}\sum_{j=1}^J\sum_{i=n-k}^n \nabla_\theta\left(\alpha^{n-i} \log\left(g_{i,j}^{A,F,\theta}\right)- \alpha^{n-i+1} \log\left(g_{i-1,j}^{A,F,\theta}\right)\right)\right) \\
    &= \sum_{n=1}^N\Var\big(\nabla_\theta \hat s_{n,k}^\alpha(\theta)\big)+ 2\sum_{m<n}\Cov\big(\nabla_\theta \hat s_{m,k}^\alpha(\theta), \nabla_\theta \hat s_{n,k}^\alpha(\theta)\big) \\
    &\lesssim \frac{r^2k^2p \, G'(\theta)^2N}{J} + 2\sum_{m<n}\Cov\big(\nabla_\theta \hat s_{m,k}^\alpha(\theta), \nabla_\theta \hat s_{n,k}^\alpha(\theta)\big).
\end{align}

Similarly to the proof of the MOP-$0$ case, we use Assumptions \ref{assump:bounded-process} and \ref{assump:bounded-measurement} that ensure strong mixing. We know from Theorem 3 of \cite{karjalainen23} that when $\textbf{M}_{n,n+k}$ is the $k$-step Markov operator from timestep $n$ and $\beta_{\text{TV}}(M) = \sup _{x, y \in E}\|M(x, \cdot)-M(y, \cdot)\|_{\mathrm{TV}}=\sup _{\mu, \nu \in \mathcal{P}, \mu \neq \nu} \frac{\|\mu M-\nu M\|_{\mathrm{TV}}}{\|\mu-\nu\|_{\mathrm{TV}}}$ is the Dobrushin contraction coefficient of a Markov operator, 
\begin{equation}
\beta_{\text{TV}}(\textbf{M}_{n,n+k}) \leq (1-\epsilon)^{\floor{k/(c\log(J))}},
\end{equation}
i.e. the mixing time of the particle filter is $O(\log(J))$, where $\epsilon$ and $c$ depend on $\bar{M}, \underbar{M}, \bar{G}, \underbar{G}$ in \ref{assump:bounded-process} and \ref{assump:bounded-measurement}. By Lemma \ref{lem:dobrushin-implies-alpha-mixing}, the particle filter itself is strongly mixing, with $\alpha$-mixing coefficients $a(l) \leq (1-\epsilon)^{\floor{l/(c\log(J))}}$. Therefore, functions of particles are strongly mixing as well, with $\alpha$-mixing coefficients bounded by the original (to see this, observe that the $\sigma$-algebra of the functionals is contained within the original $\sigma$-algebra).

We now derive the $\alpha$-mixing coefficients of $(\nabla_\theta \hat s_{n,k}^\alpha(\theta))_{n=1}^N$. Observe that $\nabla_\theta \hat s_{n,k}^\alpha$ is strong mixing at lag $k+1$, as all weights beyond $k$ timesteps are truncated. We therefore have that the mixing time of $\nabla_\theta \hat s_{n,k}^\alpha(\theta)$ is $O(1+k+\log(J))$, and that the $\alpha$-mixing coefficients for $(\nabla_\theta \hat s_{n,k}^\alpha(\theta))_{n=1}^N$ are given by $a(l) \leq (1-\epsilon)^{\floor{l/(c(1+k+\log(J)))}}$.
Therefore, by Davydov's inequality and Lemma 2 of \cite{karjalainen23}, and noting that $\nabla_\theta \hat s_{n,k}^\alpha(\theta) \leq CrG'(\theta)$ by Assumption \ref{assump:bounded-measurement}, by a similar argument to the MOP-0 case, we find that 
\begin{equation}
\Cov\big(\nabla_\theta \hat s_{m,k}^\alpha(\theta), \nabla_\theta \hat s_{n,k}^\alpha(\theta)\big) \lesssim (1-\epsilon)^{\frac{1}{2}\floor{\frac{|n-m|}{c(1+k+\log(J))}}}\frac{r^2p \, G'(\theta)^2}{J}.
\end{equation}
Concretely, noting that $\nabla_\theta\hat s_{n,k}^\alpha(\theta)\leq CrG'(\theta)$ by Assumption \ref{assump:bounded-measurement}, and, without loss of generality, assuming $\E[(\nabla_\theta\hat s_{m,k}^\alpha(\theta))^4]^{1/4}\leq\E[(\nabla_\theta\hat s_{n,k}^\alpha(\theta))^4]^{1/4}$, we apply Davydov's inequality to see that
\begin{align}
    \Cov\big(\nabla_\theta\hat s_{m,k}^\alpha(\theta), \nabla_\theta\hat s_{n,k}^\alpha(\theta)\big) 
    &\leq a(n-m)^{1/2} \, \E\big[(\nabla_\theta\hat s_{m,k}^\alpha(\theta))^4\big]^{1/4} \, \E\big[(\nabla_\theta\hat s_{n,k}^\alpha(\theta))^4\big]^{1/4}\\
    &\leq a(n-m)^{1/2} \, \E\big[(\nabla_\theta\hat s_{n,k}^\alpha(\theta))^4\big]^{1/2}.
\end{align}
To bound this, we use the fact that 
\begin{equation}
\E\big[(\nabla_\theta\hat s_{n,k}^\alpha(\theta))^4\big] = \E\Big[\big(\nabla_\theta\hat s_{n,k}^\alpha(\theta)-\E\big[\nabla_\theta\hat s_{n,k}^\alpha(\theta)]\big)^4\Big]+\E\big[(\nabla_\theta\hat s_{n,k}^\alpha(\theta))^2\big]^2
\end{equation}
alongside Lemma 2 of \cite{karjalainen23}, which alongside the fact that
\begin{eqnarray}
\E\Big[\big(\nabla_\theta\hat s_{n,k}^\alpha(\theta)-\E[\nabla_\theta\hat s_{n,k}^\alpha(\theta)]\big)^4\Big] &\lesssim& \frac{r^4p^2G'(\theta)^4}{J^2}, 
\\
\E\Big[\big(\nabla_\theta\hat s_{n,k}^\alpha(\theta)-\E[\nabla_\theta\hat s_{n,k}^\alpha(\theta)]\big)^2\Big] &\lesssim& \frac{r^2p \, G'(\theta)^2}{J},
\end{eqnarray}
allows us to find that 
\begin{equation}
\E\Big[\big(\nabla_\theta\hat s_{n,k}^\alpha(\theta)\big)^4\Big]^{1/2} \lesssim  \sqrt{\frac{r^4p^2G'(\theta)^4}{J^2}+\left(\frac{r^2p \, G'(\theta)^2}{J}\right)^2} = \frac{r^2p \, G'(\theta)^2}{J},
\end{equation}
and conclude that 
\begin{equation}\Cov\big(\nabla_\theta\hat s_{m,k}^\alpha(\theta), \nabla_\theta\hat s_{n,k}^\alpha(\theta)\big) \leq (1-\epsilon)^{\frac{1}{2}\floor{\frac{|n-m|}{c(1+k+\log(J))}}}\frac{r^2p \, G'(\theta)^2}{J}.\end{equation}
Putting it all together, we see that
\begin{align}
    \Var\big(\nabla_\theta \hat s_{k}(\theta)\big) &= \sum_{n=1}^N \Var\big(\nabla_\theta\hat s_{n,k}^\alpha(\theta)\big) + 2\sum_{m<n} \Cov\big(\nabla_\theta\hat s_{m,k}^\alpha(\theta),\nabla_\theta\hat s_{n,k}^\alpha(\theta)\big)
    \\
    &\leq \frac{Nr^2pk^2G'(\theta)^2}{J} + 2\sum_{n=1}^{N-1} (N-n)(1-\epsilon)^{\frac{1}{2}\floor{\frac{n}{c(1+k+\log(J))}}}\frac{r^2p \, G'(\theta)^2}{J} 
    \\
    &\leq \frac{r^2pk^2G'(\theta)^2}{J} \left(N + 2\sum_{n=1}^{N-1} (N-n) (1-\epsilon)^{\frac{1}{2}\floor{\frac{n}{c(1+k+\log(J))}}}\right) 
    \\
    &\lesssim \frac{k^2r^2p \, G'(\theta)^2N}{J}.
\end{align}

We will now use this result to bound $\Var\big(\nabla_\theta \hat s_k^\alpha(\theta)\big)$. 
For random variables $X,Y$ where $|X-Y|$ is bounded almost surely by some $M$, we have that 
\begin{equation}
X-\E X - 2M \leq Y - \E Y \leq X - \E X + 2M\end{equation}
so,
\begin{equation}
(Y-\E Y)^2 \leq (X-\E X - 2M)^2 + (X-\E X +2M)^2.\end{equation}
Taking expectations, we get
\begin{equation}
\Var(Y) \leq \Var(X) + 8M^2.
\end{equation}
It follows from this result and the result we proved earlier, namely that
\begin{equation}
\big\|\nabla_\theta\hat\ell^\alpha(\theta) - \nabla_\theta \hat s_k^\alpha(\theta)\big\|_2^2 \leq \frac{2\alpha^k}{1-\alpha}Np \, G'(\theta)^2,
\end{equation}
that the variance of MOP-$\alpha$ proper is bounded by, for any $k \leq N$,
that the variance of MOP-$\alpha$ proper is bounded by
\begin{equation}
\Var\big(\nabla_\theta \hat\ell^\alpha(\theta)\big) \lesssim \frac{k^2r^2G'(\theta)^2Np}{J} + \frac{\alpha^{k}}{1-\alpha}Np \, G'(\theta)^2.
\end{equation}
As the above holds for any $k \leq N$,
\begin{equation}
\Var\big(\nabla_\theta \hat\ell^\alpha(\theta)\big) \lesssim \min_{k\leq N} \left(\frac{k^2p \, G'(\theta)^2Np}{(1-\alpha)^2J} + \frac{\alpha^{k}}{1-\alpha}Np \, G'(\theta)^2 \right).\end{equation}
\end{proof}

\subsection{MOP-$\alpha$ MSE Bound}

\begin{thm}[MSE of MOP-$\alpha$]
    When $\alpha \in (0,1)$, the MSE of MOP-$\alpha$ is given by
    \begin{equation}
        \E\big\|\nabla_\theta\ell(\theta) - \nabla_\theta \hat\ell^\alpha(\theta)\big\|_2^2 \; \lesssim \; \min_{k \leq N} Np \, G'(\theta)^2\left(\frac{k^2}{J}+(1-\epsilon)^{\floor{k/(c\log(J))}}+k+\frac{\alpha^k  + \alpha^{k+1} - \alpha}{1-\alpha}\right).
    \end{equation}
\end{thm}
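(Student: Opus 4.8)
The plan is to bound the mean-squared error by inserting the purely theoretical intermediate estimator MOP-$(1,k)$, i.e. the $\alpha=1$ estimator whose accumulated weights are truncated at lag $k$, with gradient $\nabla_\theta \hat s_k^1(\theta)$ as defined in the MOP-$\alpha$ variance analysis. Writing $a=\nabla_\theta\ell(\theta)$, $b=\nabla_\theta\hat s_k^1(\theta)$, $c=\nabla_\theta\hat\ell^\alpha(\theta)$ and applying $\|a-c\|_2^2\le 2\|a-b\|_2^2+2\|b-c\|_2^2$, I would reduce the problem to two contributions: the error between the true score and MOP-$(1,k)$, and the error between MOP-$(1,k)$ and MOP-$\alpha$. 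Taking $\E$ and then $\min_{k\le N}$ at the end yields the stated bound, with the four terms splitting as (particle error, mixing error) from the first contribution and ($k$, $\psi(\alpha)$) from the second.

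For the second contribution, $\|\nabla_\theta\hat s_k^1-\nabla_\theta\hat\ell^\alpha\|_2^2$, I would route through MOP-$(\alpha,k)$, writing the difference as $(\nabla_\theta\hat s_k^1-\nabla_\theta\hat s_k^\alpha)+(\nabla_\theta\hat s_k^\alpha-\nabla_\theta\hat\ell^\alpha)$. The second summand is exactly the truncation error already controlled in the MOP-$\alpha$ variance proof, namely $\lesssim \frac{\alpha^k}{1-\alpha}Np\,G'(\theta)^2$. For the first summand, the per-timestep log-weight gradient difference is $\sum_{i=n-k}^n(1-\alpha^{n-i})\nabla_\theta\log g_{i,j}^{A,F,\theta}$; bounding each $\|\nabla_\theta\log g\|$ by $G'(\theta)$ via Assumption \ref{assump:bounded-measurement} and summing the coefficients $\sum_{m=0}^k(1-\alpha^m)=(k+1)-\frac{1-\alpha^{k+1}}{1-\alpha}$ over the $N$ timesteps, these two contributions collect (after absorbing $O(1)$ constants) into $\lesssim Np\,G'(\theta)^2\big(k+\psi(\alpha)\big)$ with $\psi(\alpha)=(\alpha^k+\alpha^{k+1}-\alpha)/(1-\alpha)$. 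This piece is deterministic given $\omega$, so it adds directly.

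For the first contribution I would pivot on the almost-sure limit $\bar s_k^1:=\lim_{J\to\infty}\nabla_\theta\hat s_k^1(\theta)$, the $k$-lag truncation of the score, and split $\E\|\nabla_\theta\ell-\nabla_\theta\hat s_k^1\|_2^2\le 2\,\E\|\nabla_\theta\hat s_k^1-\bar s_k^1\|_2^2+2\|\bar s_k^1-\nabla_\theta\ell\|_2^2$. The first (particle-error) term is handled as in the MOP-$\alpha$ variance proof specialized to $\alpha=1$: each windowed functional $\nabla_\theta\hat s_{n,k}^1$ is a sum of $O(k)$ terms bounded via Assumption \ref{assump:bounded-measurement}, so Lemma 2 of \cite{karjalainen23} gives an $L^2$ particle error of order $kG'(\theta)\sqrt p/\sqrt J$ and hence per-timestep variance $O(k^2 p\,G'(\theta)^2/J)$; the cross-timestep covariances are summable because the truncated functionals are strongly mixing (Lemma \ref{lem:dobrushin-implies-alpha-mixing} with the forgetting bound of \cite{karjalainen23} and Davydov's inequality), yielding the $Npk^2G'(\theta)^2/J$ term. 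The squared-bias term is where filter forgetting enters: $\bar s_k^1$ is a $k$-lag truncation of the score (identified with the true score as $k,J\to\infty$ via Theorem \ref{thm:mop-grad-consistency}), and the contribution discarded by the truncation is controlled by the Dobrushin contraction $\beta_{\mathrm{TV}}(\mathbf M_{n,n+k})\le(1-\epsilon)^{\floor{k/(c\log J)}}$, giving the mixing-error term $Np\,G'(\theta)^2(1-\epsilon)^{\floor{k/(c\log J)}}$ after summing over $N$.

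The main obstacle will be the squared-bias/forgetting term: the truncation acts on the accumulated weights along ancestral trajectories, so one must argue that the expected influence of observations more than $k$ steps in the past decays geometrically, translating the state-space forgetting coefficient of \cite{karjalainen23} into a bound on the gradient of the log-likelihood estimate. A secondary nuisance is that both the $L^p$ error bounds and the mixing coefficients carry the $O(\log J)$ mixing-time inflation, so the truncation lag $k$ must be compared against $c\log J$ throughout; keeping these consistent while finally optimizing $\min_{k\le N}$ requires care but is routine once the two contributions are in hand.
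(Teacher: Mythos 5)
Your proposal is correct and follows essentially the same route as the paper: the same three-way decomposition through the theoretical intermediaries MOP-$(1,k)$ and MOP-$(\alpha,k)$, the same log-derivative computations under Assumption \ref{assump:bounded-measurement} yielding the $k$ and $\psi(\alpha)=(\alpha^k+\alpha^{k+1}-\alpha)/(1-\alpha)$ terms, Lemma 2 of \cite{karjalainen23} for the $k^2/J$ particle error, and the Dobrushin contraction bound $\beta_{\mathrm{TV}}(\mathbf{M}_{n,n+k})\le(1-\epsilon)^{\floor{k/(c\log J)}}$ for the mixing term, with $\min_{k\le N}$ taken at the end. The only cosmetic difference is that you pivot the bias/mixing split on the $J\to\infty$ limit $\bar s_k^1$, whereas the paper pivots on the truncated estimator conditioned on the correct filtering distribution at time $n-k$ (i.e.\ $\hat s^1_{n,k|\hat\pi_{n-k}=\pi_{n-k}}$); both arrangements attribute the $(1-\epsilon)^{\floor{k/(c\log J)}}$ term to forgetting of the filter initialization over the $k$-step window, and you correctly flag this translation as the delicate step.
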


\begin{proof}
    
The broad idea is to decompose the MSE into three terms as in Equation \ref{eq:mse-decomp}. The first term can be controlled by a mixing argument, the second term garners $O(Nk)$ error, and the third term is $O\left(\frac{N}{1-\alpha}\right)$. It is unclear whether the bias can be reduced further -- the gradient is a sum of $N$ terms, and is therefore $O(N)$ itself. 

The first term is the problem term. We will first bound the ground truth of conditional scores from particle approximation conditional on correct filtering distribution at time $n-k$, and then bound that from the particle approximation resulting from an arbitrary filtering distribution at time $n-k$. That is, where $\hat\pi_n$ is the particle approximation at time $n$ of the posterior $\pi_n$ and we write $\nabla_\theta\hat s_{n,k|\hat\pi_{n-k}=\pi_{n-k}}^1(\theta)$ for the truncated MOP-$1$ conditional score estimate given the correct filtering distribution at time $n-k$, we can first observe that
\begin{equation}
    \big\|\nabla_\theta\ell_\theta - \nabla_\theta \hat\ell^\alpha(\theta)\big\|_2^2 =  \left\lVert\sum_{n=1}^N \nabla_\theta \hat\ell_n^\alpha(\theta) - \sum_{n=1}^N \nabla_\theta \ell_n(\theta)\right\rVert_2^2 \leq \sum_{n=1}^N\left\lVert \nabla_\theta \hat\ell_n^\alpha(\theta) -  \nabla_\theta \ell_n(\theta)\right\rVert_2^2,
\end{equation}
and decompose
\begin{align} \nonumber
    & \sum_{n=1}^N\left\lVert \nabla_\theta \hat\ell_n^\alpha(\theta) -  \nabla_\theta \ell_n(\theta)\right\rVert_2^2 \; \leq  \; \sum_{n=1}^N\|\nabla_\theta\ell_n(\theta) - \nabla_\theta\hat s_{n,k}^1(\theta)\|_2^2  \; +
     \\ & \hspace{20mm}
     \sum_{n=1}^N\|\nabla_\theta\hat s_{n,k}^1(\theta) - \nabla_\theta\hat s_{n,k}^\alpha(\theta)\|_2^2 + \sum_{n=1}^N\|\nabla_\theta\hat s_{n,k}^\alpha(\theta) -  \nabla_\theta\hat\ell_n^\alpha(\theta)\|_2^2. \label{eq:mse-decomp}
\end{align}
We bound the first term of Equation \ref{eq:mse-decomp} , $\sum_{n=1}^N\|\nabla_\theta\ell_n(\theta) - \nabla_\theta\hat s_{n,k}^1(\theta)\|_2^2$, by decomposing it into two terms, 
\begin{eqnarray}
\nonumber
\hspace{-10mm} \sum_{n=1}^N\|\nabla_\theta\ell_n(\theta) - \nabla_\theta\hat s_{n,k}^1(\theta)\|_2^2 &\leq&
\\
&& \hspace{-40mm}
\sum_{n=1}^N \|\nabla_\theta\ell_n(\theta) - \nabla_\theta\hat s_{n,k|\hat\pi_{n-k}=\pi_{n-k}}^1(\theta)\|_2^2 + \sum_{n=1}^N \|\nabla_\theta\hat s_{n,k|\hat\pi_{n-k}=\pi_{n-k}}^1(\theta) - \nabla_\theta\hat s_{n,k}^1(\theta)\|_2^2.
\end{eqnarray}
The first term of Equation \ref{eq:mse-decomp} is a particle approximation dependent on $k$ timesteps, so by Lemma 2 of \cite{karjalainen23}, this is bounded by
\begin{equation}\E\|\nabla_\theta\ell_n(\theta) - \nabla_\theta\hat s_{n,k|\hat\pi_{n-k}=\pi_{n-k}}^1(\theta)\|_2^2 \leq \frac{Cp \, G'(\theta)^2k^2}{J}.\end{equation}

Bounding the second term of Equation \ref{eq:mse-decomp} amounts to bounding the difference between functionals of two different particle measures that mix under the same Markov kernel. Here, we use Assumptions \ref{assump:bounded-process} and \ref{assump:bounded-measurement} that ensure strong mixing. We know from Theorem 3 of \cite{karjalainen23} that when $\textbf{M}_{n,n+k}$ is the $k$-step Markov operator from timestep $n$ and $\beta_{\text{TV}}(M) = \sup _{x, y \in E}\|M(x, \cdot)-M(y, \cdot)\|_{\mathrm{TV}}=\sup _{\mu, \nu \in \mathcal{P}, \mu \neq \nu} \frac{\|\mu M-\nu M\|_{\mathrm{TV}}}{\|\mu-\nu\|_{\mathrm{TV}}}$ is the Dobrushin contraction coefficient of a Markov operator, 
\begin{equation}\beta_{\text{TV}}(\textbf{M}_{n,n+k}) \leq (1-\epsilon)^{\floor{k/(c\log(J))}},\end{equation}
i.e. the mixing time of the particle filter is $O(\log(J))$, where $\epsilon$ and $c$ depend on $\bar{M}, \underbar{M}, \bar{G}, \underbar{G}$ in \ref{assump:bounded-process} and \ref{assump:bounded-measurement}. 

Then, we can bound 
$\E\|\nabla_\theta\hat s_{n,k|\hat\pi_{n-k}=\pi_{n-k}}^1(\theta) - \nabla_\theta\hat s_{n,k}^1(\theta)\|_2^2$ by
\begin{equation}\sup _{\mu, \nu \in \mathcal{P}, \mu \neq \nu} \frac{\|\mu \textbf{M}_{n,n+k}-\nu \textbf{M}_{n,n+k}\|_{\mathrm{TV}}}{\|\mu-\nu\|_{\mathrm{TV}}} = \beta_{TV}(\textbf{M}_{n,n+k}) \leq (1-\epsilon)^{\floor{k/(c\log(J))}},\end{equation}
implying that
\begin{eqnarray}
 && \hspace{-30mm}
    \E\big\|\nabla_\theta\hat s_{n,k|\hat\pi_{n-k}=\pi_{n-k}}^1(\theta) - \nabla_\theta\hat s_{n,k}^1(\theta)\big\|_2^2 
\\
    &\lesssim& \sup_{\mu, \nu} \sup_{\|\psi\|_\infty \leq 1/2} p \, G'(\theta)^2\big|(\mu \textbf{M}_{n,n+k})(\psi)-(\nu \textbf{M}_{n,n+k})(\psi)\big| 
\\
    &\leq& \sup_{\mu, \nu} p \, G'(\theta)^2\big\|\mu \textbf{M}_{n,n+k}-\nu \textbf{M}_{n,n+k}\big\|_{\mathrm{TV}} 
\\
    &\leq& p \, G'(\theta)^2(1-\epsilon)^{\floor{k/(c\log(J))}} \|\hat\pi_{n-k} - \pi_{n-k}\|_{\text{TV}} 
\\
    &\leq& p \, G'(\theta)^2(1-\epsilon)^{\floor{k/(c\log(J))}} \sup_{\|\psi\|_{\infty} \leq 1/2} \big|\hat\pi_{n-k}(\psi) - \pi_{n-k}(\psi)\big| 
\\
    &\lesssim& p \, G'(\theta)^2(1-\epsilon)^{\floor{k/(c\log(J))}}.
\end{eqnarray}
Therefore, we have that
\begin{align}
    &\E\big\|\nabla_\theta\ell(\theta) - \nabla_\theta\hat s_k^1(\theta)\big\|_2^2 \\
    &\leq \sum_{n=1}^N \E\big\|\nabla_\theta\ell_n(\theta) - \nabla_\theta\hat s_{n,k|\hat\pi_{n-k}=\pi_{n-k}}^1(\theta)\big\|_2^2 + \sum_{n=1}^N \E\big\|\nabla_\theta\hat s_{n,k|\hat\pi_{n-k}=\pi_{n-k}}^1(\theta) - \nabla_\theta\hat s_{n,k}^1(\theta)\big\|_2^2\\
    &\lesssim N\frac{Cp \, G'(\theta)^2k^2}{J} + Np \, G'(\theta)^2(1-\epsilon)^{\floor{k/(c\log(J))}}.
\end{align}
Now that the first term, $\E\|\nabla_\theta\ell(\theta)\|_2^2$, is taken care of, it remains to bound 
$\E\|\nabla_\theta\hat s_k^1(\theta) - \nabla_\theta\hat s_k^\alpha(\theta)\|_2^2$ 
and
$\E\|\nabla_\theta\hat s_k^\alpha(\theta) -  \nabla_\theta\hat\ell^\alpha(\theta)\|_2^2$.
We see, by a similar argument to the variance bound, that
\begin{align}
    & \hspace{-20mm} \nonumber
    \E\big\|\nabla_\theta\hat s_k^1(\theta) - \nabla_\theta\hat s_k^\alpha(\theta)\big\|_2^2
    \\ \nonumber
    &\leq \; \sum_{n=1}^N \E\left\lVert\nabla_\theta \Bigg(\left|\log\left(\sum_{j=1}^Jw_{n,j}^{F,\theta,1,k}\right)- \log\left(\sum_{j=1}^Jw_{n,j}^{F,\theta,\alpha,k}\right)\right|
    + \right.
    \\ & \hspace{30mm} \left.
    \left|\log\left(\sum_{j=1}^Jw_{n-1,j}^{A, F,\theta,1,k}\right) + \log\left(\sum_{j=1}^Jw_{n-1,j}^{A, F,\theta,\alpha,k}\right)\right|\Bigg)\right\rVert_2^2.
\end{align}
Each of these terms can be bounded using the derivative of the logarithm and that $w_{n,j}^{F,\theta,1,k} = w_{n,j}^{F,\theta,\alpha,k} = 1$ when $\theta=\phi$,
\begin{align}
    & \hspace{-30mm}
    \left\lVert\nabla_\theta\log\left(\sum_{j=1}^J w_{n,j}^{F,\theta,1,k}\right)-\nabla_\theta\log\left(\sum_{j=1}^J w_{n,j}^{F,\theta,\alpha,k}\right)\right\rVert_{2}^2
    \\
    &\leq \frac{1}{J}\sum_{j=1}^J \sum_{i=n-k}^{n}(1-\alpha^{(n-i)})\left\lVert\nabla_\theta\log\left(g_{i,j}^{A,F,\theta} \right)\right\rVert_2^2
    \\
    &\leq \frac{1}{J}\sum_{j=1}^J \sum_{i=n-k}^{n}(1-\alpha^{(n-i)})p \, G'(\theta)^2
    \\
    &\leq p \, G'(\theta)^2\left(k-\frac{\alpha(1-\alpha^k)}{1-\alpha}\right),
\end{align}
where the second-last line follows from Assumption \ref{assump:bounded-measurement}.
So, 
\begin{align}
    \E\big\|\nabla_\theta\hat s_k^1(\theta) - \nabla_\theta\hat s_k^\alpha(\theta)\big\|_\infty
    \leq 2Np \, G'(\theta)^2\left(k-\frac{\alpha(1-\alpha^k)}{1-\alpha}\right) 
    \leq 2Np \, G'(\theta)^2k.
\end{align}
Now, we address the third term of Equation \ref{eq:mse-decomp}, $\E\|\nabla_\theta\hat s_k^\alpha(\theta) -  \nabla_\theta\hat\ell^\alpha(\theta)\|_\infty$. From the variance argument, we know that we can bound $\|\nabla_\theta\hat\ell^\alpha(\theta) - \nabla_\theta \hat s_k^\alpha(\theta)\|$:
\begin{align}
    &\left\lVert \nabla_\theta\hat\ell^\alpha(\theta) - \nabla_\theta \hat s_k^\alpha(\theta) \right\rVert_2^2
    \\
    &= \left\lVert\sum_{n=1}^N \! \nabla_\theta \log \! \left( \! \frac{\sum_{j=1}^J\prod_{i=1}^n\left(\frac{g_{i,j}^{A,F,\theta}}{g_{i,j}^{A,F,\phi}} \right)^{\alpha^{(n-i)}}}{\sum_{j=1}^J\prod_{i=1}^{n-1}\left(\frac{g_{i,j}^{A,F,\theta}}{g_{i,j}^{A,F,\phi}} \right)^{\alpha^{(n-i)}}}\right) 
    \!-\! 
\sum_{n=1}^N\!\nabla_\theta\log\!\left(\!\frac{\sum_{j=1}^J\prod_{i=n-k}^n\left(\frac{g_{i,j}^{A,F,\theta}}{g_{i,j}^{A,F,\phi}} \right)^{\alpha^{(n-i)}}}{\sum_{j=1}^J\prod_{i=n-k}^{n-1}\left(\frac{g_{i,j}^{A,F,\theta}}{g_{i,j}^{A,F,\phi}} \right)^{\alpha^{(n-i)}}}\right) \! \right\rVert_2^2
    \\ \nonumber
    &\leq \sum_{n=1}^N \left\lVert\nabla_\theta \left(\log\left(\sum_{j=1}^Jw_{n,j}^{F,\theta,\alpha}\right)- \log\!\left(\sum_{j=1}^Jw_{n,j}^{F,\theta,\alpha,k}\right)\right)\right\lVert_2^2
    \\ & \hspace{40mm}
    +\sum_{n=1}^N \left\lVert\nabla_\theta \left(\log\left(\sum_{j=1}^Jw_{n-1,j}^{A, F,\theta, \alpha}\right) + \log\left(\sum_{j=1}^Jw_{n-1,j}^{A, F,\theta,\alpha, k}\right)\right)\right\rVert_2^2.
\end{align}
Using the derivative of the logarithm and that $w_{n,j}^{F,\theta,\alpha} = w_{n,j}^{F,\theta,\alpha,k} = 1$ when $\theta=\phi$,
\begin{align}
    &\left\lVert\nabla_\theta\log\left(\sum_{j=1}^J w_{n,j}^{F,\theta,\alpha}\right)-\nabla_\theta\log\left(\sum_{j=1}^J w_{n,j}^{F,\theta,\alpha,k}\right)\right\rVert_2^2\\
    &\leq \frac{1}{J}\sum_{j=1}^J \sum_{i=1}^{n-k}\alpha^{(n-i)}\left\lVert\nabla_\theta\log\left(g_{i,j}^{A,F,\theta} \right)\right\rVert_2^2\\
    &\leq \frac{1}{J}\sum_{j=1}^J \sum_{i=1}^{n-k}\alpha^{(n-i)}p \, G'(\theta)^2\\
    &\leq p \, G'(\theta)^2\frac{\alpha^k-\alpha^n}{1-\alpha},
\end{align}
where the second-last line follows from Assumption \ref{assump:bounded-measurement}. Putting it together and taking expectations on both sides, we obtain
\begin{equation}
\E\left\lVert\nabla_\theta\hat\ell^\alpha(\theta) - \nabla_\theta \hat s_k^\alpha(\theta) \right\rVert_2^2 \leq  \frac{2\alpha^k}{1-\alpha}Np \, G'(\theta)^2,
\end{equation}
which is our desired error bound. 
Therefore, our decomposition yields the MSE bound
\begin{align}
    & \hspace{-5mm}
    \E\big\|\nabla_\theta\ell(\theta) - \nabla_\theta \hat\ell^\alpha(\theta)\big\|_2^2 
    \\
    &\leq \E\big\|\nabla_\theta\ell(\theta) - \nabla_\theta\hat s_k^1(\theta)\big\|_2^2 + \E\big\|\nabla_\theta\hat s_k^1(\theta) - \nabla_\theta\hat s_k^\alpha(\theta)\big\|_2^2 + \E\big\|\nabla_\theta\hat s_k^\alpha(\theta) -  \nabla_\theta\hat\ell^\alpha(\theta)\big\|_2^2 
    \\ \nonumber
    &\lesssim \min_{k \leq N} \left(N\frac{Cp \, G'(\theta)^2k^2}{J} + Np \, G'(\theta)^2(1-\epsilon)^{\floor{k/(c\log(J))}} \right.
    \\
    & \hspace{40mm} \left.
    + 2Np \, G'(\theta)^2\left(k-\frac{\alpha(1-\alpha^k)}{1-\alpha}\right) + 2\frac{\alpha^k}{1-\alpha}Np \, G'(\theta)^2\right) \\
    &\lesssim \min_{k \leq N} Np \, G'(\theta)^2\left(\frac{k^2}{J}+(1-\epsilon)^{\floor{k/(c\log(J))}}+k+\frac{\alpha^k  + \alpha^{k+1} - \alpha}{1-\alpha}\right).
\end{align}

\end{proof}

\begin{cor}[MSE of MOP-$0$]
    The MSE of MOP-$0$, i.e. the estimator of \cite{naesseth18}, is
    \begin{equation}
        \E\big\|\nabla_\theta\ell(\theta) - \nabla_\theta \hat\ell^0(\theta)\big\|_2^2 \; \lesssim 
        \; Np \, G'(\theta)^2\left(\frac{1}{J}+(1-\epsilon)^{\floor{1/(c\log(J))}}\right).
    \end{equation}
\end{cor}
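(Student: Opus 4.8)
The plan is to run a simplified version of the MOP-$\alpha$ MSE argument specialized to $\alpha=0$, exploiting the fact that MOP-$0$ is memoryless. By Lemma~\ref{lem:mop-0-formula} the estimate is the single-step sum $\nabla_\theta\hat\ell^0(\theta)=\sum_{n=1}^N\nabla_\theta\hat\ell_n^0(\theta)$, where $\nabla_\theta\hat\ell_n^0(\theta)=\frac1J\sum_{j=1}^J\nabla_\theta\log f_{Y_n|X_n}(y_n^*|x_{n,j}^{F,\theta};\theta)$ depends only on the filtering particles at time $n$. Because the $\alpha=0$ discounting annihilates all weights from earlier steps, the estimator coincides with its own truncation at any lag, so the truncation-error term (the $\alpha^k/(1-\alpha)$ contribution in the general bound) is identically zero; and since we compare MOP-$0$ to the score directly rather than route through the consistent MOP-$1$ estimator, the additive $k$ term from the MOP-$1$-versus-MOP-$\alpha$ comparison never appears. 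This is why only the particle and mixing terms, evaluated at lag $k=1$, survive.

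First I would reduce to a per-timestep bound as in the general proof, $\|\nabla_\theta\ell(\theta)-\nabla_\theta\hat\ell^0(\theta)\|_2^2\le\sum_{n=1}^N\|\nabla_\theta\ell_n(\theta)-\nabla_\theta\hat\ell_n^0(\theta)\|_2^2$, and then split each summand through the limiting filtering expectation $\pi_n(\nabla_\theta\log g_n)$:
$$\big\|\nabla_\theta\ell_n(\theta)-\nabla_\theta\hat\ell_n^0(\theta)\big\|_2^2\lesssim\big\|\nabla_\theta\hat\ell_n^0(\theta)-\pi_n(\nabla_\theta\log g_n)\big\|_2^2+\big\|\pi_n(\nabla_\theta\log g_n)-\nabla_\theta\ell_n(\theta)\big\|_2^2.$$
The first term is the $L^2$ particle-approximation error of the bounded single-step functional $\nabla_\theta\log g_n$ (bounded by $G'(\theta)$ via Assumption~\ref{assump:bounded-measurement}) against the filtering distribution $\pi_n$ that MOP-$0$ targets by Theorem~\ref{thm:mop-targeting}; Lemma~2 of \cite{karjalainen23}, whose uniformity in $n$ rests on the forgetting guaranteed by Assumptions~\ref{assump:bounded-process} and \ref{assump:bounded-measurement}, bounds it by $O(p\,G'(\theta)^2/J)$. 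The second term is the deterministic asymptotic bias of the estimator of \cite{naesseth18}, namely the part of $\nabla_\theta\ell_n$ arising from the $\theta$-dependence of the prediction distribution that MOP-$0$ discards; it is bounded trivially by $O(p\,G'(\theta)^2)$, which in the relevant large-$J$ regime equals $O\big(p\,G'(\theta)^2(1-\epsilon)^{\floor{1/(c\log J)}}\big)$ since the exponent $\floor{1/(c\log J)}$ is then $0$. Summing over the $N$ timesteps and the $p$ coordinates of $\theta$ yields the claimed rate.

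The conceptually important point — and the one the corollary is really recording — is that this second contribution does not decay as $J\to\infty$: once $1/(c\log J)<1$ the floor is $0$ and $(1-\epsilon)^0=1$, leaving an irreducible $O(N\,p\,G'(\theta)^2)$ bias. Unlike the $0<\alpha<1$ case, MOP-$0$ offers no free lag $k$ that can be increased to suppress the mixing error while keeping the particle error $O(k^2/J)$ controlled, so the bias cannot be removed. The genuinely delicate technical step is therefore not the (trivial) bias bound but the claim that the single-step particle-approximation error is uniformly $O(p\,G'(\theta)^2/J)$ across all $n$; this is exactly where the strong-mixing Assumptions~\ref{assump:bounded-process}--\ref{assump:bounded-measurement} and the associated uniform-in-$n$ $L^2$ bound of \cite{karjalainen23} are indispensable.
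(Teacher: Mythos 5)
Your argument is correct in substance and yields the corollary as stated, but it handles the second term by a genuinely different route than the paper. The paper's proof observes that MOP-$0$ coincides with the lag-one truncated estimator MOP-$(1,1)$ and reruns the general MSE argument with $k=1$: per timestep it splits the error at $\nabla_\theta\hat s^1_{n,1|\hat\pi_{n-1}=\pi_{n-1}}(\theta)$ --- the truncated estimator re-initialized from the \emph{exact} filtering distribution one step back --- bounding the first piece by Lemma 2 of \cite{karjalainen23} as $O(p\,G'(\theta)^2/J)$, and the second by applying the Dobrushin contraction of the one-step particle kernel, $\beta_{\mathrm{TV}}(\mathbf{M}_{n,n+1})\leq(1-\epsilon)^{\floor{1/(c\log J)}}$, to $\|\hat\pi_{n-1}-\pi_{n-1}\|_{\mathrm{TV}}\lesssim 1$; the mixing factor in the statement is thus \emph{derived}, not read off. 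You instead split at the limiting filtering expectation $\pi_n(\nabla_\theta\log g_n)$, bound the residual bias term by a constant, and observe that the displayed factor is $(1-\epsilon)^0=1$ once $c\log J>1$. That does establish the stated inequality (and for the bounded range $J\leq e^{1/c}$ the constant bias is absorbed into the $1/J$ term up to model-dependent constants), and your reading of the corollary's content --- an irreducible, non-decaying mixing bias with no free lag $k$ to tune --- matches the paper's discussion exactly. What your route gives up is the small-$J$ regime: when $\floor{1/(c\log J)}\geq 1$ the paper's contraction argument yields a bound genuinely smaller than a constant, which a trivial bias bound cannot recover.

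One step deserves flagging. Your ``trivial'' bound on $\big\|\pi_n(\nabla_\theta\log g_n)-\nabla_\theta\ell_n(\theta)\big\|_2^2$ requires $\|\nabla_\theta\ell_n(\theta)\|_2\lesssim\sqrt{p}\,G'(\theta)$, and this is not Assumption \ref{assump:bounded-measurement}: that assumption bounds $\nabla_\theta\log f_{Y_n|X_n}$, whereas the true conditional score $\nabla_\theta\ell_n$ also differentiates the prediction density $f_{X_n|Y_{1:n-1}}(\,\cdot\,;\theta)$ through $\theta$, which $G'(\theta)$ does not directly control. The needed bound does follow from machinery you already invoke --- for instance, by the paper's own first-term estimate at $k=1$, $\nabla_\theta\ell_n$ is within $O\big(\sqrt{p}\,G'(\theta)/\sqrt{J}\big)$ of a lag-one particle functional bounded coordinatewise by $G'(\theta)$ --- but as written it is a silent assumption rather than a trivial step, and it should be stated.
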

\begin{proof}
    Observe that MOP-$0$ is equivalent to MOP-$(1,1)$. Noting that $k=1$ and that we only need to bound the difference between $\nabla_\theta \ell(\theta)$ and $\nabla_\theta \hat s_1^1(\theta)$, repeating the above arguments almost verbatim allows us to obtain this result.

As before, we bound the first term, $\sum_{n=1}^N\|\nabla_\theta\ell_n(\theta) - \nabla_\theta\hat s_{n,1}^1(\theta)\|_2^2$, by decomposing it into two terms, 
\begin{eqnarray} \nonumber
\hspace{-10mm} \sum_{n=1}^N\big\|\nabla_\theta\ell_n(\theta) - \nabla_\theta\hat s_{n,1}^1(\theta)\big\|_2^2 &\leq 
\\ &&\hspace{-45mm} 
\sum_{n=1}^N \big\|\nabla_\theta\ell_n(\theta) - \nabla_\theta\hat s_{n,1|\hat\pi_{n-1}=\pi_{n-1}}^1(\theta)\big\|_2^2 + \sum_{n=1}^N \big\|\nabla_\theta\hat s_{n,1|\hat\pi_{n-1}=\pi_{n-1}}^1(\theta) - \nabla_\theta\hat s_{n,1}^1(\theta)\big\|_2^2.
\end{eqnarray}
The first term is a particle approximation dependent on $1$ timesteps, so by Lemma 2 of \cite{karjalainen23}, this is bounded by
\begin{equation}\E\big\|\nabla_\theta\ell_n(\theta) - \nabla_\theta\hat s_{n,1|\hat\pi_{n-1}=\pi_{n-1}}^1(\theta)\big\|_2^2 \leq \frac{Cp \, G'(\theta)^2}{J}.\end{equation}

The second term amounts to bounding the difference between functionals of two different particle measures that mix under the same Markov kernel. Here, we use Assumptions \ref{assump:bounded-process} and \ref{assump:bounded-measurement} that ensure strong mixing. We know from Theorem 3 of \cite{karjalainen23} that when $\textbf{M}_{n,n+k}$ is the $k$-step Markov operator from timestep $n$ and $\beta_{\text{TV}}(M) = \sup _{x, y \in E}\|M(x, \cdot)-M(y, \cdot)\|_{\mathrm{TV}}=\sup _{\mu, \nu \in \mathcal{P}, \mu \neq \nu} \frac{\|\mu M-\nu M\|_{\mathrm{TV}}}{\|\mu-\nu\|_{\mathrm{TV}}}$ is the Dobrushin contraction coefficient of a Markov operator, 
\begin{equation}
\beta_{\text{TV}}(\textbf{M}_{n,n+k}) \leq (1-\epsilon)^{\floor{k/(c\log(J))}},
\end{equation}
i.e. the mixing time of the particle filter is $O(\log(J))$, where $\epsilon$ and $c$ depend on $\bar{M}, \underbar{M}, \bar{G}, \underbar{G}$ in \ref{assump:bounded-process} and \ref{assump:bounded-measurement}. 

Then, we can bound 
$\E\|\nabla_\theta\hat s_{n,1|\hat\pi_{n-1}=\pi_{n-1}}^1(\theta) - \nabla_\theta\hat s_{n,1}^1(\theta)\|_2^2$ by
\begin{equation}
\sup _{\mu, \nu \in \mathcal{P}, \mu \neq \nu} \frac{\|\mu \textbf{M}_{n,n+1}-\nu \textbf{M}_{n,n+1}\|_{\mathrm{TV}}}{\|\mu-\nu\|_{\mathrm{TV}}} = \beta_{TV}(\textbf{M}_{n,n+1}) \leq (1-\epsilon)^{\floor{1/(c\log(J))}},
\end{equation}
implying that
\begin{align}
    & \hspace{-10mm}
    \E\big\|\nabla_\theta\hat s_{n,1|\hat\pi_{n-1}=\pi_{n-1}}^1(\theta) - \nabla_\theta\hat s_{n,1}^1(\theta)\big\|_2^2 
    \\ & \hspace{30mm}
    \lesssim \sup_{\mu, \nu} \sup_{\|\psi\|_\infty \leq 1/2} p \, G'(\theta)^2\big|(\mu \textbf{M}_{n,n+1})(\psi)-(\nu \textbf{M}_{n,n+1})(\psi)\big| 
    \\ & \hspace{30mm}
    \leq \sup_{\mu, \nu} p \, G'(\theta)^2\big\|\mu \textbf{M}_{n,n+1}-\nu \textbf{M}_{n,n+1}\big\|_{\mathrm{TV}} 
    \\ & \hspace{30mm}
    \leq p \, G'(\theta)^2(1-\epsilon)^{\floor{1/(c\log(J))}} \|\hat\pi_{n-1} - \pi_{n-1}\|_{\text{TV}} 
    \\ & \hspace{30mm}
    \leq p \, G'(\theta)^2(1-\epsilon)^{\floor{1/(c\log(J))}} \sup_{\|\psi\|_{\infty} \leq 1/2} \big|\hat\pi_{n-1}(\psi) - \pi_{n-1}(\psi)\big| 
    \\ & \hspace{30mm}
    \lesssim p \, G'(\theta)^2(1-\epsilon)^{\floor{1/(c\log(J))}}.
\end{align}

Therefore, we have that
\begin{align}
    &\E\big\|\nabla_\theta\ell(\theta) - \nabla_\theta\hat s_1^1(\theta)\big\|_2^2 \\
    &\leq \sum_{n=1}^N \E\big\|\nabla_\theta\ell_n(\theta) - \nabla_\theta\hat s_{n,1|\hat\pi_{n-1}=\pi_{n-1}}^1(\theta)\big\|_2^2 + \sum_{n=1}^N \E\big\|\nabla_\theta\hat s_{n,1|\hat\pi_{n-1}=\pi_{n-1}}^1(\theta) - \nabla_\theta\hat s_{n,1}^1(\theta)\big\|_2^2\\
    &\lesssim N\frac{Cp \, G'(\theta)^2}{J} + Np \, G'(\theta)^2(1-\epsilon)^{\floor{1/(c\log(J))}}.
\end{align}
\end{proof}

\section{Figures for Bayesian Inference}
\label{appendix:bayes}

\begin{figure}[H]
    \centering
    \includegraphics[width=\textwidth/\real{1.25}]{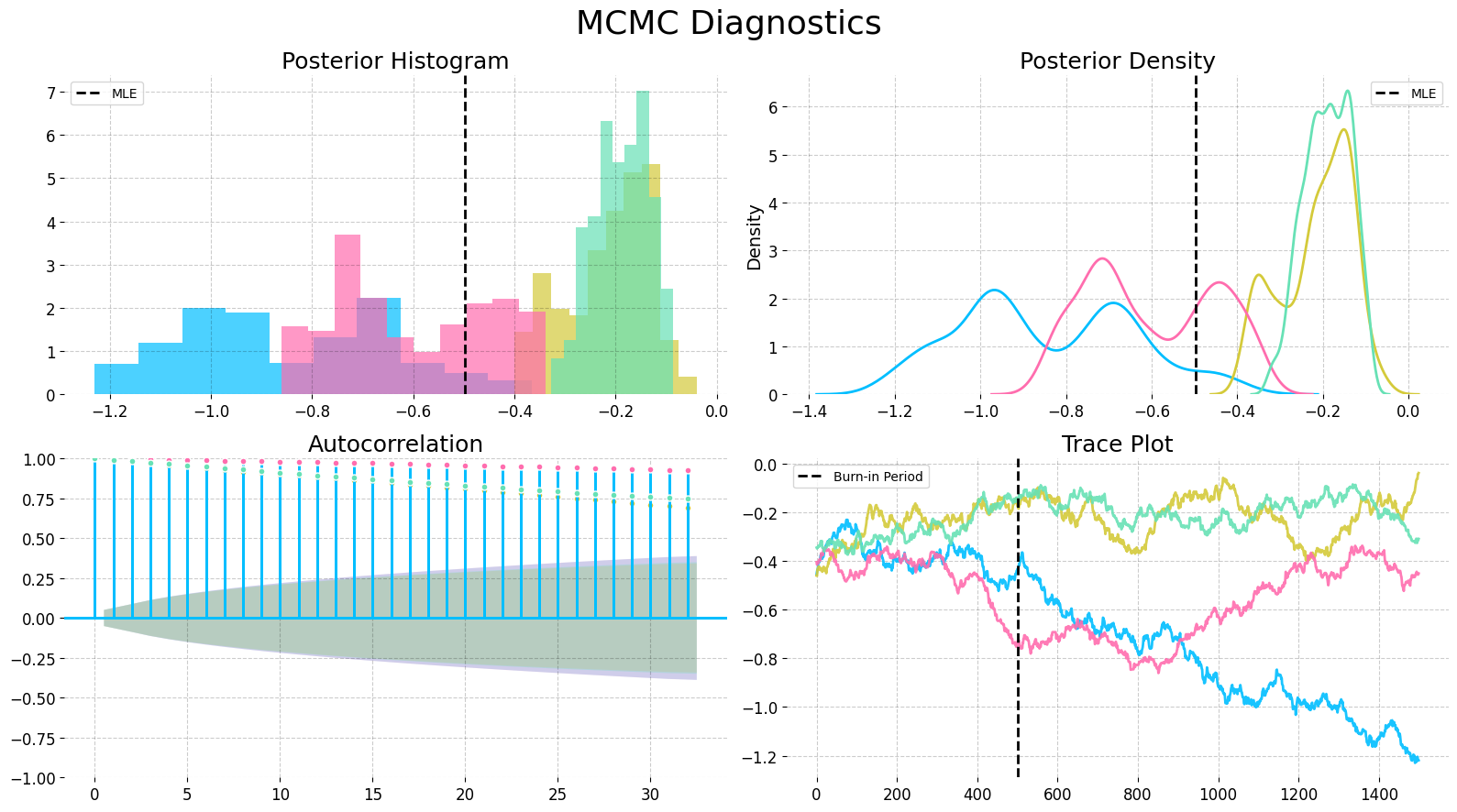}
    \caption{Convergence diagnostics for the Metropolis-Hastings variant particle MCMC with the random walk proposal and an informative empirical prior from IF2. Here, we display the results for the trend parameter in the Dhaka cholera model of \cite{king08}.}
    \label{fig:mh}
\end{figure}

\begin{figure}[H]
    \centering
    \includegraphics[width=\textwidth/\real{1.25}]{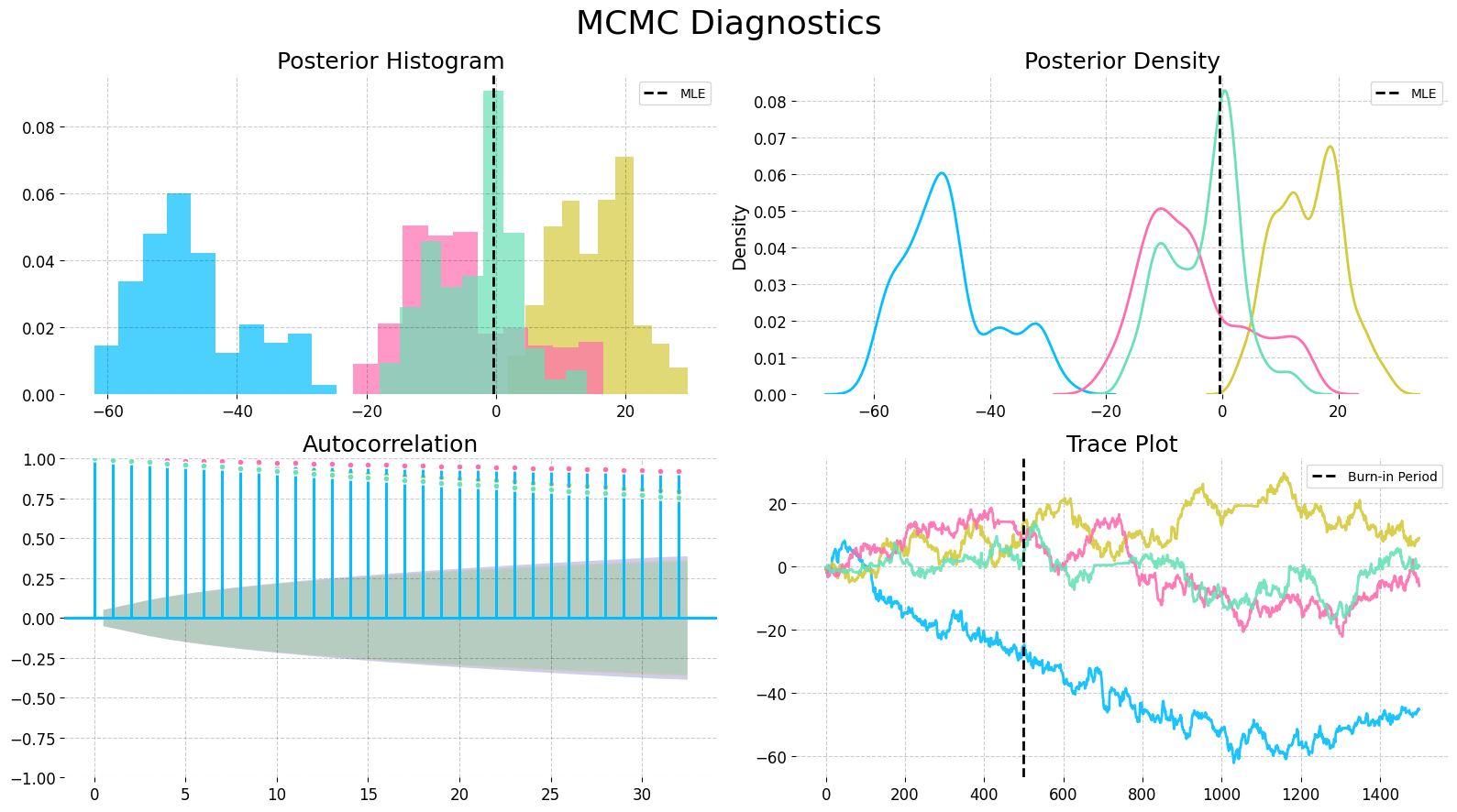}
    \caption{Convergence diagnostics for a No-U-Turn Sampler (NUTS) with uniform priors on a compact set. Again, we display the results for the trend parameter in the Dhaka cholera model of \cite{king08}. The NUTS sampler explores more of the posterior than the Metropolis-Hastings sampler, but fails to converge quickly.}
    \label{fig:nuts}
\end{figure}

\begin{figure}[H]
    \centering
    \includegraphics[scale=0.33]{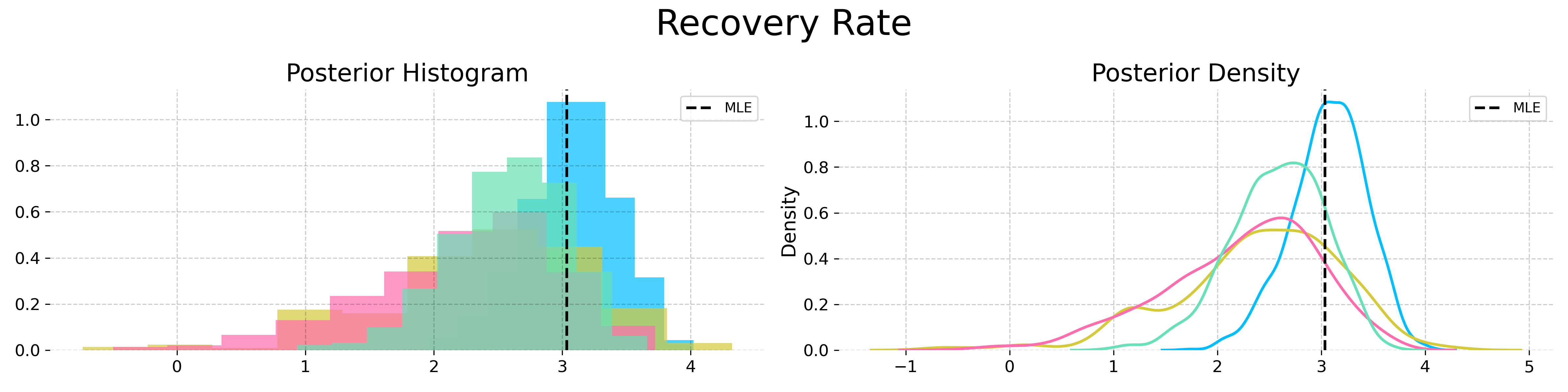}
    \includegraphics[scale=0.33]{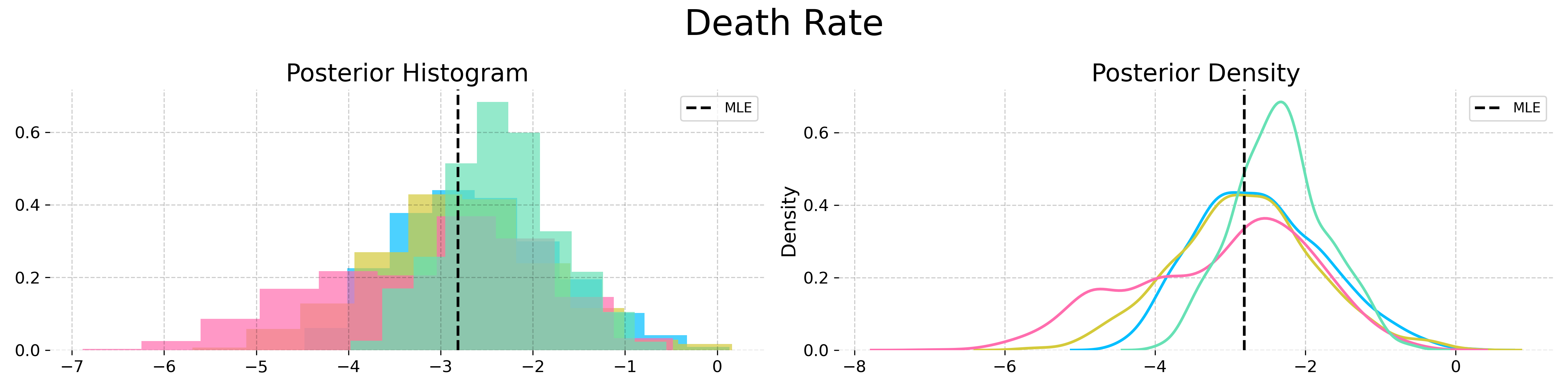}
    \includegraphics[scale=0.33]{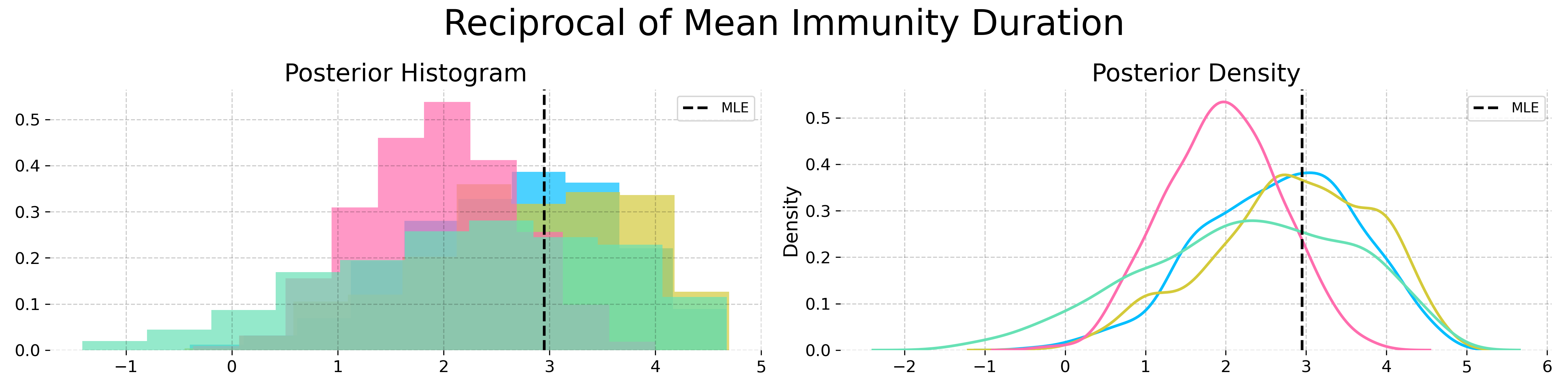}
    \includegraphics[scale=0.33]{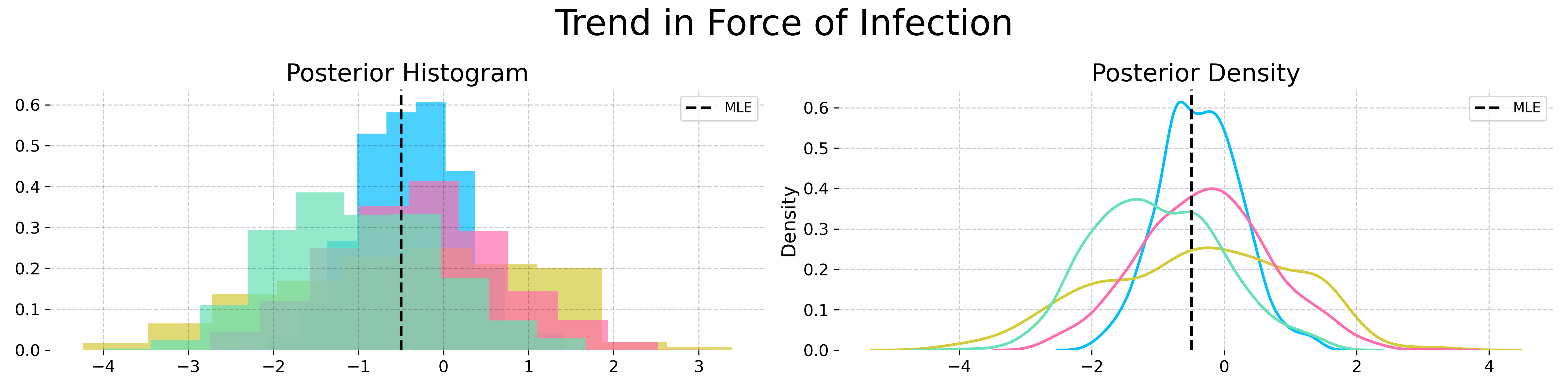}
    \includegraphics[scale=0.33]{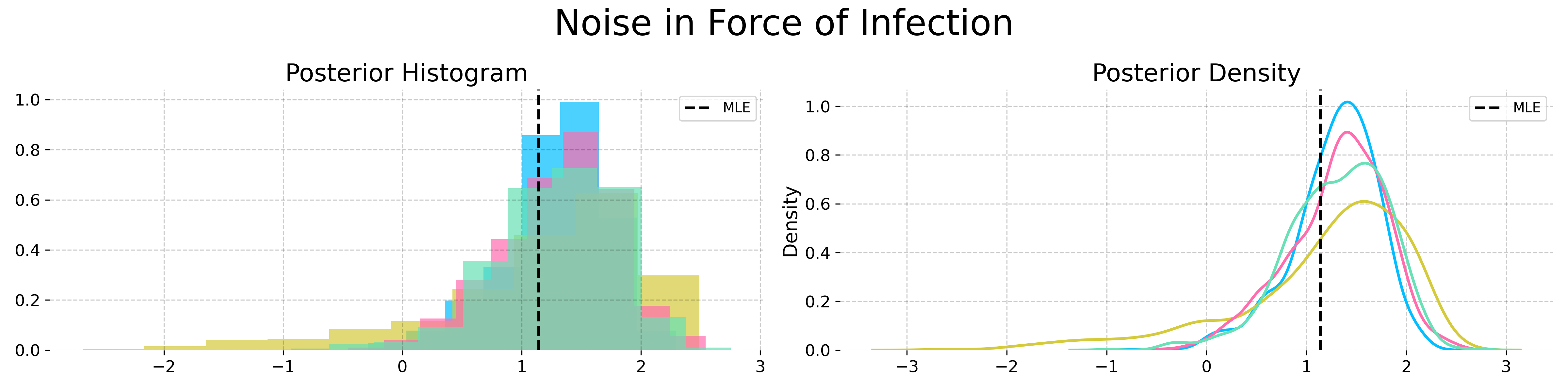}
    \includegraphics[scale=0.33]{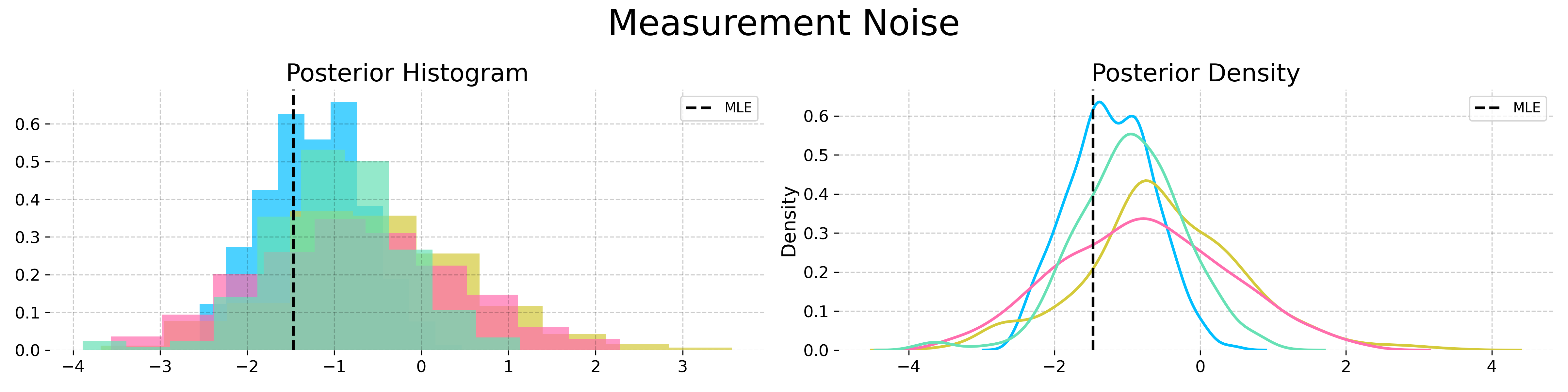}
    \caption{Posterior estimates from NUTS powered by MOP-$\alpha$ with the informative nonparametric empirical prior obtained from a kernel density estimator on the IF2 parameter cloud. The posterior estimates from each chain are largely in agreement.}
    \label{fig:posteriors}
\end{figure}

}{
}

\end{document}